\documentclass[hidelinks,onefignum,onetabnum]{siamart251104}



\usepackage{lipsum}
\usepackage{amsfonts}
\usepackage{graphicx}
\usepackage{epstopdf}
\usepackage{makecell}
\usepackage{subfig}
\usepackage{algorithmic}
\usepackage{placeins}
\usepackage{titlesec}
\usepackage{amssymb}

\ifpdf
  \DeclareGraphicsExtensions{.eps,.pdf,.png,.jpg}
\else
  \DeclareGraphicsExtensions{.eps}
\fi


\newsiamremark{remark}{Remark}
\newsiamremark{hypothesis}{Hypothesis}
\crefname{hypothesis}{Hypothesis}{Hypotheses}
\newsiamthm{claim}{Claim}
\newsiamremark{fact}{Fact}
\crefname{fact}{Fact}{Facts}

\headers{}{C. Zhong, Z. Li, S. Xu, X. Li, L. Zhang, and J. Yuan}

\title{A Globally Convergent Variational Framework for Mode Number Detection via Spectral Cutting Curves\thanks{Chenjie Zhong and Zhipeng Li are co-first authors who contributed equally to this article.
\funding{This study is supported by the National Natural Science Foundation of China under Grant No. 61773290 and the Fundamental Research Funds for the Central Universities (22120230311) and Tongji University Medicine-X Interdisciplinary Research Initiative (Grant No. 2025-0554-YB-11).}}}

\author{
Chenjie Zhong\textsuperscript{\textdagger} 
\and 
Zhipeng Li\textsuperscript{\textdagger,*} 
\and 
Shangzhi Xu\textsuperscript{\textdagger}
\\
Xiaohu Li\textsuperscript{\textdaggerdbl}
\and
Luodan Zhang\textsuperscript{\textdaggerdbl}
\and
Jianjun Yuan\textsuperscript{\textdagger}
}

\footnotetext[2]{State Key Laboratory of High-speed Maglev Transportation Technology, and Department of Information and Communication Engineering, Tongji University, Shanghai 201804, China}
\footnotetext[3]{NingBo Communication Construction Engineering Testing Center Co. Ltd}
\footnotetext[1]{Corresponding author: \email{lizhipeng@tongji.edu.cn}}

\usepackage{amsopn}

\ifpdf
\hypersetup{
  pdftitle={A Globally Convergent Variational Framework for Mode Number Detection via Spectral Cutting Curves},
  pdfauthor={Chenjie Zhong, Zhipeng Li, Shangzhi Xu, Xiaohu Li, Luodan Zhang, and Jianjun Yuan}
}
\fi


\externaldocument[][nocite]{ex_supplement}


\begin{document}

\maketitle

\begin{abstract}
Automatically determining the number of intrinsic mode functions (IMFs) and their center frequencies in Variational Mode Decomposition (VMD) remains an open mathematical challenge.  Existing methods often rely on heuristic prior settings, trial-and-error strategies, or recursive extraction procedures that are computationally inefficient and prone to error accumulation, and generally lack theoretical guarantees on convergence. In this article, we propose a novel variational framework that endogenously determines the number of modes. We observe that any curve lying below the spectral amplitude naturally divides the spectral domain into connected regions where the spectrum rises above the curve, the number of which defines the modal number \(K[g]\)---a topological functional induced by the cutting curve. Since \(K[g]\) is discontinuous and intractable for direct optimization, we instead seek the optimal cutting curve itself as a continuous variational surrogate: the optimal curve best separates distinct spectral peaks into individual connected regions above it, while merging noise-induced fragments below. This surrogate adversarially maximizes the integral of \(g\) (encouraging it to rise and support significant peaks) while penalizing its curvature (suppressing excessive undulations that would fragment the spectrum), transforming the problem into iteratively solving a fourth-order boundary value problem via Lagrangian duality. We establish a rigorous proof of global convergence for the dual ascent algorithm in function space.   Comprehensive numerical experiments on artificial and real-world signals including electrocardiogram (ECG) data show that our method can provide accurate estimates of IMFs and center frequencies, and comparison with methods like Successive VMD also shows our advantages in avoiding redundant modes while ensuring the recovery of necessary components, indicating that we have provided a robust, theoretically grounded initialization routine for VMD.
\end{abstract}

\begin{keywords}
Variational Optimization;Convex Optimization;Intrinsic Mode Function;Cutting Curve;Variational Mode Decomposition;Signal Processing
\end{keywords}

\clearpage

\section{Introduction}

Variational Mode Decomposition (VMD) decomposes original signals into multiple Intrinsic Mode Functions (IMF) with limited bandwidth with the help of variational optimization.\cite{Dragomiretskiy}   The core of VMD is to minimize the sum of the bandwidths with respect to the estimated IMFs as well as to ensure the sum of all the IMFs is equal to the original signal.  In VMD, it is assumed the frequency of each mode is almost compact around a center pulsation and the associated analytic signal is computed by means of Hilbert Transform to get the unilateral frequency spectrum.  Then each mode's frequency spectrum was shifted into baseband by multiplying a frequency modulation factor with respect to the estimated center frequency, transforming each band-pass mode into low pass, and the bandwidth can be estimated by $H^1$ Gaussian smoothness.  Afterwards, they decompose the original signal into each modes by minimizing the sum of the bandwidth of each mode with the constraints that the summation of all modes is the original signal.  In short, their key variation optimization can be listed as following:

\begin{equation}
    \min\limits_{u_k,\omega_k}\left \{ 
        \sum\limits_k \left \Vert 
            \partial_t \left[ \left( \delta(t) + \frac{\text{j}}{\pi t} \right) * u_k(t) \right] e^{-\text{j}\omega_k t} 
        \right \Vert_2^2 
    \right \}
\end{equation}
\begin{equation}
    \text{s.t.}\quad \sum\limits_k u_k = f 
\end{equation}
so that each mode can be altered into an associated analytical signal by using Hilbert Transform and shifted into base-band for the following estimation of bandwidth using $H^1$ Gaussian smoothness with the shifted demodulated components.  To solve the problem, one can construct an augmented Lagrangian Function by introducing both quadratic penalty and Lagrangian Multipliers as follows:
\begin{equation}
\begin{aligned}
\mathcal{L}(u_k, \omega_k, \lambda) &= \alpha \sum_{k} \left\Vert \partial_t \left[ \left( \delta(t) + \frac{j}{\pi t} \right) * u_k(t) \right] e^{-j \omega_k t} \right\Vert_2^2\\
&\quad\quad\quad\quad\quad\quad\quad + \left\Vert f - \sum u_k \right\Vert_2^2 + \langle \lambda, f - \sum u_k \rangle
\end{aligned}
\end{equation}
This problem can be solved by using ADMM(Alternate Direction Method of Multipliers), in which each component can be updated by solving the equivalent minimization problem:
\begin{equation}
u_k^{n+1} = \underset{u_k \in \mathbb{R}}{\arg\min} \left\{ \alpha \left\Vert \partial_t \left[ \left( \delta(t) + \frac{j}{\pi t} \right) * u_k(t) \right] e^{-j\omega_k t} \right\Vert_2^2 + \left\Vert f - \sum u_i + \frac{\lambda}{2} \right\Vert_2^2 \right\}.
\end{equation}
One can rewrite it in frequency domain form, so that convolution becomes multiplication for the convenience of the calculation in subsequent variational operations:
\begin{equation}
\hat{u}_k^{n+1} =\underset{\hat{u}_k,\, \hat{u}_k = \hat{u}_k^T}{\arg\min} \left\{ \alpha \left\Vert j\omega \left[ 1 + \text{sgn}(\omega + \omega_k) \right] \hat{u}_k(\omega + \omega_k) \right\Vert_2^2 + \left\Vert \hat{f} - \sum \hat{u}_i + \frac{\lambda}{2} \right\Vert_2^2 \right\}.
\end{equation}
and by substitution of $\omega \rightarrow \omega + \omega_k $ to perform a translation and considering the Hermitian symmetry of spectrum with respect to the real signals, both terms can be transformed into half-space integrals with non-negative frequencies:
\begin{equation}
\hat{u}_k^{n+1} = \underset{u_k,\, \hat{u}_k = \hat{u}_k^T}{\arg\min} \left\{ \int_0^\infty 4\alpha (\omega - \omega_k)^2 | \hat{u}_k(\omega) |^2 + 2 \left( \hat{f} - \sum \hat{u}_i + \frac{\hat{\lambda}}{2} \right)^2 \text d\omega \right\}
\end{equation}
which can be evaluated by vanishing the first ordered variation of positive frequencies for both signal components and center frequencies
\begin{equation}
\hat{u}_k^{n+1} = \left( \hat{f} - \sum_{i \neq k} \hat{u}_i + \frac{\hat{\lambda}}{2} \right) \frac{1}{1 + 2\alpha (\omega - \omega_k)^2}
\end{equation}
\begin{equation}
\omega_k^{n+1} = \frac{\displaystyle\int_0^\infty \omega \left| \hat{u}_k(\omega) \right|^2 \text d\omega}{\displaystyle\int_0^\infty \left| \hat{u}_k(\omega) \right|^2 \text d\omega}
\end{equation}

Since its establishment, VMD is always playing an important role in modern signal decomposition in many fields\cite{Sharma,ShangZhang,Li,Kumaraguruparan,Wen-Chao,Group,Gianmarco}. Although VMD has provided flexibility in decomposing signals, there are factors that hinder its further applications in that the number of intrinsic mode function must be determined manually as a prior in coordination with penalty factor without a common principle\cite{Yang,Yunqian,Xia,Wu}.  In order to alleviate the problem, the subsequent research either needs the knowledge to the range of the number of modes\cite{Lian}, retrieves the IMFs recursively\cite{Mojtaba}, or highly dependent on a predefined window-size to retrieve the peak and valley of spectra \cite{Feng}.  All these methods either require prior or rely on schemes devoid of convergence guarantees, making the results contains spurious or missed modes.  The absence of a well-posed, convergent paradigm for automatically finding the number of IMFs is a crucial factor that hinders the development of VMD.

To close this gap, we introduce a fundamentally different approach for finding the IMFs.  Instead of retrieving modes in an trial-and-error way or recursively extracting modes, we focus on the spectrum amplitude and reformulate the problem by introducing Cutting Curve, a lower-enveloping curve of a signal's spectrum which can support the spectrum as tightly as possible while keeping itself as smooth as possible, and transform the problem for finding IMFs into first evaluating the Cutting Curve by adversarially maximizing its integral while penalizing its curvature, then retrieving the separated peaks of the original spectrum above the Cutting Curve.

Our main contribution can be summarized as 4 folds.  First, we give a new perspective for determining the number of IMFs by introducing Cutting Curve of spectrum and solve the problem in real field.  Second, we derive its optimality conditions, establishing a rigorous equivalence between this variational problem and a fourth-order boundary value problem, and establish rigorous mathematical proof for the global convergence to the course of dual ascent iteration in function space.  Thirdly, we develop an efficient iteration scheme for the algorithm by converting the corresponding  difference equation and boundary conditions into matrix format and expanding compatible rule to the component-wise multiplication between matrices and vectors for Hadamard product.  Finally, we give experiments on both artificial and real signals, and comparison with SVMD to show the robustness of our algorithm and experiments show that our method can quickly get accurate estimation about the number of IMFs and the corresponding initial center frequencies as prior to the following VMD procedure.

This article will be organized as follows:  In the following section we propose the main framework of our algorithm and give the proof of the convergence.  In section 3 we introduce the implementation detail about the whole workflow.  In Section 4 we design a series of experiments to evaluate the performance of our algorithm.  In the final section we show our conclusion about the main contribution of the paper.

\section{Methodology and Convergence Proof}
According to the theory of Variational Mode Decomposition, each intrinsic mode function should manifest in the spectral domain as a prominent convex bulge, or alternatively as a cluster of adjacent minor bulges. In an ideal, flat spectrum, identifying such bulges would be straightforward. However, real-world spectra invariably exhibit an underlying trend—a slowly varying floor upon which the modal peaks sit. This trend obscures the boundaries between modes, rendering the extraction of bulges far from trivial.

In this section, we develop a principled framework to address this difficulty. Our strategy proceeds as follows. First, we formalize the intuitive conditions that an optimal modal partition should satisfy: each detected mode must accumulate sufficient residual energy with respect to the trend of the spectrum; adjacent weak peaks that individually lack saliency should be merged; and distinct modes should be separated by sufficiently deep valleys.  Second, let $f(x)$ be the spectrum of any investigated signal, we recognize that any curve $g(x)$ lying below $f(x)$ will introduce a partition with respect to the frequency-axis into connected intervals where $f(x) - g(x) > 0$.  The number of each connected intervals defines the modal number $K(g(x))$ which is a general function of $g(x)$, and the problem of determining the optimal number $K$ is then recast to selecting an optimal curve $g(x)$ to cut the spectrum, or the optimal Cutting Curve $g^*(x)$.

Since $K(g(x))$ is a discontinuous functional that is hard to be optimized, we construct an surrogate continuous functional which can automatically yields a cutting curve satisfying our assumption that mentioned above.  The surrogate functional defines a conditional variational problem that adversarially maximizing the integral of the curve and minimizing the curvature.  By finding the optimal curve defined by the surrogate function satisfying the assumption of optimal $K$, one can evaluate the number of modes in a continuous functional framework, which is much more easier than imposed in the discontinuous circumstances.

Additionally, we can prove that the optimization of the surrogate functional is a convex problem which can be transformed into an variational problem, and be transformed into a fourth-ordered boundary-value-problem, and a projected dual-ascent algorithm is developed to get the optimal solution.  We can also establish the global convergence to the unique minimizer in the function space.  The remainder of the part provides the mathematical detail of the proposed method.

\subsection{Modal Number as a Functional of the Cutting Curve} In this section we introduce Cutting Curve of a Spectrum.

\noindent \textbf{ Definition 1: Cutting Curve and Support Interval Sets}

Let $f(x) \ge 0$ be any amplitude spectrum of a signal defined on a closed interval $\Omega \subset \mathbb R$, for any continuous function $g(x) \in C(\Omega)$ satisfying $0 \le g(x) \le f(x)$, we define the supporting intervals $\mathcal S(g(x))$ as 
\begin{equation}
\mathcal S(g(x)) = \{x \in \Omega | 0 \le g(x) \le f(x)\}
\end{equation}
and $g(x)$ as a cutting curve, then $\mathcal S(g(x))$ can be uniquely decomposed into a series countably many disjoint open sets
\begin{equation}
\mathcal S(g(x)) = \bigcup_{k=1}^{K(g(x))}C_k
\end{equation}
where 
\begin{equation}
C_i \cap C_j = \varnothing, 1 \le i \le j \le K(g(x))
\end{equation}
Then for any cutting function $g(x)$, $K$ is the uniquely functional with respect to $g(x)$ which can be defined as the number of connected components of the set $\mathcal S$.

\noindent \textbf{Definition 2: Optimal Modal Number $K^*$}

The best number of modal should satisfy the following conditions:

(a) Each component $C_k$ of $\mathcal S(g(x))$ should support sufficient energy of the spectrum, that is the residual $\int_{C_k} f(x) - g(x) \text dx$ should be sufficiently significant, with respect to the whole trend of the spectrum.

(b) Between distinct modal components, the cutting curve should remain sufficiently smooth, avoiding spurious oscillations induced by noise or minor fluctuations.

(c) Adjacent weak peaks lacking individual saliency should not constitute separate modes. They are to be either aggregated into a combined mode or, where a dominant peak is present, assimilated as subsidiary ripples of that principal mode.

As for assumption (a), one should find a curve that approaches the infinitum $f(x)$, yet maximizing the residual $\int f(x)-g(x) \text dx$, which in turn encourage the evaluated curve to isolate different modes.  However, purely encouraging the curve to approach $f(x)$ everywhere will result in nothing to be retrieved, so that the curve should remain at the bottom of $f(x)$ at the interval where peaks exist.  By considering (b) and (c), the penalty to the curve is a good choice to satisfy the assumption.  Thus by encouraging the integral of $g(x)$ and discouraging the curvature of $g(x)$, evaluating the optimal $g^*(x)$ is implicitly evaluating the best mode number $K$ under our basic assumption.

However, directly optimization to $K(g(x))$ is difficult since this is not a continuous functional with respect to $g(x)$, so in the following sections we instead optimize $g(x)$ to approach the target.

\begin{figure}[!hbtp]
    \centering
    \includegraphics[width=\columnwidth]{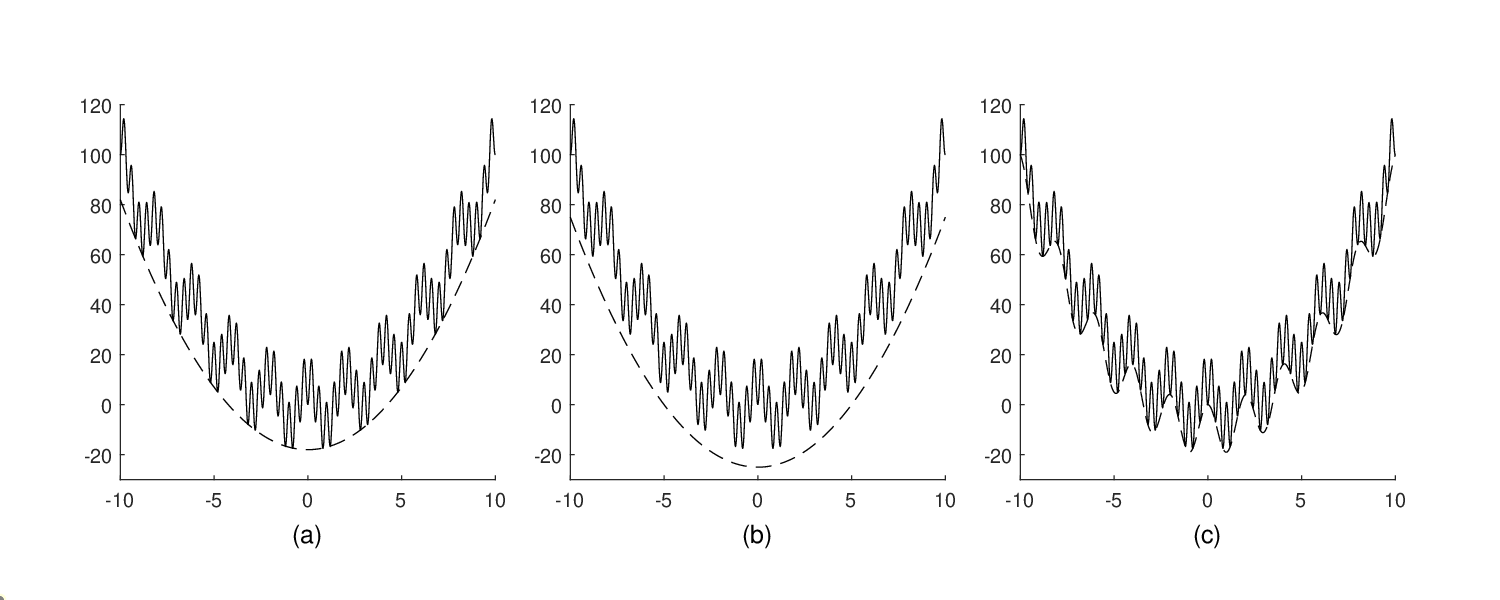}
    \caption{Illustration to the concept of best Cutting Curve: In (a)-(c) the solid line is $y=10\cos\pi x -10\cos 5\pi x + x^2$. The dashed line in (a) is a proper Cutting Curve since it actually fits the lower bound of the solid line as well as ignoring high frequent details of the solid line. The dashed line in (b) has a gap to the bottom of the solid line so it is not suitable to be a good cutting curve. In contrast, the dashed line in (c) fits too tightly to the lower bound of the solid line, capturing too much detail of the solid line so that it can be reckoned as the lower envelope rather than a proper Cutting Curve.}
\label{diag1}
\end{figure}

\subsection{Best Cutting Curve of a Function}
Without loss of generality we first introduce the Cutting Curve of a function.  For any function $f(x) > 0$, the best Cutting Curve is a curve that tightly fits the lower bound of the function while reflects the overall trend of the lower bound.  Fig.\ref{diag1} gives an intuitive illustration about the conception of Cutting Curve.  As illustrated in Fig.\ref{diag1}, intuitively the Cutting Curve should be tightly close to the lower bound of the function, approaching the infimum of functions as close as possible as well as maintain its own smoothness in order to capture the global trend of the lower bound rather than too much details, as capturing too much details will in turn reduce the robustness for separating the modes since there might be some noise added to the signal in either time or frequency domain.

\subsection{Variational Methods for Finding the Best Cutting Curve}

To solve the problem for finding the best Cutting Curve, we use optimization method mainly based on variation and convex optimization\cite{Dazhong,Boyd}.  Let $f(x) \ge 0$ be a known function, the problem of finding the Cutting Curve curve of $f(x)$ is equivalent to finding a function $g(x)$ that tightly approaching the point-wise infimum of $f(x)$ while maintaining the global trend of its lower bound.  This can be viewed as the following optimization problem:
\begin{equation} 
    \left\{
  \begin{aligned}
  \underset{g(x)}{\max}\int_{\Omega}g(x)\text dx\\
  \underset{g(x)}{\min}\int_{\Omega}g''^2(x)\text dx \\
    \end{aligned}
    \right.\text{w.r.t}
  \begin{cases}
  g(x)\le f(x) \\
  g(x)\ge 0 \\
    \end{cases}
  \label{opt}
\end{equation}
Converting (\ref{opt}) into standard form, we get:
\begin{equation} 
    \left\{
  \begin{aligned}
  \underset{g(x)}{\min}\int_{\Omega}-g(x)\text dx\\
  \underset{g(x)}{\min}\int_{\Omega}g''^2(x)\text dx \\
    \end{aligned}
    \right.\text{w.r.t}
  \begin{cases}
  g(x) - f(x) \le 0\\
  - g(x)\le 0 \\
    \end{cases}
  \label{std}
\end{equation}
This is an optimization problem with inequality constraints.  To solve the problem, one can construct an Lagrange Slack Function by adding functional-styled weights to each factor and constraints
\begin{align}
L(g,\lambda,\mu,\alpha,\beta) &= \int_{\Omega}\alpha(x) g''^2(x)\text{d}x + \int_{\Omega}\beta(x)[-g(x)]\text{d}x \notag \\ &+ \int_\Omega \lambda(x)[g(x)-f(x)]\text{d}x + \int_\Omega \mu(x)[-g(x)]\text{d}x
\label{proposal}
\end{align}
where
\begin{equation}
\begin{cases}
  \lambda(x)=0, f(x)\ge g(x), \forall x \\
  \lambda(x)>0, g(x) > f(x), \forall x \\
  \end{cases}
\end{equation}
\begin{equation}
\begin{cases}
   \mu(x)=0, g(x)\ge 0, \forall x \\
   \mu(x)>0, g(x) < 0, \forall x \\
  \end{cases}
\end{equation}
and $\alpha(x), \beta(x) > 0$.
The solution of the problem can be found by solving a variation problem, namely, if we set
\begin{equation}
F=\alpha(x) g''^2(x)-\beta(x)g(x) +\lambda(x)[g(x)-f(x)]-\mu(x)g(x)
\end{equation}
then the above Lagrange Slack Function is like the form $\displaystyle J[g(x)]=\int_{\Omega} F(x,g,g'')\text dx$ so that if we assume the variation is zero at both ends of the curve, one can easily know the optimal solution of the above function happen to satisfy the simplified form of Euler-Poisson equation if we let the variation vanish,\cite{Dazhong} so it can be written as
\begin{equation}
F_g+\cfrac {\text d^2F_{g''}}{\text dx^2}=0
\label{EP}
\end{equation}
where 
\begin{equation}
F_g = -\beta(x)+\lambda(x)-\mu(x)
\label{Fg}
\end{equation}
\begin{equation}
F_{g''} = 2\alpha(x)g''(x)
\end{equation}
It is easily to get that
\begin{equation}
\cfrac {\text d^2 F_{g''}} {\text dx^2}  = 2\alpha''(x)g''(x)+4\alpha'(x)g^{(3)}(x)+2\alpha(x)g^{(4)}(x)
\label{F2g}
\end{equation}
By substituting (\ref{Fg}) and (\ref{F2g}) into (\ref{EP}), we can get the following ordinary differential equation
\begin{equation}
2\alpha''(x)g''(x)+4\alpha'(x)g^{(3)}(x)+2\alpha(x)g^{(4)}(x) =\beta(x)-\lambda(x)+\mu(x)
\label{final}
\end{equation}
This is a 4th-ordered Ordinary Differential Equation (ODE), so that normally four constraints are needed to determine a definite solution.  Recall that we assume the variation is zero at both ends of the curve, so that we can typically set four constraints at both ends of the curve with respect to the function value and its first-order derivative as follows:
\begin{equation}
g(0)=f(0),g(n)=f(n),g'(0)=f'(0),g'(n)=f'(n)
\label{con}
\end{equation} where $0$ stands for the left end and $n$ the right end of the curve.
To solve the boundary value problem of ODE, we use Finite Difference Method\cite{Dehao}.  Note that for a differential function $g(x)$, it's derivative of first to fourth order can be expressed as the linear combination of $g(x)$ itself in the corresponding differential format and we can take similar operation to the function $\alpha(x)$ (See Supplementary Materials \ref{Matrices}), so that different orders of derivative with respect to $\alpha(x)$ and $g(x)$ can be expressed as the linear combination of themselves, one can easily transform (\ref{final}) into the matrix-form as follows:
\begin{align}
(2\boldsymbol{A^{(2)}\alpha}) \odot (\boldsymbol{G^{(2)}g}) + (4\boldsymbol{A^{(1)}\alpha}) \odot (\boldsymbol{G^{(3)}g}) +(2\boldsymbol{\alpha}) \odot (\boldsymbol {G^{(4)}g}) = \boldsymbol{\beta - \lambda + \mu}
\label{ode_equation}
\end{align} 
where $\boldsymbol{A}^{(n)}$ and $\boldsymbol{G}^{(n)}$ is the conversion matrices (See Supplementary Materials \ref{Matrices}) that transform the n$^\text{th}$ derivatives of $\alpha(x)$ and $g(x)$ into linear expression of the corresponding function, $\odot$ stands for component-wise multiplication between matrices.  The conversion matrices $\boldsymbol{A}^{(n)}$ and $\boldsymbol{G}^{(n)}$ has the same structure.  $\boldsymbol{g}$, $\boldsymbol{\alpha}$, $\boldsymbol{\beta}$, $\boldsymbol{\lambda}$, $\boldsymbol{\mu}$ are column vectors obtained by respectively discretizing the corresponding functions, in which each element is a uniformly discrete sampled value on the function $g(x)$, $\alpha(x)$, $\beta(x)$, $\lambda(x)$, $\mu(x)$, respectively.  The boundary condition is handled at the last 4 line in $(2\pmb A^{(2)}\pmb \alpha) \odot \pmb G^{(2)}+(4\pmb A^{(1)}\pmb \alpha)\odot \pmb G^{(3)} +2\pmb \alpha \odot \pmb G^{(4)}$ (See Supplementary \ref{Matrices} for details).  Algorithm \ref{alg:algorithm} shows the details of our key steps for evaluating the Cutting Curve.

\begin{algorithm}[H]
\caption{The Key Steps for Evaluating Optimal Cutting Curve}
\label{alg:algorithm}
\begin{algorithmic}[1]
\STATE Input $\{f(x)\}$
\STATE Initialize $\alpha(x), \beta(x), \lambda(x), \mu(x), g(x)$
\STATE Pre-process by extending $f(x)$ at both ends.
\REPEAT
    \STATE $n \leftarrow n + 1$
    \STATE Update $g(x)$:
    \STATE \quad Solve the equation:
    \STATE \quad $(2\boldsymbol{A^{(2)}\alpha}) \odot (\boldsymbol{G^{(2)}g}) + (4\boldsymbol{A^{(1)}\alpha}) \odot (\boldsymbol{G^{(3)}g}) +(2\boldsymbol{\alpha}) \odot (\boldsymbol {G^{(4)}g}) = \boldsymbol{\beta - \lambda + \mu}$
    \STATE Dual ascent:
    \STATE \quad $\lambda^{n+1} \leftarrow \max\{0, \lambda^n + \theta (g - f)\}$
    \STATE \quad $\mu^{n+1} \leftarrow \max\{0, \mu^n + \gamma (-g)\}$
\UNTIL{convergence: $\|g^{n} - g^{n-1}\|_2^2 / \|g^{n}\|_2^2 < \epsilon$}
\end{algorithmic}
\end{algorithm}

\subsection{Proof of the Convergence}
One of the core contribution of this work is to provide a deterministic, global convergence guarantees for the problem of automatic mode detection, which is typically unaddressed in the pioneer works.  we prove that the algorithm generates sequences converging to an optimal solution of the original problem. The proof is built upon several folds:

\textbf{Convexity and Duality}: Proof of the problem's convexity and strong duality.

\textbf{Well-posedness of Iterations}: Demonstration that each iterative step solves a well-posed subproblem with a unique solution.

\textbf{Convergence in Dual Space}: Impose functional functional to dual ascent method and analyse its convergence under both unique and non-unique optimal dual solutions.

\textbf{Convergence in Primal Space}: Establish the convergence of the primal sequence via the continuous dependence of solutions on parameters.

For the conciseness of the proof, we leave the detail of the the convexity and duality into Appendix and focus on the well-posedness of iterations and convergence in dual and primal spaces here.

\subsubsection{Convexity and Duality}
In this part we state the main results about the convexity and duality, and we put the details in Appendix \ref{Proof_of_convexity} and \ref{Proof_of_dual}.

\begin{lemma}
The objection functional of (\ref{std}) is convex, and the feasible set is also convex.
\end{lemma}

\begin{lemma}
The dual of objection functional of (\ref{std}) is concave, with the dual-gap zero.
\end{lemma}

\subsubsection{Well-posedness of Iterations}
Before analyzing the convergence of the iterative algorithm, we must ensure that the subproblem solved at each iteration is well-posed.  Specifically, as in each iteration, our primal solution $g_{k+1}(x)$ is given by dual variables $(\lambda_k(x),\mu_k(x))$ at the k-th step and we transform the variational problem into a linear fourth-order boundary value problem (BVP), the well-posedness of this BVP encompassing the existence and uniqueness of the solution is crucial to guarantee that our iterative scheme is not only well-defined but also stable against small numerical perturbations. Since $\alpha(x)$ and $\beta(x)$ in our algorithm can be considered as weights function that are always positive, we can establish the existence and uniqueness of our solution to (\ref{final}) below.

\begin{theorem}
Assume $\alpha(x) > 0$, then the solution of (\ref{final}) with 
$g(0)=0,\\g(n)=0,g'(0)=0,g'(n)=0$ is $g(x) \equiv 0$.
\label{exist_unique}
\end{theorem}
\begin{proof}
Recall that when dealing with the homogeneous condition, we exactly have
\begin{equation}
g(x)=\int_a^x\left (\int_a^x \cfrac {C_1t+C_2}{\alpha(t)} \text dt \right ) \text dt +C_3x+C_4
\end{equation}
where
\begin{equation}
g(a)=g(b)=g'(a)=g'(b)=0
\end{equation}
It is obvious that $C_3$ and $C_4$ are zero by $g(a)$ and $g'(a)$, and $g'(b)=0$ is equivalent to
\begin{equation}
\int_a^b\cfrac {C_1t+C_2}{\alpha(t)}\text dt = 0
\label{linear}
\end{equation}
Since $\alpha(t) > 0$ and $\alpha(t) \in C[a,b]$, since $C_1t + C_2$ is a linear function so that if the zero point is not in $[a,b]$ then it is not possible to make (\ref{linear}) hold.  If there exists $t \in [a,b]$ that makes $C_1t+C_2$ zero, then the upper limit integral function $\Phi(x)=\displaystyle\int_a^x\cfrac {C_1t+C_2}{\alpha(t)}\text dt$ has no zero points in $(a,b)$, thus makes $g(b)=\displaystyle\int_a^b\Phi(t)\text dt$ non-zero, which is paradoxical to our assumption if not both $C_1$ and $C_2$ are zero.  Then according to Fredholm Alternative Theorem, we can assert that during each iteration, there always exist a unique solution to $g(x)$\cite{Collins2006,kress2013linear,keener2019principles}.
\end{proof}

\begin{corollary}
For each iteration of (\ref{ode_equation}), there always exists a unique solution.
\end{corollary}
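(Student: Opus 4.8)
The plan is to reduce the corollary to a finite-dimensional linear-algebra statement and then invoke Theorem~\ref{exist_unique} to supply injectivity. First I would observe that within a single iteration the vector $\boldsymbol{\alpha}$ is held fixed, so that (\ref{ode_equation}) is \emph{linear} in the unknown $\boldsymbol{g}$. Rewriting each component-wise product $\odot$ as left multiplication by a diagonal matrix, I would collect the three terms into a single square matrix
\begin{equation}
\boldsymbol{M} = 2\,\diag(\boldsymbol{A}^{(2)}\boldsymbol{\alpha})\,\boldsymbol{G}^{(2)} + 4\,\diag(\boldsymbol{A}^{(1)}\boldsymbol{\alpha})\,\boldsymbol{G}^{(3)} + 2\,\diag(\boldsymbol{\alpha})\,\boldsymbol{G}^{(4)},
\end{equation}
whose last four rows are replaced by the rows encoding the boundary data (\ref{con}). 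The system (\ref{ode_equation}) then reads $\boldsymbol{M}\boldsymbol{g} = \boldsymbol{\beta}-\boldsymbol{\lambda}+\boldsymbol{\mu}$, and existence together with uniqueness of $\boldsymbol{g}$ is precisely the statement that $\boldsymbol{M}$ is nonsingular.

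Since $\boldsymbol{M}$ is square, I would establish invertibility by establishing injectivity: it suffices to show that $\boldsymbol{M}\boldsymbol{g}=\boldsymbol{0}$ forces $\boldsymbol{g}=\boldsymbol{0}$. But $\boldsymbol{M}\boldsymbol{g}=\boldsymbol{0}$ is exactly the finite-difference discretization of the homogeneous version of (\ref{final}) (right-hand side $\boldsymbol{\beta}-\boldsymbol{\lambda}+\boldsymbol{\mu}=\boldsymbol{0}$) together with the homogeneous boundary conditions $g(0)=g(n)=g'(0)=g'(n)=0$. Theorem~\ref{exist_unique} shows that the only function satisfying this homogeneous boundary value problem is $g\equiv 0$, whose sampled vector is $\boldsymbol{g}=\boldsymbol{0}$; hence $\ker\boldsymbol{M}$ is trivial, $\boldsymbol{M}$ is invertible, and the unique solution $\boldsymbol{g}=\boldsymbol{M}^{-1}(\boldsymbol{\beta}-\boldsymbol{\lambda}+\boldsymbol{\mu})$ exists at every iteration.

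The main obstacle is the passage from the \emph{continuous} uniqueness of Theorem~\ref{exist_unique} to the \emph{discrete} triviality of $\ker\boldsymbol{M}$: a vanishing continuous kernel does not, a priori, force a vanishing discrete kernel for an arbitrary finite-difference stencil. To close this gap rigorously I would mirror the integration argument of Theorem~\ref{exist_unique} at the discrete level, replacing the two successive integrations by two discrete summations (summation by parts) and using $\alpha(x)>0$ together with the continuity assumption on $\alpha$ to exclude a nontrivial null vector, this being the discrete analogue of the fact that $\int_a^b (C_1 t + C_2)/\alpha(t)\,\mathrm{d}t$ cannot vanish unless $C_1=C_2=0$. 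Alternatively, since the paper constructs $\boldsymbol{A}^{(n)},\boldsymbol{G}^{(n)}$ so that the boundary rows together with the interior rows "form full-rank matrices," I would instead argue that $\boldsymbol{M}$ inherits full rank directly from that construction, with Theorem~\ref{exist_unique} serving as the confirming consistency check that the discrete operator is not accidentally singular.
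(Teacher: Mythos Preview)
Your approach differs from the paper's. The paper gives no separate proof of the corollary; it is meant to follow immediately from Theorem~\ref{exist_unique} via the classical Fredholm alternative for linear boundary value problems (this is the citation \cite{bve} at the start of that theorem's proof): a linear BVP has a unique solution precisely when the associated homogeneous problem has only the trivial one. The paper's reasoning therefore stays entirely at the \emph{continuous} level---it treats (\ref{ode_equation}) as the numerical vehicle for (\ref{final}), and once the continuous homogeneous problem is shown to have only $g\equiv 0$, existence and uniqueness at each iteration are regarded as settled.

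You instead take (\ref{ode_equation}) literally as a finite linear system and aim for nonsingularity of the assembled matrix $\boldsymbol{M}$. This is more faithful to what the algorithm actually solves, and you correctly isolate the real obstruction: Theorem~\ref{exist_unique} yields continuous uniqueness, which does not automatically descend to the discrete stencil. The paper simply does not engage with this gap; the remark that the boundary rows ``form full-rank matrices'' is asserted, not argued. Your proposed fix---a discrete summation-by-parts analogue of the integration argument, exploiting $\alpha>0$---would genuinely close what the paper leaves open. In short, the paper's route is shorter because it stops at the continuous Fredholm alternative; yours is longer but more rigorous about the object the iteration actually computes.
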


\subsubsection{Convergence in Dual Space}

To facilitate the convergence analysis in the dual space, we introduce the following regularity assumptions on the primal problem:

The target spectrum function $f(x)$ is Lipschitz continuous on $\mathcal D$.

The solution $g(x)$ to the subproblem and the iterates $g_k(x)$ reside in $C^2(\mathcal D)$ with uniformly bounded second derivatives.

These assumptions are standard for analyzing variational problems involving curvature penalties. In practice, for discrete and potentially noisy spectral data, $f(x)$ can be considered as a smooth interpolation of the observed points. Crucially, the core object of interest—the Cutting Curve—is intrinsically a smooth function due to the curvature penalty term in the objective. Therefore, the analysis under these smoothness assumptions captures the essential behavior of the algorithm.

We begin the proof by first restating the iteration in Algorithm \ref{alg:algorithm}.  Notice that during the iteration, we have

\begin{align}
  \left\{ \begin{alignedat}{2} 
    g_{k+1}(x)&= \underset{g(x)}{\arg\min}\mathcal L(g(x), \lambda_k(x), \mu_k(x)) \\
    \lambda_{k+1}(x) &= \max[0,\lambda_k(x) + \theta(g_{k+1}(x)-f(x))] \\
    \mu_{k+1}(x) &= \max[0, \mu_k(x)+\gamma(-g_{k+1}(x))]
  \end{alignedat} \right.
\label{iteration}
\end{align}

To show the projection property for the $\max$ operation we first prove the following lemma:
\begin{lemma}
Let $m(x), x \in \mathcal D$ be any continuous function that has both positive and negative values, Let $\Pi$ be an operator that makes all the negative values for $m(x)$ zero and non-negative unchanged, or $\Pi m(x)=\max(0, m(x))$, let $M(x), x \in \mathcal D$ be any non-negative function, then 
\begin{equation}
\int_{\mathcal{D}} \bigl[ m(x) - \Pi m(x) \bigr] \bigl[ M(x) - \Pi m(x) \bigr] dx \le 0
\label{mx}
\end{equation}
\label{lemma_mx}
\end{lemma}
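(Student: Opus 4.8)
The plan is to reduce the integral inequality to a \emph{pointwise} sign statement, since the integrand turns out to have a definite sign at every point of $\mathcal{D}$. The guiding observation is that $\Pi m(x) = \max(0, m(x))$ is nothing but the pointwise metric projection of $m(x)$ onto the nonnegative half-line, so the factor $m(x) - \Pi m(x)$ is either zero or strictly negative, and the nonnegativity of $M$ lets me pin down the sign of the second factor. Concretely, I would split $\mathcal{D}$ into the two sets $\{x : m(x) \ge 0\}$ and $\{x : m(x) < 0\}$ and analyze the integrand separately on each.

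On the set where $m(x) \ge 0$ we have $\Pi m(x) = m(x)$, so the first factor $m(x) - \Pi m(x)$ vanishes identically and the integrand is zero there, contributing nothing. On the set where $m(x) < 0$ we have $\Pi m(x) = 0$, so the first factor equals $m(x) < 0$ while the second factor equals $M(x) - 0 = M(x) \ge 0$ by the hypothesis that $M$ is nonnegative. Their product is a nonpositive number times a nonnegative number and is therefore $\le 0$ at every such point. Combining the two cases, the integrand $[m(x) - \Pi m(x)][M(x) - \Pi m(x)]$ is $\le 0$ pointwise on all of $\mathcal{D}$, and integrating a pointwise-nonpositive, integrable function over $\mathcal{D}$ immediately gives the claimed bound in \eqref{mx}.

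I do not expect a genuine obstacle: once one recognizes the sign-splitting, the argument is self-contained and needs only the continuity of $m$ and integrability of the product. The only conceptual point worth flagging is that \eqref{mx} is precisely the \emph{obtuse-angle} characterization of the projection $\Pi$ onto the convex cone of nonnegative functions, i.e.\ the variational inequality $\langle m - \Pi m,\, M - \Pi m\rangle \le 0$ for every feasible $M$. This is exactly the property that will later yield the nonexpansiveness of the dual updates in \eqref{iteration}, so the ``hard part'' is not the proof of the lemma itself but remembering to invoke it at the right place in the convergence argument, where $\lambda_k - \theta(g-f)$ and $\mu_k - \gamma(-g)$ play the role of $m$.
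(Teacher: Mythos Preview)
Your pointwise sign argument is correct and, in fact, cleaner than the paper's own proof. The paper proceeds by algebraic expansion: it writes out the four cross terms of the product, uses the identity $\int_{\mathcal D} m(x)\,\Pi m(x)\,dx = \int_{\mathcal D} (\Pi m(x))^2\,dx$ to collapse two of them, and then invokes $\Pi m \ge m$ together with $M \ge 0$ to bound $-\int_{\mathcal D} \Pi m(x)\,M(x)\,dx \le -\int_{\mathcal D} m(x)\,M(x)\,dx$, after which everything cancels to $\le 0$. Your route bypasses the expansion entirely by observing that the integrand itself is nonpositive at every point, which is both shorter and makes the projection-inequality interpretation you mention (the obtuse-angle property of $\Pi$) transparent. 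The two arguments ultimately rest on the same pointwise facts, but yours packages them more directly; the paper's version has the minor advantage of also displaying the equality case $m = \Pi m$ explicitly at the end.
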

\begin{proof}
By expanding the equation (\ref{mx}) we should have the following:
\begin{equation}
\begin{split}
&\ \ \ \ \int_{\mathcal{D}} \bigl[ m(x) - \Pi m(x) \bigr] \bigl[ M(x) - \Pi m(x) \bigr] dx \\
&= \int_{\mathcal{D}} m(x)M(x) dx 
   - \int_{\mathcal{D}} \Pi m(x)M(x) dx - \int_{\mathcal{D}} m(x)\Pi m(x) dx 
   + \int_{\mathcal{D}} \bigl[ \Pi m(x) \bigr]^2 dx \\
&= \int_{\mathcal{D}} m(x)M(x) dx 
   - \int_{\mathcal{D}} \Pi m(x)M(x) dx  - \int_{\mathcal{D}} \bigl[ \Pi m(x) \bigr]^2 dx 
   + \int_{\mathcal{D}} \bigl[ \Pi m(x) \bigr]^2 dx \\
&\le \int_{\mathcal{D}} m(x)M(x) dx 
   - \int_{\mathcal{D}} m(x)M(x) dx - \int_{\mathcal{D}} \bigl[ \Pi m(x) \bigr]^2 dx 
   + \int_{\mathcal{D}} \bigl[ \Pi m(x) \bigr]^2 dx = 0
\end{split}
\label{int}
\end{equation}
Moreover, if $m(x)$ is in feasible set so that $m(x)=\Pi m(x)$, then (\ref{int}) will become zero.
\end{proof}
Now we prove that in iteration (\ref{iteration}) the multiplier $\lambda_k(x)$, $\mu_k(x)$ always terminate at one of the optimum.  To say in detail, we should have the following theorem:
\begin{theorem}
Let $\mathcal G(\lambda(x), \mu(x)) = \underset{g(x)}{\arg\min}\mathcal L(g(x), \lambda(x), \mu(x))$ and $\lambda_k(x)$, $\mu_k(x)$ be equicontinuous for all $k=1,2,3,\cdots$, then if $\mathcal G$ satisfies 
\begin{align}
||D[\mathcal{G}(\lambda_1(x), \mu_1(x))&-\mathcal{G}(\lambda_2(x), \mu_2(x))]||\\ \nonumber &\le L||[\lambda_1(x)-\lambda_2(x), \mu_1(x)-\mu_2(x)]||
\label{grad}
\end{align}
for any $\lambda_1(x)$, $\lambda_2(x)$, $\mu_1(x)$, $\mu_2(x)$, where $D$ stands for the variation, and the step size $\eta < \frac 1 L$, then as $k \rightarrow +\infty$, the iteration series of $(\lambda_k(x), \mu_k(x))$ must terminate at the point as close as possible to one of the optimum $(\lambda^*(x), \mu^*(x))$ for $\mathcal G$.
\label{lambda_mu}
\end{theorem}
\begin{proof}
For conciseness, simplifying the integral by introducing matrix-form expression, where
\begin{align}
\left [ \begin{array}{cc}
a(x) & b(x)
\end{array} \right ]
\left [ \begin{array}{cc}
c(x) \\
d(x)
\end{array} \right ]
= \int_{\mathcal D} a(x)c(x) \text dx + \int_{\mathcal D} b(x)d(x) \text dx
\end{align}
 we have
\begin{align}
\left [ \begin{array}{cc}
\tilde\lambda_{k+1}(x) - \lambda_{k+1}(x) & \tilde\mu_{k+1}(x) - \mu_{k+1}(x)
\end{array} \right ]
\left [ \begin{array}{cc}
\lambda_{k}(x) - \lambda_{k+1}(x) \\
\mu_{k}(x) - \mu_{k+1}(x)
\end{array} \right ] \le 0
\label{judge}
\end{align}
by considering all possible locations for $(\tilde\lambda_{k+1}(x),\tilde\mu_{k+1}(x))$, where $(\tilde\lambda, \tilde\mu)$ stands for general points before assigning the operator $\Pi$, or $\max(0,\cdot)$ compared with $(\lambda, \mu)$.  Expand (\ref{judge}) by setting 
\begin{equation}
\begin{bmatrix}
\tilde\lambda_{k+1}(x) \\
\tilde\mu_{k+1}(x)
\end{bmatrix}=
\begin{bmatrix}
\tilde\lambda_{k}(x) \\
\tilde\mu_{k}(x)
\end{bmatrix}+\eta D\mathcal G(\lambda_k(x), \mu_k(x))
\end{equation}
as iteration step and rearrange, we have
\begin{align}
D\mathcal G_{(\lambda_{k}(x),\mu_{k}(x))}(\lambda_{k}(x),\mu_{k}(x))
\left [ \begin{array}{cc}
\lambda_{k+1}(x) - \lambda_{k}(x) \\
\mu_{k+1}(x) - \mu_{k}(x)
\end{array} \right ]\ge
\cfrac 1 \eta 
\left |\left| \left [ \begin{array}{cc}
\lambda_{k+1}(x) - \lambda_{k}(x) \\
\mu_{k+1}(x) - \mu_{k}(x)
\end{array} \right ]\right |\right|_2^2
\label{to_combine}
\end{align}
On the other side, for any functional we have
\begin{align}
&\mathcal{G}(\lambda_{k+1}(x),\mu_{k+1}(x)) 
= \mathcal{G}(\lambda_{k}(x),\mu_{k}(x)) \notag \\
&+ D G_{(\lambda_k(x), \mu_k(x))}(\lambda(x), \mu(x))^\top
  \begin{bmatrix}
    \lambda_{k+1}(x)-\lambda_k(x) \\
    \mu_{k+1}(x)-\mu_k(x)
  \end{bmatrix} \notag \\
&+ \frac{1}{2!}
  \begin{bmatrix}
    \lambda_{k+1}(x)-\lambda_k(x) & \mu_{k+1}(x)-\mu_k(x)
  \end{bmatrix} \notag \\
&\quad D^2 G_{(\lambda(x), \mu(x))}(\varepsilon_k(x), \zeta_k(x))
  \begin{bmatrix}
    \lambda_{k+1}(x)-\lambda_k(x) \\
    \mu_{k+1}(x)-\mu_k(x)
  \end{bmatrix}
\label{second_order}
\end{align}
where $(\varepsilon_k(x), \zeta_k(x))$ locates on general line segment from $(\lambda_{k}(x),\mu_{k}(x))$ to 

\noindent$(\lambda_{k+1}(x),\mu_{k+1}(x))$.  If $\mathcal G$ is second-order differentiable and satisfies (\ref{grad}) so that $||D^2\mathcal G||$ is bounded with $[-L,L]$.  Due to the concavity of $\mathcal G$ and the left side of the bound, and combine (\ref{to_combine}) we get
\begin{align}
\mathcal G(\lambda_{k+1}(x),\mu_{k+1}(x))-\mathcal G(\lambda_{k}(x),\mu_{k}(x)) \ge \left (\cfrac 1 \eta - \cfrac L 2 \right )
\left |\left| \left [ \begin{array}{cc}
\lambda_{k+1}(x) - \lambda_{k}(x) \\
\mu_{k+1}(x) - \mu_{k}(x)
\end{array} \right ]\right |\right|_2^2 \ge 0
\label{monotonous}
\end{align}
if $\eta=\max\{\theta, \gamma\} =\frac 1 L < \frac 2 L$, then $\mathcal G$ is a monotonously increasing functional series with upper bound due to its concavity, so that $\mathcal G$ converges and the difference between adjacent term of both $\mathcal G$ and $(\lambda_{k}(x),\mu_{k}(x))$ is approaching zero.

Now we prove $\{g_k(x)\}$ is guaranteed to approach one of the optimum when the algorithm terminates even if $\mathcal G$ is not strongly concave.   By introducing $l(\pmb y;\pmb x)=\mathcal G(\pmb x)+D\mathcal G(\pmb x)(\pmb y-\pmb x)$ and due to the concavity the iteration on $\mathcal G$ should satisfy \cite{BertsekasCOA}
\begin{align}
&l\left( 
\begin{bmatrix} \lambda_{k+1}(x) \\ \mu_{k+1}(x) \end{bmatrix}; 
\begin{bmatrix} \lambda_{k}(x) \\ \mu_{k}(x) \end{bmatrix} 
\right) - \cfrac 1 {2\eta}\left\| 
\begin{bmatrix}
\lambda_{k+1}(x) - \lambda_{k}(x) \\
\mu_{k+1}(x) - \mu_{k}(x)
\end{bmatrix} 
\right\|_2^2 \nonumber \\
&\ge  l\left( 
\begin{bmatrix} \lambda^*(x) \\ \mu^*(x) \end{bmatrix}; 
\begin{bmatrix} \lambda_{k}(x) \\ \mu_{k}(x) \end{bmatrix} 
\right) - \cfrac 1 {2\eta} \left\| 
\begin{bmatrix}
\lambda^*(x) - \lambda_k(x) \\
\mu^*(x) - \mu_k(x)
\end{bmatrix} 
\right\|_2^2 \nonumber \\
&+ \cfrac 1 {2\eta}\left\| 
\begin{bmatrix}
\lambda^*(x) - \lambda_{k+1}(x) \\
\mu^*(x) - \mu_{k+1}(x)
\end{bmatrix} 
\right\|_2^2 
\label{expand_inequality}
\end{align}
since $-\mathcal G$ is convex. Due to the concavity of $\mathcal G$, we have 
\begin{align}
\mathcal G(\lambda_{k+1}(x), \mu_{k+1}(x)) \ge l \left (
\begin{bmatrix} \lambda_{k+1}(x) \\ \mu_{k+1}(x) \end{bmatrix}; 
\begin{bmatrix} \lambda_{k}(x) \\ \mu_{k}(x) \end{bmatrix}
\right ) - \cfrac 1 {2\eta}\left\| 
\begin{bmatrix}
\lambda_{k+1}(x) - \lambda_{k}(x) \\
\mu_{k+1}(x) - \mu_{k}(x)
\end{bmatrix} 
\right\|_2^2
\label{neq_1}
\end{align}
and
\begin{align}
l\left( 
\begin{bmatrix} \lambda^*(x) \\ \mu^*(x) \end{bmatrix}; 
\begin{bmatrix} \lambda_{k}(x) \\ \mu_{k}(x) \end{bmatrix} 
\right) \ge \mathcal G(\lambda^*(x), \mu^*(x))
\label{neq_2}
\end{align}

Substituting (\ref{neq_1}) and (\ref{neq_2}) into (\ref{expand_inequality}), then we have
\begin{align}
\left\| 
\begin{bmatrix}
\lambda^*(x) - \lambda_{k+1}(x) \\
\mu^*(x) - \mu_{k+1}(x)
\end{bmatrix} 
\right\|_2^2 &\le
2\eta[\mathcal G(\lambda_{k+1}(x), \mu_{k+1}(x)) \notag \\ 
&- \mathcal G(\lambda^*(x), \mu^*(x))] + \left\| 
\begin{bmatrix}
\lambda^*(x) - \lambda_k(x) \\
\mu^*(x) - \mu_k(x)
\end{bmatrix} 
\right\|_2^2
\label{to_sum}
\end{align}

Sum up (\ref{to_sum}) for each $k$ and eliminate the repeated terms that occur at the both ends of the inequality, we have
\begin{align}
\left\| 
\begin{bmatrix}
\lambda^*(x) - \lambda_{k+1}(x) \\
\mu^*(x) - \mu_{k+1}(x)
\end{bmatrix} 
\right\|_2^2 &\le
2\eta \sum\limits_{i=0}^k[\mathcal G(\lambda_{i+1}(x), \mu_{i+1}(x)) \notag \\
&- \mathcal G(\lambda^*(x), \mu^*(x))] + \left\| 
\begin{bmatrix}
\lambda^*(x) - \lambda_0(x) \\
\mu^*(x) - \mu_0(x)
\end{bmatrix} 
\right\|_2^2
\end{align}
As the left side is non-negative, and recall in (\ref{monotonous}) we know that $\{\mathcal G_k\}$ is non-decreasing and $\mathcal G^*$ is the supreme of $\mathcal G$, hence $\mathcal G(\lambda_{k+1}(x), \mu_{k+1}(x)) - \mathcal G(\lambda^*(x), \mu^*(x))$ is the largest among all the terms involved in the summation, by applying reasonable inequality scaling we must have
\begin{align}
0 \ge \mathcal G(\lambda_{k+1}(x)&, \mu_{k+1}(x)) - \mathcal G(\lambda^*(x), \mu^*(x)) \ge
-\cfrac 1 {2\eta(k+1)}\left\| 
\begin{bmatrix}
\lambda^*(x) - \lambda_0(x) \\
\mu^*(x) - \mu_0(x)
\end{bmatrix}
\right\|_2^2
\label{constrants}
\end{align} 
which shows the iteration is approaching the optimum.  From (\ref{to_sum}) and by utilizing the monotonicity of the functional $\mathcal G$ during iteration we have
\begin{equation}
\left\| 
\begin{bmatrix}
\lambda^*(x) - \lambda_{k+1}(x) \\
\mu^*(x) - \mu_{k+1}(x)
\end{bmatrix} 
\right\|_2^2 \le \left\| 
\begin{bmatrix}
\lambda^*(x) - \lambda_k(x) \\
\mu^*(x) - \mu_k(x)
\end{bmatrix} 
\right\|_2^2
\label{decreasing}
\end{equation} which shows the distance between the iterating points and the optimum is always decreasing, since the distance is always non-negative so that it has lower bound, $(\lambda_k(x), \mu_k(x))-(\lambda^*(x), \mu^*(x))$ is a convergent sequence so that $(\lambda_k(x), \mu_k(x))$ is bounded.  According to Arzelà-Ascoli Theorem, we can only assert that there exists some sub-sequences $(\lambda_{k_i}(x),\mu_{k_i}(x))$ that converges, which does not indicate that the whole iterating sequence is necessarily convergent in all cases and needs careful discussion.  There are four cases when the iterating sequence is approaching the optimum: 

CASE I: If there is only one optima in $\mathcal G$, then it is trivial to show that all these sub-sequences converges to the unique optimum by the constraints in (\ref{constrants}).

CASE II: If the optimum of $\mathcal G$ forms a convex set, and during the iteration the sequence $(\lambda_k(x), \mu_k(x))$ is approaching and never fall into the optimal set, then from the middle parts of (\ref{constrants}) and (\ref{monotonous}) we know that the value of $\mathcal G$ is approaching the maxima and the distance between adjacent term of the sequence $(\lambda_k(x), \mu_k(x))$ and $(\lambda_{k+1}(x), \mu_{k+1}(x))$ is yielding zero.  Thus we know in this case the iterating sequence in dual space can always terminate to the point as close as possible to one of the optimum, and the iterating sequence is of asymptotic regularity.

CASE III: If the iterating sequence has approached to the unique optima during some steps, then due to (\ref{decreasing}) the next iteration will still stay at the unique optima.

CASE IV: If the optimum of $\mathcal G$ forms a convex set, and the iterating sequence has approached to one of the optimum $(\lambda^*_1(x), \mu^*_1(x))$ early during some iterating step, then in this case the left side of (\ref{monotonous}) must be zero since under reasonable step size the value of functional $\mathcal G$ cannot decrease, so that the middle part of (\ref{monotonous}) is zero, indicating the iterating sequences will stay still.

In all cases we show $(\lambda_k(x), \mu_k(x))$ can always terminate at the point as close as possible to at least one of the optimum and stay still as long as it arrives the optimum during the iteration even if there are more than one optimum in $\mathcal G$.
\end{proof}

\subsubsection{Convergence in Primal Space}
\begin{theorem}
Under the convergence of dual sequence as stated in Theorem \ref{lambda_mu}, the primal sequence $\{g_k(x)\}$ generated by Algorithm \ref{alg:algorithm} converges to the optimal solution $g^*(x)$, specifically, we have $\lim\limits_{k\rightarrow+\infty}||g_k(x)-g^*(x)||=0$.
\end{theorem}
\begin{proof}
As the solution of (\ref{std}) is continuously dependent on its coefficients, so we have $\displaystyle\lim_{k\rightarrow +\infty}g_k(x)=g^*(x)$ when $\{(\lambda_k(x), \mu_k(x))\}$ when in Case I, III and IV, and $||g_{k+1}(x)-g_k(x)||$ is approaching zero in Case II respectively.  In all cases we can terminate algorithm and guarantee the $\{g_k(x)\}$ terminates to one of the optimum as close as possible.
\end{proof}

\section{Implementation}
In this chapter, we focus on the detail about the implementation and numerical scheme about the algorithm for evaluating problem (\ref{std}).  Our workflow consists of 4 interconnected stages: (1) We discretize (\ref{final}) by using Finite Difference Method\cite{Dehao}, transforming it into a linear system.  (2) Crucially, by expressing the derivations of both solution $g(x)$ and the weight function $\alpha(x)$ as linear combination of their nodal values, we recast the entire ODE into a compact matrix form by introducing component-wise multiplication with a mild extension of Hadamard production.  By imposing well-posed boundary conditions, we make the coefficient matrices full-rank and the whole problem invertible so that the evaluation can be significantly accelerated by modern software toolkits via matrix operation.  (3) To further accelerate the convergence and make algorithm effective when handling high or low-passed signals, a smooth extrapolation technique is applied to the input spectrum at the both ends.  (4) Finally, after obtaining the optimal cutting curve, a statistically grounded threshold is determined via kernel density estimation (KDE) on the residual spectrum to filter out background noise. The resulting significant peaks are then aggregated using DBSCAN to merge spatially adjacent minor peaks into a single coherent intrinsic mode, thereby completing the mode extraction.  The following subsections detail each of these components.

\subsection{Spatial Discretization  Scheme}
We discretize the solution domain $\Omega=[a,b]$ into $N+1$ equally spaced nodes $x_0, x_1, \dots, x_n$, where $x_0=a$ and $x_n=b$, with the grid spacing $h=\dfrac {b-a} N$, thus the value of any continuous function $p(x)$ at the nodes are recorded as a column vector $\pmb p=[p(x_0),p(x_1),\dots,p(x_N)]^\top \in \mathbb R^{N+1}$.  Then we employ Five-Point Central Difference Scheme to transform different orders of derivatives into the linear combination with respect to the primal nodal values.  For instance, for the fourth-order derivative $g^{(4)}(x)$ at interior nodes $x_i, i=2,3,\dots,N-2$ we have
\begin{equation}
g^{(4)}(x_i)\approx \cfrac{g(x_{i+2})-4g(x_{i+1})+6g(x_i)-4g(x_{i-1})+g(x_{i-2})}{h^4}
\label{m_front}
\end{equation}
and the corresponding boundary conditions can be approximately expressed as
\begin{align}
  \left\{ \begin{alignedat}{2} 
    g(0)&=g(x_0) \\
    g(n)&=g(x_N) \\
    g'(0)&\approx\cfrac {g(x_1)-g(x_0)} h \\
    g'(n)&\approx\cfrac {g(x_N)-g(x_{N-1})} h
  \end{alignedat} \right.
\label{m_rear}
\end{align}

Note that (\ref{m_front}) and (\ref{m_rear}) can be transformed into compact matrix form via a pentadiagonal matrix $\pmb G^{(4)} \in \mathbb R^{(N+1)\times(N+1)}$, thus we have
\begin{equation}
\boldsymbol{G}^{(4)}=\begin{bmatrix}
1 & -4 & 6 & -4 & 1 & 0 & \cdots & \cdots & \cdots & 0 \\
0 & 1 & -4 & 6 & -4 & 1  & 0 & \cdots & \cdots & 0 \\
\vdots &\vdots &\vdots &\vdots &\vdots &\vdots &\vdots &\vdots &\vdots &\vdots \\
0 & \cdots & \cdots & 0 & 1 & -4 & 6 & -4 & 1  & 0 \\
0 & 0 & \cdots & \cdots & 0 & 1 & -4 & 6 & -4 & 1 
\end{bmatrix}
\end{equation}
thus $\boldsymbol{G}^{(4)}\boldsymbol{g}$ is the fourth-ordered derivative of $\pmb g$ at each node and the non-zero entry stands for the corresponding coefficients of difference and boundary conditions.  $\boldsymbol{G}^{(3)}$, $\boldsymbol{G}^{(2)}$, $\boldsymbol{G}^{(1)}$ and $\boldsymbol{A}^{(4)}$ to $\boldsymbol{A}^{(1)}$ have the similar structure as $\boldsymbol{G}^{(4)}$, we give the detail of each pentadiagonal matrix in \ref{Matrices}.

By integrating both the difference strategy and the boundary condition (\ref{final}) is transformed into (\ref{ode_equation}), where $\odot$ stands for component-wise multiplication.  However, in (\ref{ode_equation}) $\pmb g$ is inseparable so that we cannot resort to effective evaluation.  To solve this, our $\odot$ is just an extension of standard Hadamard production by adding some compatible broadcasting rules in dealing with column vector, that is $\forall \pmb A_{m\times n}, \pmb v_{m\times 1}$, we have
\begin{equation}
\pmb v \odot \pmb A = \pmb A \odot \pmb v = \pmb A * \pmb V_{m\times n}
\end{equation}
where * stands for standard Hadamard production and each column of $\pmb V$ is just a copy of $\pmb v$.  By applying this rule we have $(\pmb A\pmb a) \odot (\pmb B \pmb b) = ((\pmb A \pmb a)\odot \pmb B) \pmb b$, we can transform (\ref{ode_equation}) into 
\begin{equation}
((2\boldsymbol{A^{(2)}\alpha}) \odot \boldsymbol{G^{(2)}})\pmb g + ((4\boldsymbol{A^{(1)}\alpha}) \odot \boldsymbol{G^{(3)}})\pmb g +((2\boldsymbol{\alpha}) \odot \boldsymbol {G^{(4)}})\pmb g = \pmb \beta - \pmb \lambda + \pmb \mu
\end{equation}
so that we can treat (\ref{ode_equation}) as common matrix equation and get the solution by simply using matrix inversion with modern mathematical software tookits.

\begin{figure}[!htbp]
\centering
\subfloat[]{%
    \includegraphics[width=0.45\linewidth]{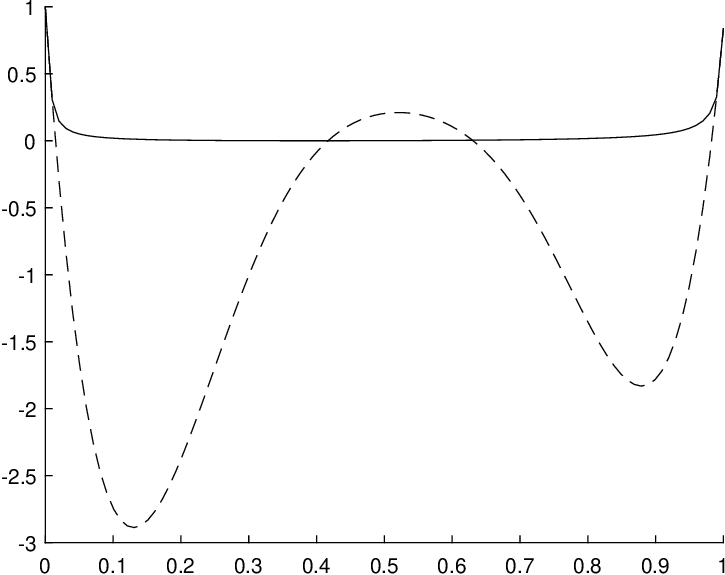}
}\hfill
\subfloat[]{%
    \includegraphics[width=0.45\linewidth]{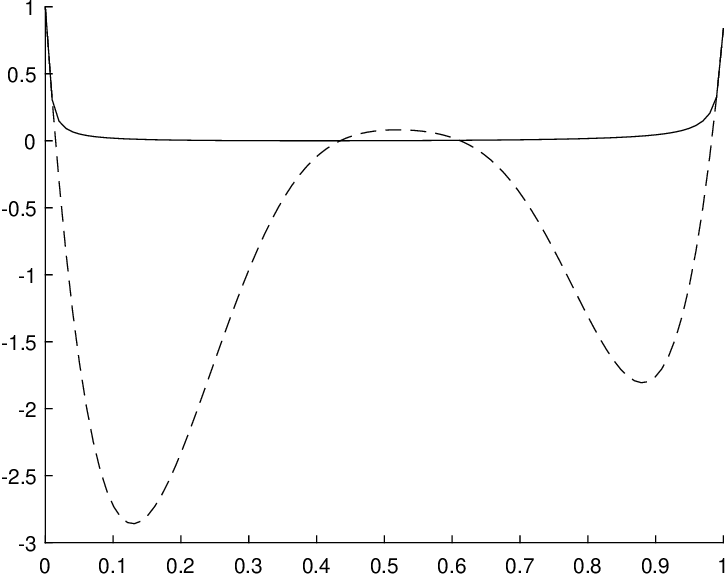}
}\\
\subfloat[]{%
    \includegraphics[width=0.45\linewidth]{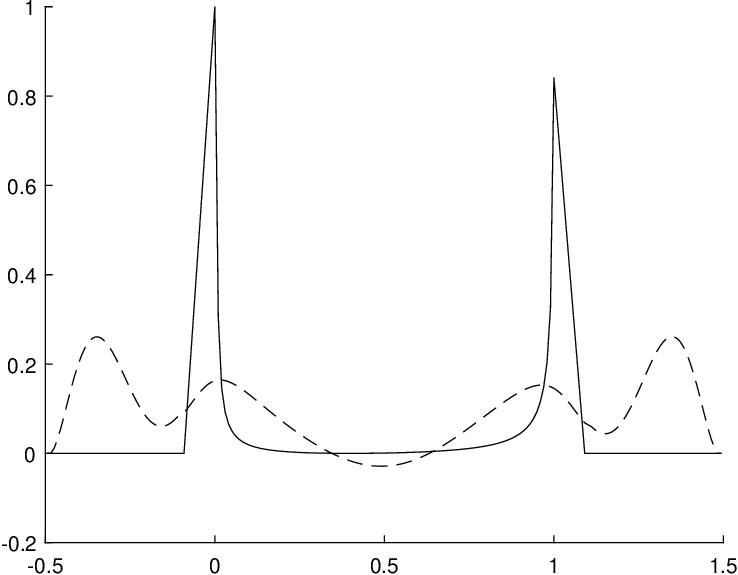}
}\hfill
\subfloat[]{%
    \includegraphics[width=0.45\linewidth]{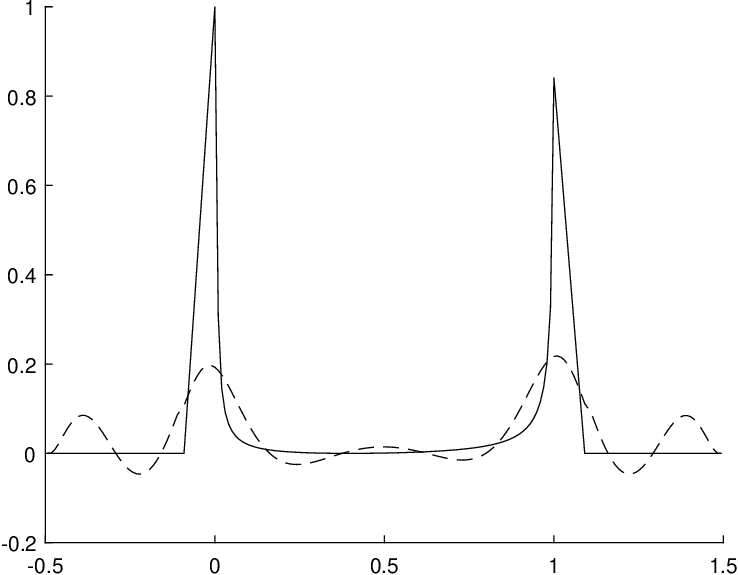}
}
\caption{Iterations for finding the Cutting Curve without ((a) and (b)) /with ((c) and (d)) smooth extrapolation to the spectrum when iterated to 500 ((a) and (c)) and 1500 ((b) and (d)) steps for the same spectrum. The solid line is the original spectrum and the dotted line is the evaluated Cutting Curve.  With smooth extrapolation to the original spectrum, the speed of convergence has significant improvement.}
\label{diag_19}
\end{figure}

\subsection{Treatment of Boundary Conditions}
The primal problem requires the Cutting Curve $g(x)$ confirming with the value of $f(x)$ and $f'(x)$ at both ends as boundary conditions, this conducts linear constraints directly onto the first and last elements to the solution vector of $\pmb g$.  For low and high pass signals of which the spectrum value are high and drop abruptly at either of the ends, this will make $g(x)$ inherits the trend of $f(x)$ at both ends at some interval so that convergence in this case will be very slow with reasonable grid space.  To alleviate this effect, we adopt a smooth extrapolation strategy by symmetrically appending gently decaying segments at both ends to make the extended value align with the minimum of $f(x)$, and the extended derivative smoothly approach to zero within the extended region.  This strategy significantly improves the condition number of the coefficient matrix of the discrete system, ensuring the numerical stability of the iterative solution, without any altering the primal spectrum within the original interval.  Fig. \ref{diag_19} shows the illustrations.

\subsection{Postprocessing: From Cutting Curve to Mode Identification}
Upon the above course when the algorithm converges, we obtain the optimal discrete vector $\pmb g^*$ above which significant intrinsic modes corresponds to the parts of the original spectrum $\pmb f$ that protrude distinctly.  However, there are some residual background fluctuations at the bottom that hinders one to objectively distinguish those modes.  To solve this problem, we assume that the residual consists of two components: first, a background fluttering component originate from noise or minor undulations, of which the amplitude concentrates within a very small range; second, significant protrusions from original modes, which are larger compared with the former and more scattered.  In this view, when investigating the value of the residual spectrum value distribution, we can intuitively find that most values concentrate near the level where fluttering component resides.  This inspires us to employ Kernel Density Estimation (KDE) to non-parametrically model the spectrum value distribution as a probability density function:
\begin{equation}
\phi(t)=\cfrac 1 {mh}\sum\limits_{i=1}^m K\left (\cfrac {t-r_j} {h} \right )
\end{equation}
where $K(\cdot)$ is Gaussian function, with $h$ the bandwidth and $r_j$ the discrete nodes in the residual vector $\pmb r$, and we evaluate the best threshold to eliminate the fluttering noise as
\begin{equation}
t^* = \underset{t}{\arg\max}\phi(t)
\end{equation}
so that we can isolate each mode in the function $f(x)-g^*(x)-t^*$.

After that, we use DBSCAN to merge adjacent peaks with low energy at the bottom for $f(x)-g^*(x)-t$.

\begin{figure}
  \centering
  \subfloat[]{
    \includegraphics[width=0.45\columnwidth]{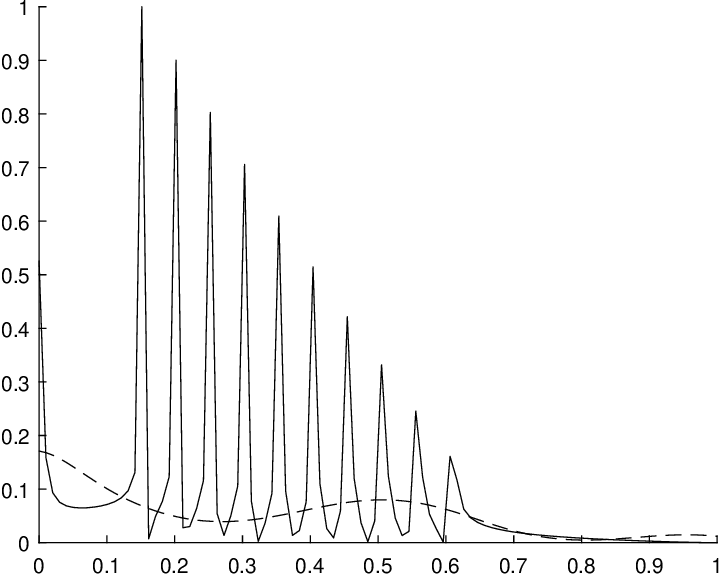}
    \label{fig:3a}
  }
  \subfloat[]{
    \includegraphics[width=0.45\columnwidth]{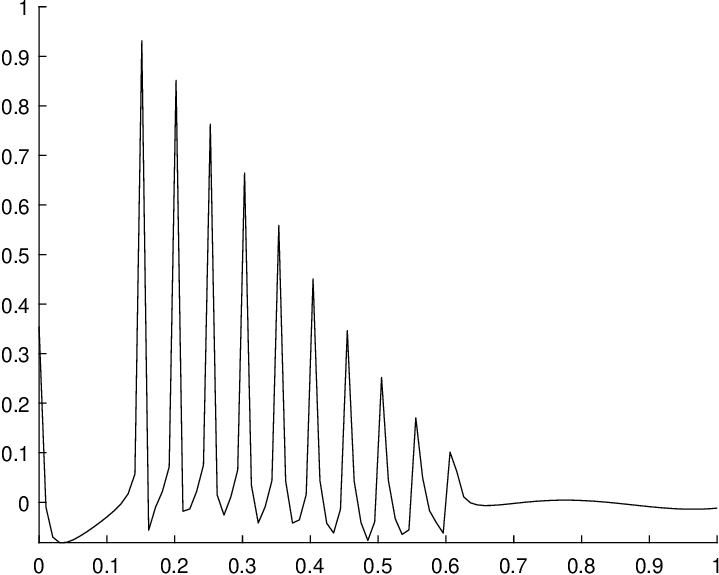}
    \label{fig:3b}
  }
\\
  \subfloat[]{
    \includegraphics[width=0.45\columnwidth]{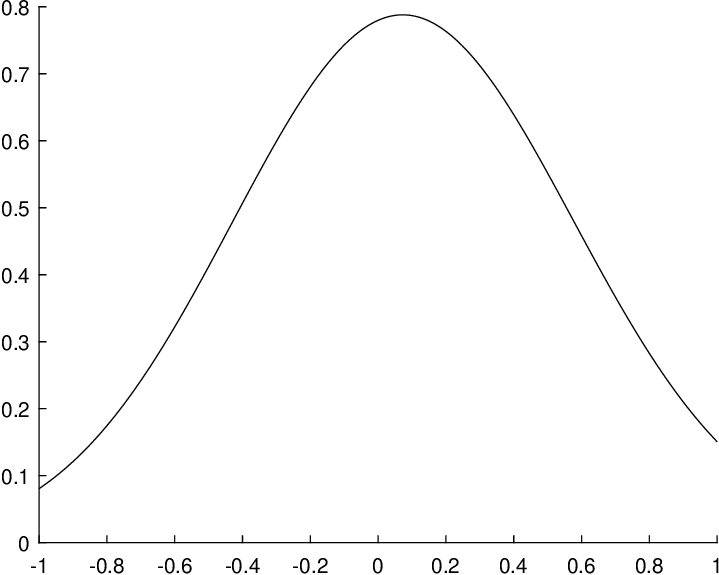}
  \label{fig:3c}
  }
  \subfloat[]{
    \includegraphics[width=0.45\columnwidth]{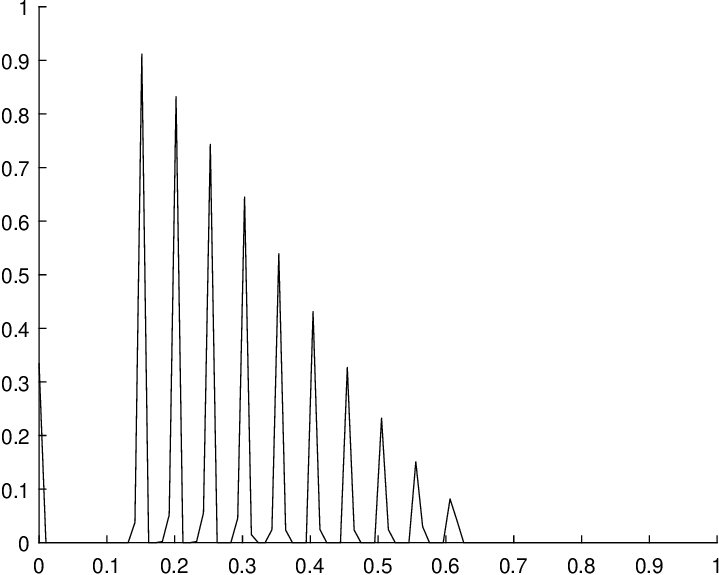}
    \label{fig:3d}
  }
\caption{Illustration of the non-consistence of the error between the converged optimal Cutting Curve and the original function.  Fig. \ref{fig:3a} shows the converged $g(x)$ and Fig. \ref{fig:3b} shows $f(x)-g(x)$.  Note that since $g(x)$ does have gap to the lower bound of $f(x)$ so that the bottom of $f(x)-g(x)$ is almost not on a horizontal line everywhere, which indicates there are some non-zero gaps.  These non-zero gaps will have a significant impact on the stability of the subsequent modal number solution.  \ref{fig:3c} shows the estimate kernel density pertaining the gap at the bottom.  \ref{fig:3d} shows the estimated $f(x)-g(x)$ after removal of the gap at the bottom.} 
\label{diag3}
\end{figure}

\subsection{The Whole Workflow}
Based on the above discrete framework, each step of Algorithm \ref{alg:algorithm} can be implemented by specific linear algebraic operations.  The whole workflow can be summarized as Algorithm \ref{alg:workflow}.  We first input the original spectrum $f(x)$, then we initialize weight functions $\alpha(x), \beta(x)$, penalty functions $\lambda(x), \mu(x)$ and the Cutting Curve function $g(x)$.  After that we smoothly extend $f(x)$ at both ends and begin Algorithm \ref{alg:algorithm} until it converges.  Then we do Kernel Density Estimation to figure out the threshold to eliminate the fluttering noise, and aggregate adjacent minor peaks into single coherent intrinsic mode by DBSCAN.

\begin{algorithm}[H]
\caption{The Whole Workflow}
\label{alg:workflow}
\begin{algorithmic}[1]
\STATE Input $\{f(x)\}$
\STATE Initialize $\alpha(x), \beta(x), \lambda(x), \mu(x), g(x)$
\STATE Smoothly Extending $f(x)$ at both ends.
\REPEAT
   	\STATE Run the kernel process of Algorithm \ref{alg:algorithm}
\UNTIL{convergence: $\|g^{n} - g^{n-1}\|_2^2 / \|g^{n}\|_2^2 < \epsilon$}
\STATE Estimate KDE for the computed $g(x)$.
\STATE Compute best threshold $t_{best} = \arg\max \text{KDE}_{g(x)}(t)$.
\STATE Subtract noise-gap threshold from $g(x)$ by $g(x) - t_{\text{best}}$.
\STATE Use DBSCAN to merge spatially adjacent minor peaks into a single coherent intrinsic mode.
\STATE Get $k$, $\{\omega_k\}$ from non-zero intervals.
\STATE Begin VMD procedure \cite{Dragomiretskiy} with $k$ and $\{\omega_k\}$
\STATE Output $\{u_k\}$
\end{algorithmic}
\end{algorithm}

\section{Experiments}
In this section, we show some experiments to validate the applicability of our methods.  Our experiments focus on 3 goals: (1) To validate the convergence and robustness of our method under synthetic signals like single-modality signal, dual-modality signal, segmented signal, comb-spectrum signal, dense-modality signal to demonstrate the extensive adaptability of the model across modes with various distribution.  (2) To compare our method against the most challenge-able method Successive Variational Mode Decomposition (SVMD), another recursive, adaptive method determining the intrinsic modes, exhibiting  the advantage of our method on not generating redundant modes or eliminating necessary modes.  (3) To understand the applicability of our method in real-world signals like electrocardiogram (ECG) signals, showing the accuracy of decomposition by investigation of the signals reconstructed from the decomposed modes.

All experiments are conducted under the same parameter settings unless otherwise stated.  The stopping criterion is set as
$\dfrac {\|g_n-g_{n-1}\|} {\|g_n\|} < 10^{-4}$.  To adaptively determine the effective frequency band, we compute the cumulative energy distribution of the single-sided power spectrum.  The cut-off frequency $f_c$ is defined such that
\begin{equation}
\displaystyle \cfrac{\int_0^{f_c} S(f) \text df}{\int_0^{+\infty} S(f) \text df} = 0.95, 
\end{equation}
which is found by scanning from the Nyquist frequency downward until the energy condition is met. The resulting $f_c$ effectively isolates the dominant spectral components while discarding high-frequency noise tails.  The truncated spectrum is then down‑sampled to 100 points, and the optimal cutting curve is determined on this down‑sampled spectrum.

\begin{figure}[!htbp]
  \centering
  \subfloat[]{
  	\includegraphics[width=0.45\columnwidth]{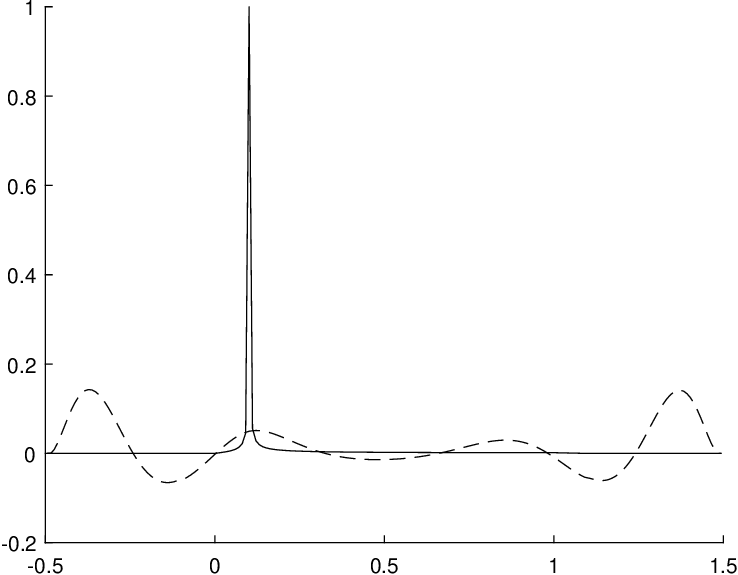}
  }
  \subfloat[]{
  	\includegraphics[width=0.45\columnwidth]{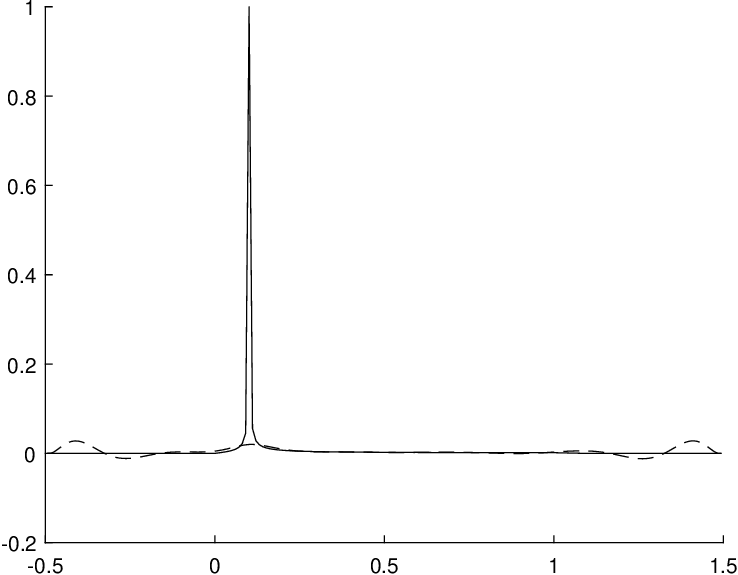}
  }
  \vspace{3pt}
  \subfloat[]{
  	\includegraphics[width=0.55\columnwidth]{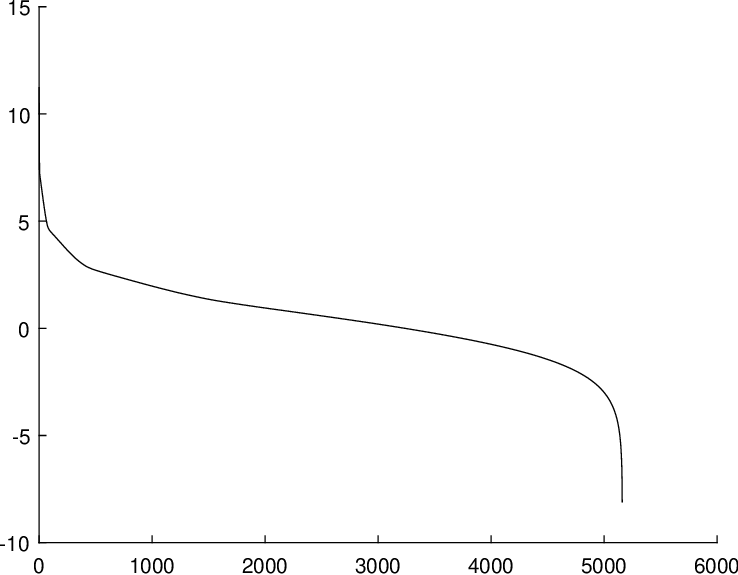}
  }
  
  \caption{The initial (a), final convergent state (b) and convergence curve in logarithmic coordinate system (c) with respect to the error for finding the optimum Cutting Curve of FFT with respect to time series $y(t)=100\sin 20\pi t$.  The iteration lasts for 5161 steps and costs 1.88 seconds.}
  \label{diag4}
\end{figure}

\subsection{Experiments on Finding the Cutting Curve}

We choose five typical signals including single mode signal, multi-mode signal, piecewise continuous signal, signal that are not necessarily narrow-banded, composite signal in which the center frequency of its components are very close.  We only illustrate the initial and final state of the iteration for the evaluation of the cutting curve, as well as the convergent curve with respect to the error in logarithm to indicate the number of iteration steps in the caption, and leave the detail of decomposition result in Appendix \ref{Experiment}.

Experiment 1:  The first example is a single mode signal, on the other words, the signal is narrow-banded and only contains one center frequency.  Here we choose $y(t)=100\sin 20\pi t$.  As we see the normalized center frequency is $0.1$.  The algorithm takes 5161 steps and 1.88 seconds to converge.

Experiment 2:  The second example is a multi-mode signal, $y(t)=10\cos(10\pi t)+20\sin(20\pi t)$, which has two modes, exactly.  The normalized center frequency is 0.05 and 0.1, which is relatively very close to each other.  However, our algorithm can still survive where center frequencies of different modes are very close, since from 6000th iteration and on, the cutting curve forms a small bulge at the close frequency points, which just separates the two frequency points.

\begin{figure}[!htbp]
  \centering
  \subfloat[]{
  	\includegraphics[width=0.45\columnwidth]{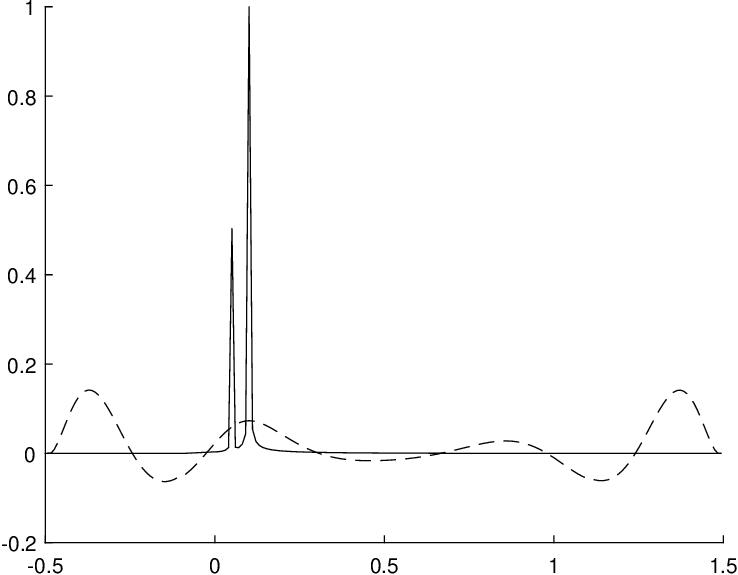}
  }
  \subfloat[]{
  	\includegraphics[width=0.45\columnwidth]{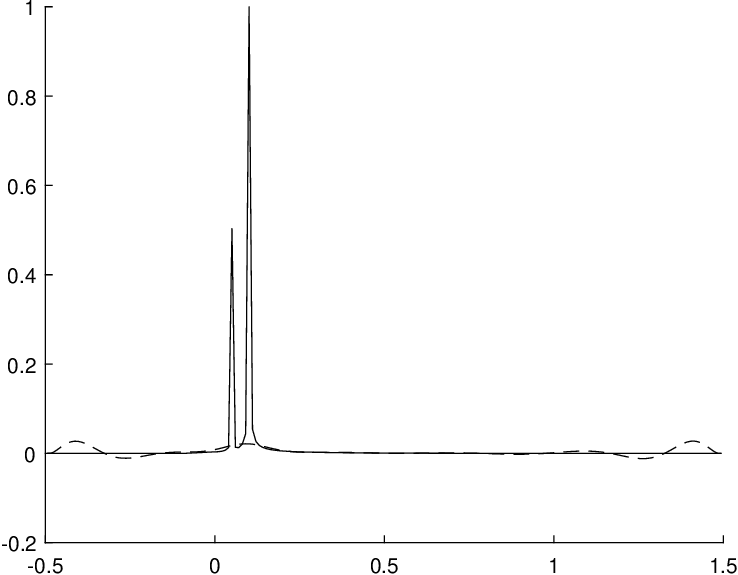}
  }
  \vspace{3pt}
  \subfloat[]{
  	\includegraphics[width=0.55\columnwidth]{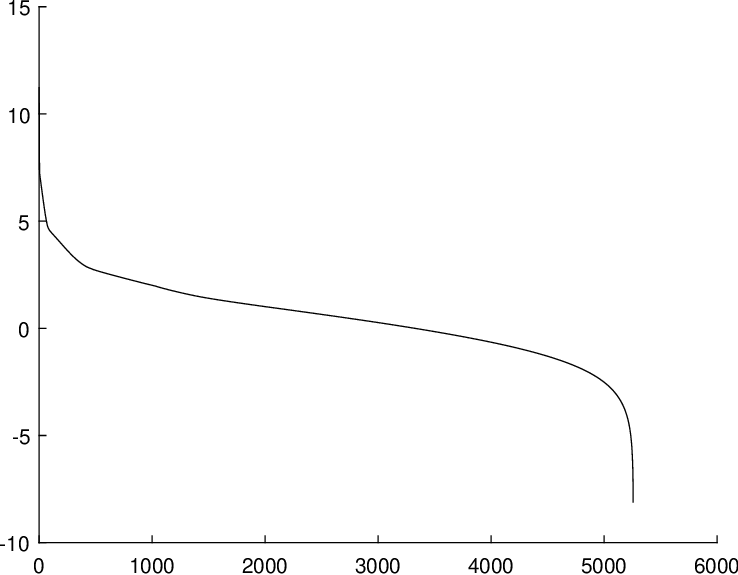}
  }
  \caption{The initial (a), final convergent state (b) and convergence curve in logarithmic coordinate system (c) for finding the Cutting Curve of FFT with respect to FFT of $y(t)=10\cos 10\pi t+20\sin 20\pi t$.  The iteration lasts for 5258 steps and costs 1.89 seconds.}
  \label{diag7}
\end{figure}

Experiment 3:  The next example is a little complicated, it has four modes totally, we choose
\begin{equation}
\footnotesize
    y(t) = \left\{
    \begin{aligned}
        6t^2 + \cos(10\pi t + 10\pi t^2) + \cos(60\pi t), &\quad t \in [0,0.5] \\
        6t^2 + \cos(10\pi t + 10\pi t^2) + \cos(80\pi t - 10\pi), &\quad t \in (0.5,1]
    \end{aligned}
    \right.
\label{seg_func}
\end{equation}
as our signal.  It has a low frequency component, $6t^2$, two pure harmonic components, $\cos 60\pi t, \cos (80\pi t-10\pi)$, in different intervals that does not intersect with each other, respectively, and one narrow banded component, $\cos(10\pi t+10\pi t^2)$, in the whole interval.  To our surprise, the iteration converges even more quickly than those signals that has fewer modes.  In our cases, it takes about 3026 iterations to achieve the convergence.

\begin{figure}[!htbp]
  \centering
  \subfloat[]{
  	\includegraphics[width=0.45\columnwidth]{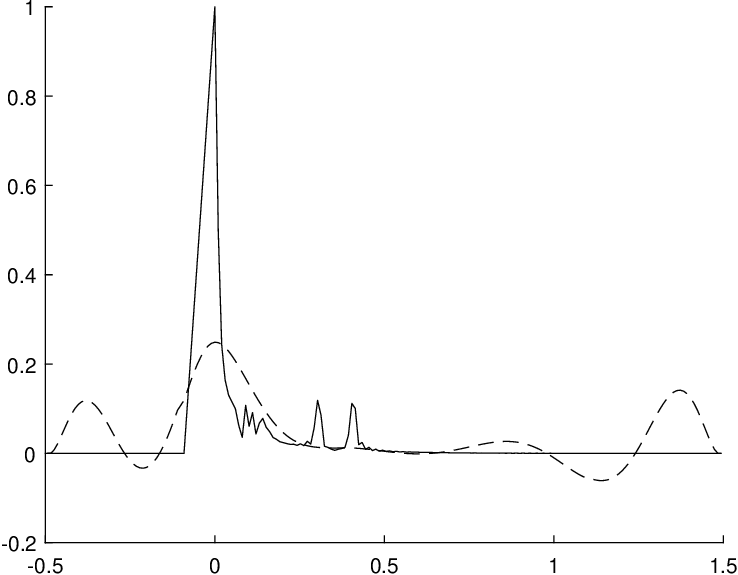}
  }
  \subfloat[]{
  	\includegraphics[width=0.45\columnwidth]{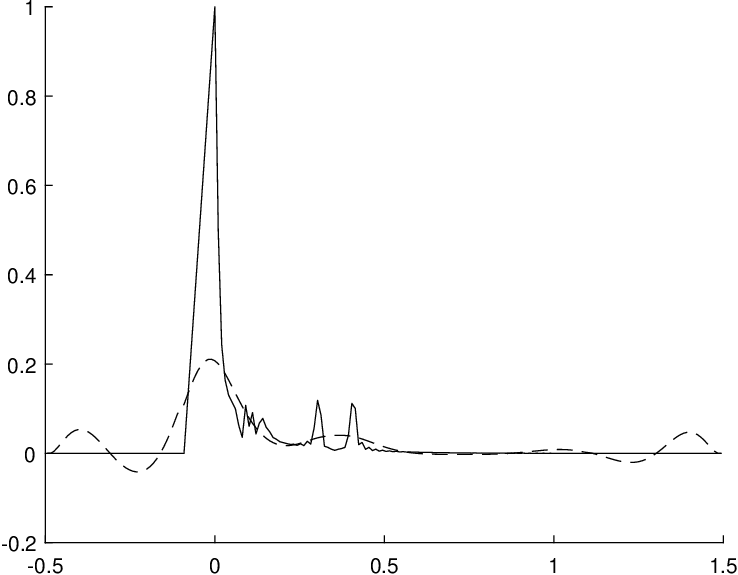}
  }
  \vspace{3pt}
  \subfloat[]{
  	\includegraphics[width=0.55\columnwidth]{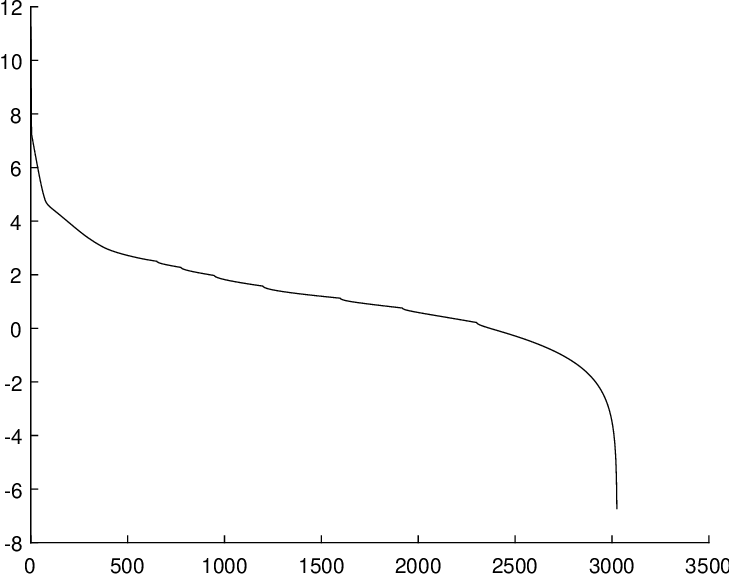}
  }
  \caption{The initial (a), final convergent state (b) and convergence curve in logarithmic coordinate system (c) for finding the Cutting Curve of FFT with respect to time series (\ref{seg_func}).  Note that although the spectrum of second mode, $\cos 10\pi t$, is affected by that of the first mode, $6t^2$, the computed Cutting Curve precisely captured the trend of the lower bound so that all the modes can be filtered out.  The iteration lasts for 3026 steps and costs 1.07 seconds.}
  \label{diag10}
\end{figure}

Experiment 4:  Now we show some examples on signals that conflict with the narrow-band assumption.  We take $\cfrac 1 {1.2+\cos 2\pi t} + \cfrac {\cos (32\pi t +0.2\cos 64\pi t)}{1.5+\sin(2\pi t)}$.  As we see that, the spectrum of the signal exhibits two main peaks, however the signal is not a completely narrow-band signal since it exhibits comb-shaped spectrum with respect to harmonic components, although the energy of those harmonic components are relatively very small(Fig. \ref{diag13}).  Our algorithm extracts three components of the signal, despite that in the mainstream view there should be only two IMFs.   However, since it is not a narrow-band signal, the difficulty of decomposition objectively exists since one can not even provide a specific standard to determine whether these small harmonic components should be considered as an intrinsic mode.  We also observe that, even in \cite{Dragomiretskiy,Mojtaba} when carefully zoomed in, there does exist small oscillations in the first decomposed component which indicates the non-purity of it.  Despite the fact that our results has three components, every separate component exhibits better purity than the pioneer works.  On the other side, from the perspective of reconstruction error, it is shown in this example that, results generated by algorithm in \cite{Dragomiretskiy} with both 2 and 3 modes has reconstruction error as twice as large as ours, which also indicates that our algorithm has better performance in this case.

\begin{figure}[!htbp]
  \centering
  \subfloat[]{
  	\includegraphics[width=0.45\columnwidth]{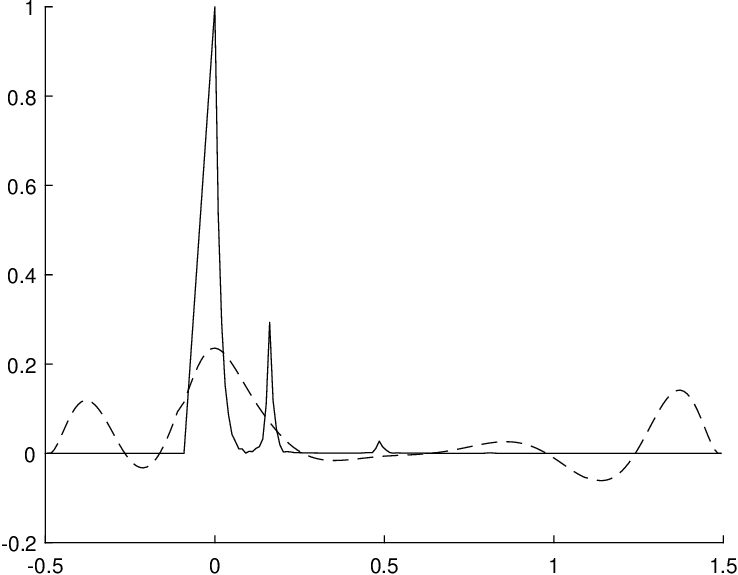}
  }
  \subfloat[]{
  	\includegraphics[width=0.45\columnwidth]{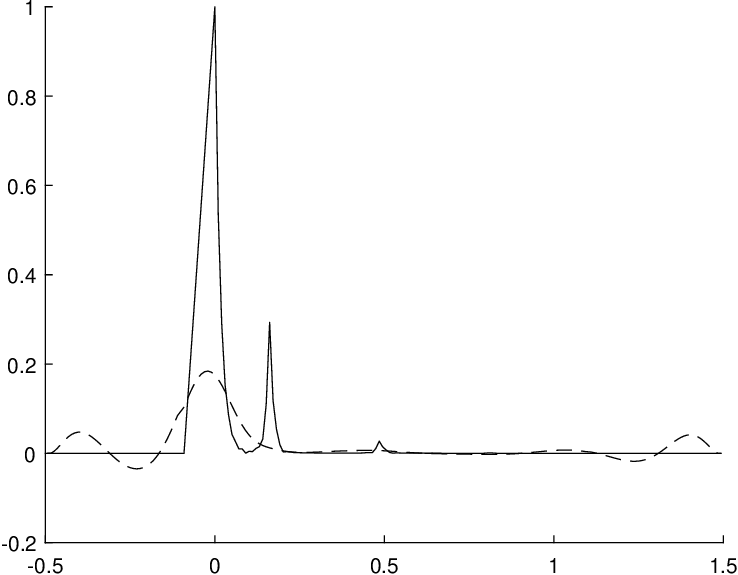}
  }
  \vspace{3pt}
  \subfloat[]{
  	\includegraphics[width=0.55\columnwidth]{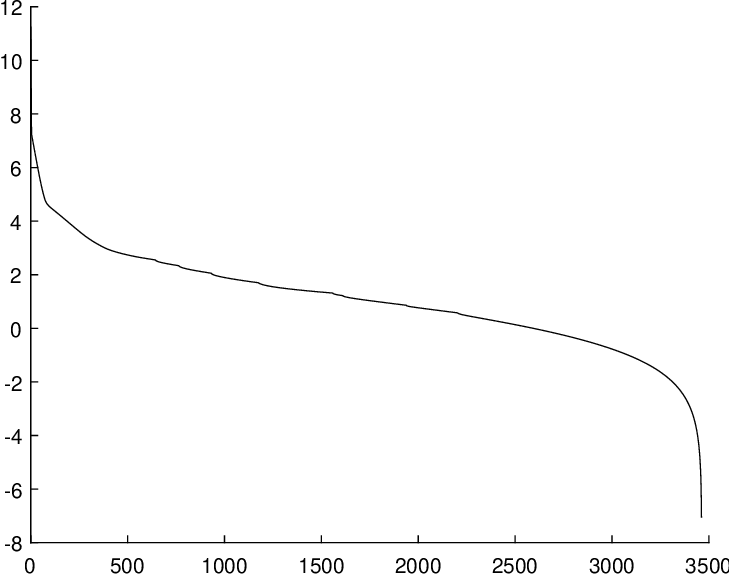}
  }
  \caption{The initial (a), final convergent state (b) and convergence curve in logarithmic coordinate system (c) for finding the cutting curve of function $\frac 1 {1.2+\cos 2\pi t} + \frac {\cos (32\pi t +0.2\cos 64\pi t)}{1.5+\sin(2\pi t)}$.  The iteration lasts for 3463 steps and costs 1.25 seconds.}
  \label{diag13}
\end{figure}

\begin{figure}[!htbp]
\centering
\subfloat[]{
    \includegraphics[width=0.95\columnwidth]{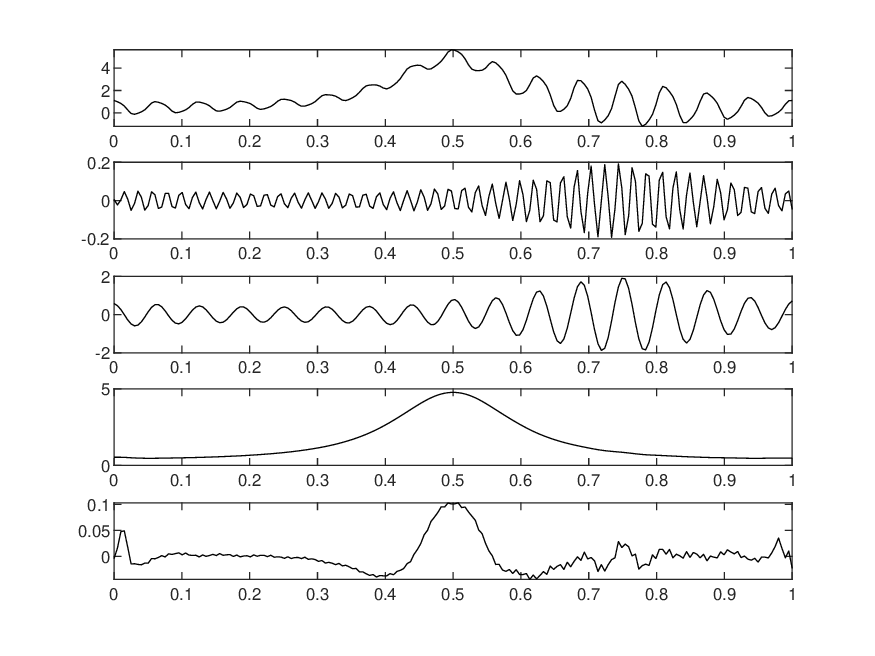}
}
\vspace{3pt}
\subfloat[]{
    \includegraphics[width=0.95\columnwidth]{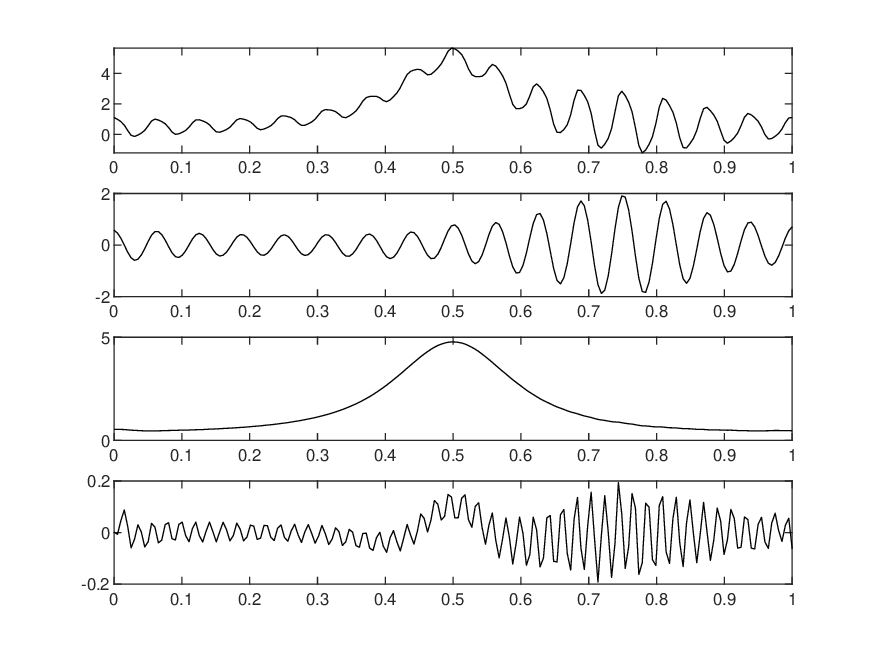}
}
\caption{to be continued}
\label{diag15ab_article}
\end{figure}

\begin{figure}[!htbp]
\centering
\ContinuedFloat
\subfloat[]{
    \includegraphics[width=0.95\columnwidth]{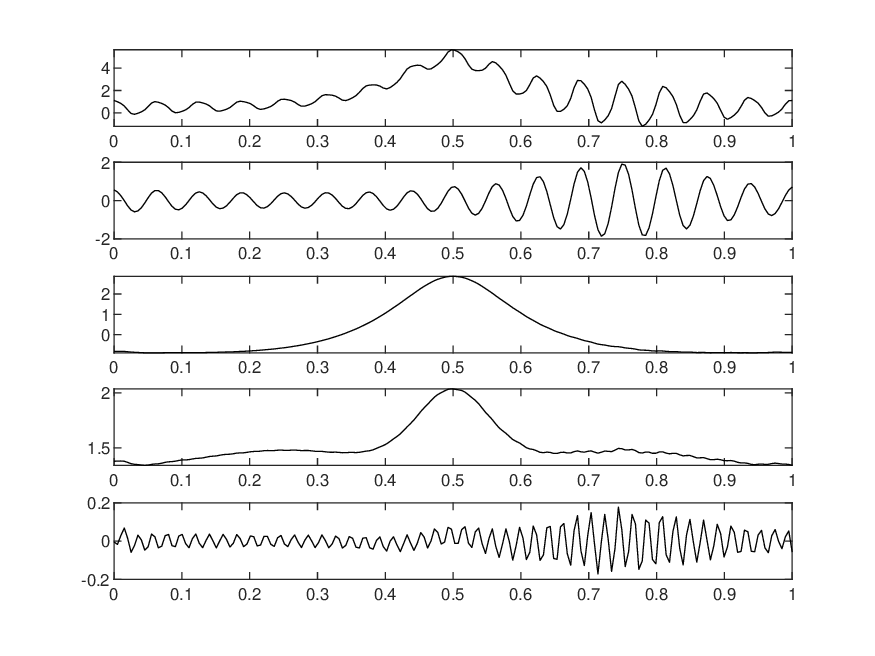}
\label{diag15c_article}
}
\caption{The decomposition result (a) of $\cfrac 1 {1.2+\cos 2\pi t} + \cfrac {\cos (32\pi t +0.2\cos 64\pi t)}{1.5+\sin(2\pi t)}$ from the subsequent VMD, using the found number of modes and center frequency as initialization, compared with original VMD results with IMFS number 2 (b) and 3 (c) respectively.  The sub-figure from top to bottom are respectively the source signal, the decomposed modes(there are three) and the residual.  As compared with (b) and (c), for original VMD, both the results has reconstruction error twice as large as ours.}
\end{figure}

Experiment 5:  Next we show even when the central frequencies are very close, our algorithm can also work well.  We take signals as $y(t)=6t + \displaystyle\sum\limits_{i=1}^{10} (13-i)\cos[(20+10i)\pi t]$.  The signal has one low-frequency component and 10 alternating current components with different amplitudes, of which the minimum difference between frequency gaps is only 0.04 after normalization and our algorithm can still survive in figuring out all the components. 
\setcounter{subfigure}{0}
\begin{figure}[!htbp]
  \centering
  \subfloat[]{
  	\includegraphics[width=0.45\columnwidth]{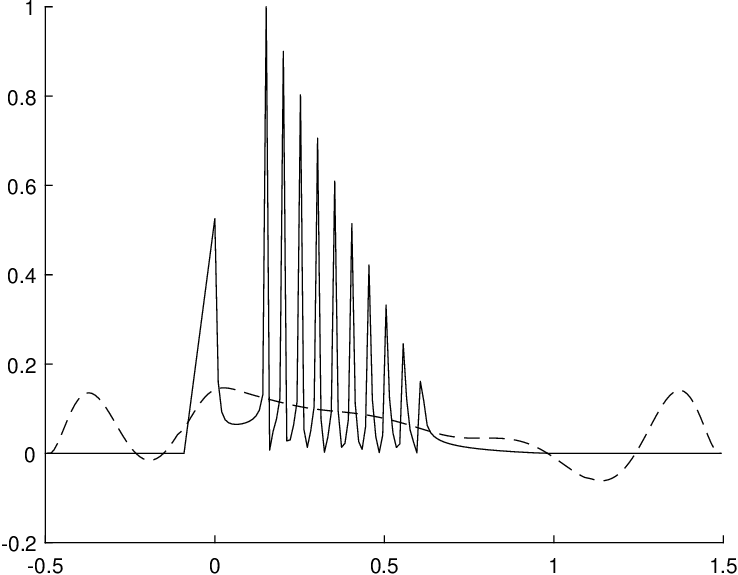}
  }
  \subfloat[]{
  	\includegraphics[width=0.45\columnwidth]{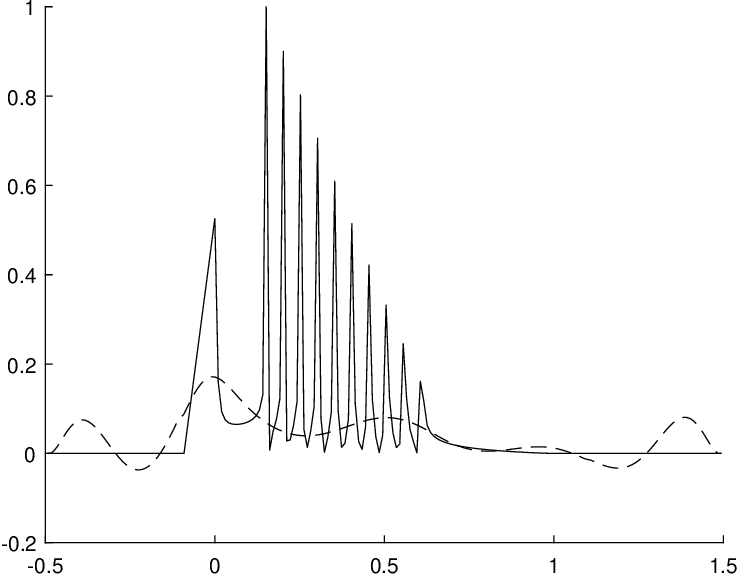}
  }
  \vspace{3pt}
  \subfloat[]{
  	\includegraphics[width=0.55\columnwidth]{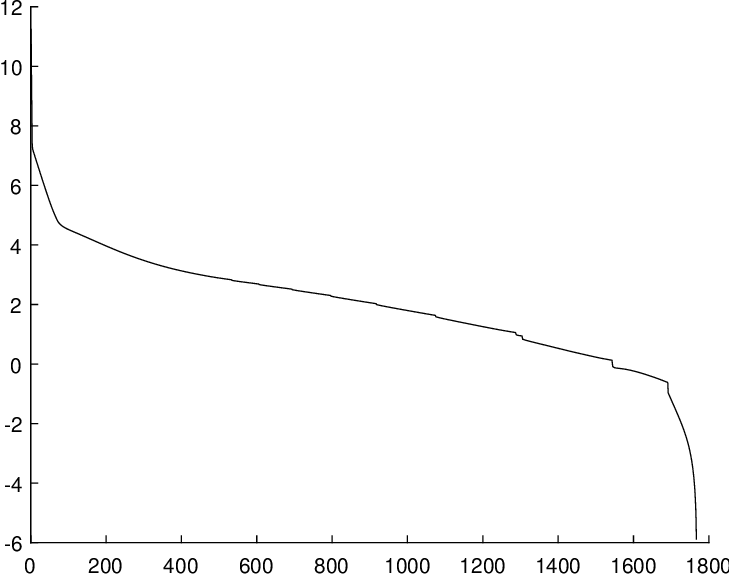}
  }
  \caption{The initial (a), final convergent state (b) and convergence curve in logarithmic coordinate system (c) for finding the cutting curve of function 
 $y(t)=6t + \displaystyle\sum\limits_{i=1}^{10} (13-i)\cos[(20+10i)\pi t]$, The iteration lasts for 1768 steps and costs 0.64 seconds.}
  \label{diag16}
\end{figure}
\clearpage
\subsection{Experiments on Central Frequencies}
In order to prove that our algorithm can find frequencies close to the practical central frequencies for each IMF component, we compare our results from the above Example 1 to Example 5, with the results of the original VMD process with the same number of IMF, to keep away the impact to the performance of VMD from the initial frequency value that we provide to the following VMD procedure.  We first get the frequency estimation from our algorithm and the result of pure VMD procedure with the same number of IMFs respectively, and then sort them in order, to observe the difference between the results of ours and the corresponding values from VMD.

\begin{figure}[hbtp]
\centering
\includegraphics[scale=0.65]{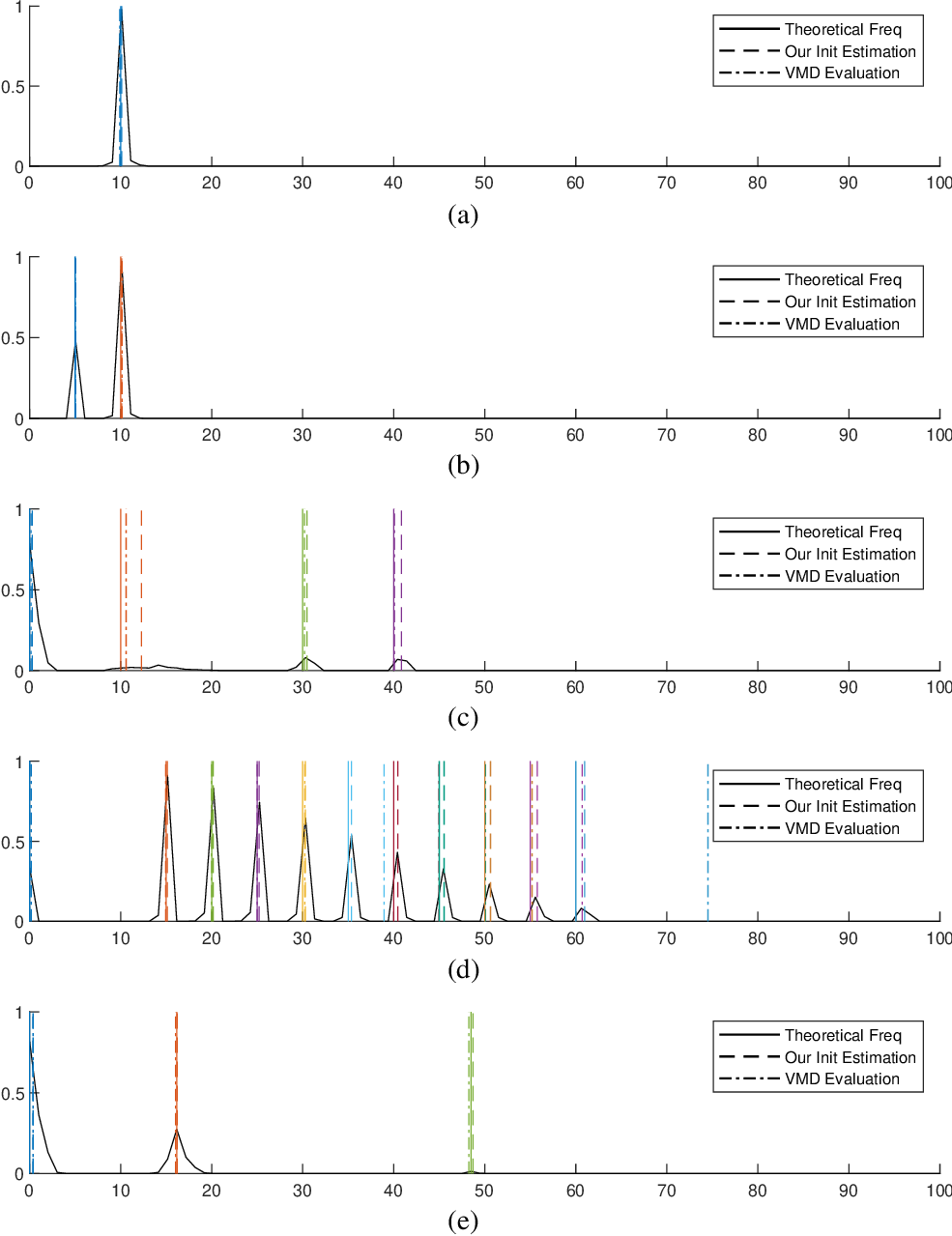}
\caption{The theoretical frequencies (solid line), our estimated frequencies (dashed line) and the VMD-evaluated frequencies (dash-dotted line) after iteration with respect to Experiment 1-5 (a)-(e).}
\label{rows_5_col_1}
\end{figure}

Fig. \ref{rows_5_col_1} shows the theoretical center frequencies, our estimated frequencies and VMD-evaluated frequencies with respect to the experimental synthetic signals in order, where the solid line stands for the theoretical frequency of each mode, the dashed line stands for our estimation and the dotted-dash line stands for the ones after VMD's evaluation, and different colors stands for different decomposed modalities.  It can be seen that our estimated values are very close to the actual modal frequencies, and the accuracy of the estimation is independent of the number of modes to be decomposed, although for VMD procedure there are cases in which the central frequencies of modes with small energy will deviate from the theoretical.

\begin{figure}[!htbp]
\centering
  \includegraphics[width=0.6\columnwidth]{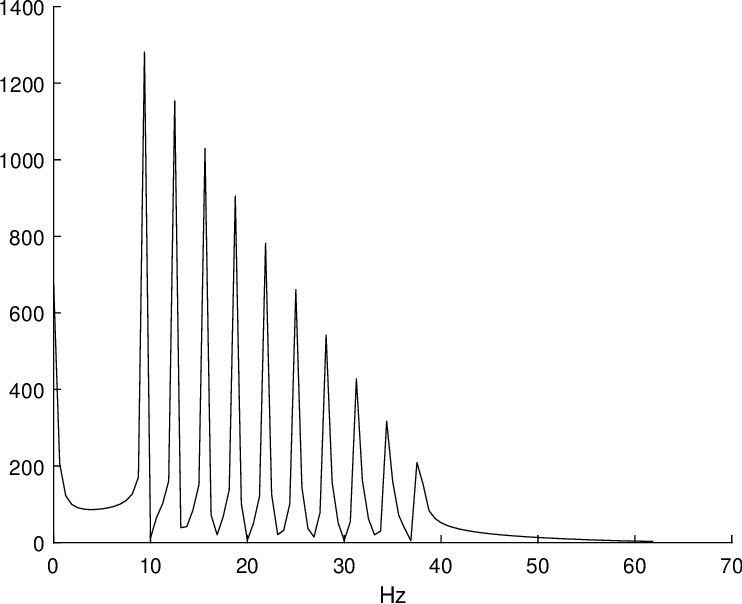}
  \caption{The original spectrums for signal $y(t)=6t + \displaystyle\sum\limits_{i=1}^{10} (13-i)\cos[(20+10i)\pi t]$.}
  \label{SVMD_spectrum_1_original}
\end{figure}

\begin{figure}[!htbp]
\centering
  \includegraphics[width=0.75\columnwidth]{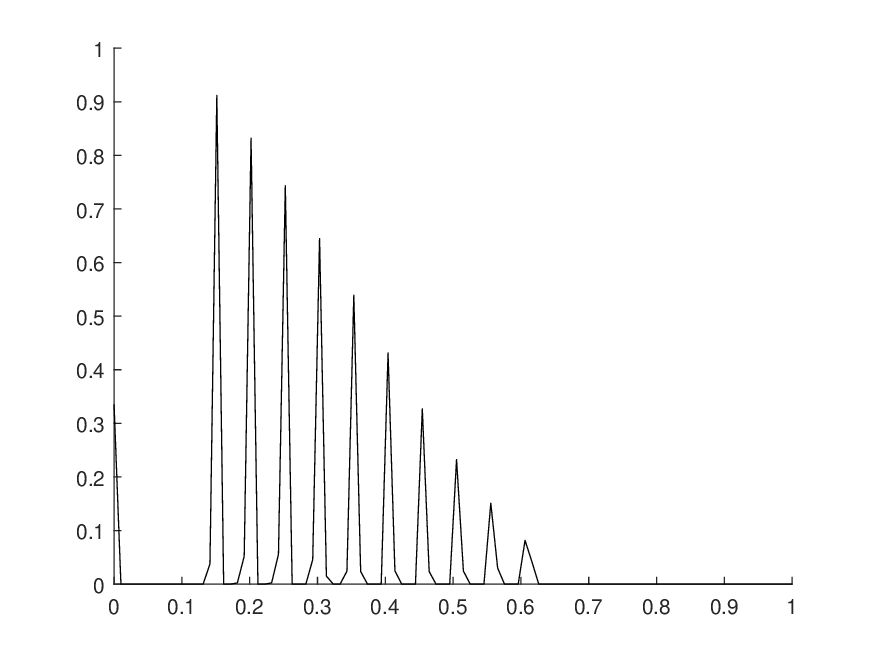}
  \caption{The decomposed spectrums for signal $y(t)=6t + \displaystyle\sum\limits_{i=1}^{10} (13-i)\cos[(20+10i)\pi t]$ by our method.}
  \label{SVMD_spectrum_1_ours}
\end{figure}

\subsection{Experiments Compared with Variational Mode Decomposition}

We noticed that, the Successive Variational Mode Decomposition(SVMD) method is another competitive method that can also automatically determine the number of IMFs for VMD\cite{Mojtaba} without any prior knowledge although it also lacks of rigorous proof on the convergence.  This method retrieves the modes one after another by recursively finding the $L$th mode from the residual, assuming the first $L-1$ modes are found after the last turn, until the final residual can be ignored with respect to some criterion set by the algorithm.  However, since the method is in essence recursive, the accuracy for the next mode to be evaluated is highly depended on the  residual determined by previous decomposition in all former turns, which will inevitably introduce accumulated error during the propagation.  We used official MATLAB implementation on SVMD \cite{Matlab} and observed that for narrow-banded signals wherein the center frequencies of different components are very close, the SVMD method may either introduce some redundant modes or drop some significant modes in their decomposition results, yet the phenomenon is hardly observed in our algorithm.

\begin{figure}[!htbp]
\centering
  \includegraphics[width=0.8\columnwidth]{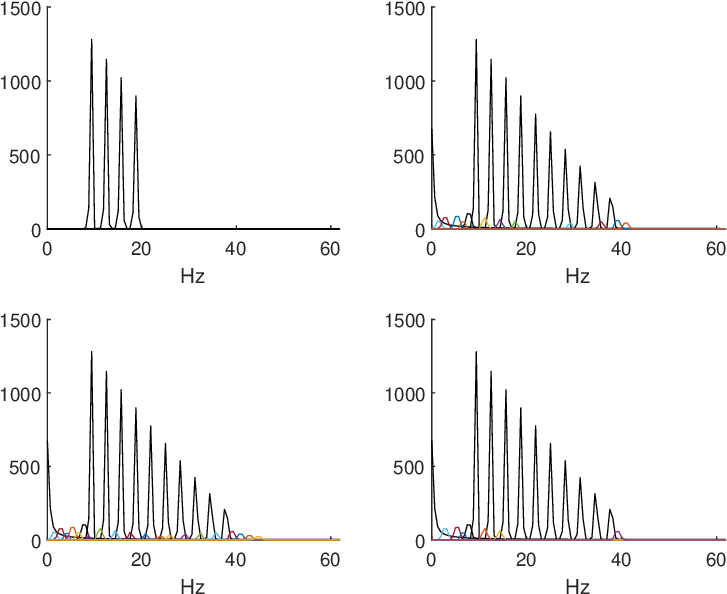}
  \caption{The decomposed spectrums for signal $y(t)=6t + \displaystyle\sum\limits_{i=1}^{10} (13-i)\cos[(20+10i)\pi t]$ by SVMD under different stopping criteria by considering noise(northwest), exact decomposition(northeast), Bayesian Estimation(southwest), power of the last mode(southeast).}
  \label{SVMD_spectrum_1}
\end{figure}

\begin{figure}[!htbp]
\centering
  \includegraphics[width=0.75\columnwidth]{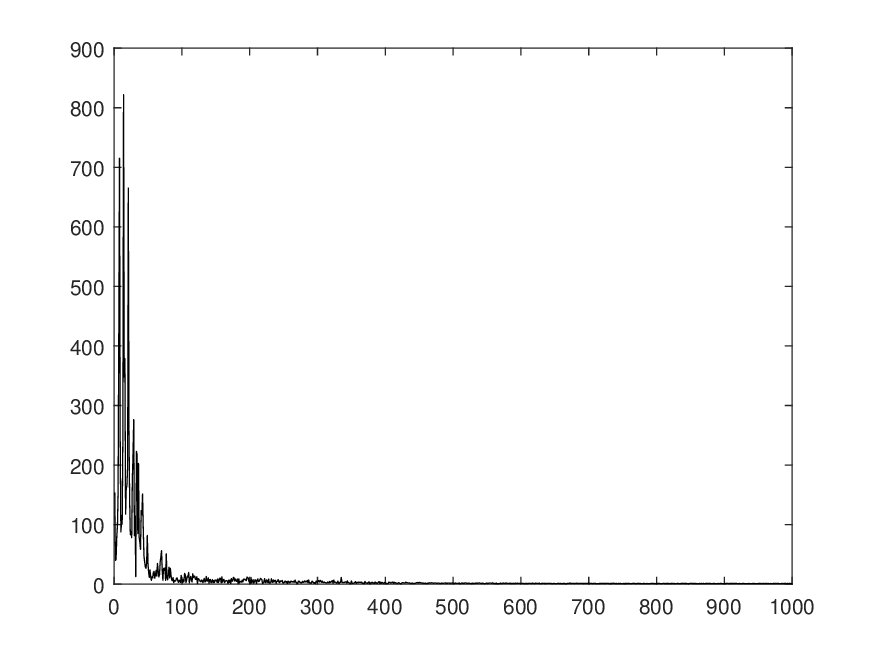}
  \caption{The original spectrums for signal of the 102th record in MIT-BIH Arrhythmia Database.}
  \label{SVMD_spectrum_2_original}
\end{figure}

In this section, we pick 2 typical signals for illustration.  The first one is $y(t)=6t + \displaystyle\sum\limits_{i=1}^{10} (13-i)\cos[(20+10i)\pi t]$, which we have investigated in the experiments for finding the Cutting Curve.  For the completeness of our comparison, we investigated all stopping criterion implemented in the source code of SVMD.  The method considering about eliminating noise(northwest) drops the base-frequency and a batch of high frequency, while method for extract reconstruction, method based on Beyesian Estimation, or method considering the power of last mode all result in some redundant modes with low energy on the decomposed spectrum (see colored spectrum at the bottom in each subplot), which are not the essential modes that should included in the decomposition result (Fig. \ref{SVMD_spectrum_1}).  In contrast, our method(Fig. \ref{SVMD_spectrum_1_ours}) accurately extracts all necessary modalities without any redundant ones.

\begin{figure}[!htbp]
\centering
  \includegraphics[width=0.7\columnwidth]{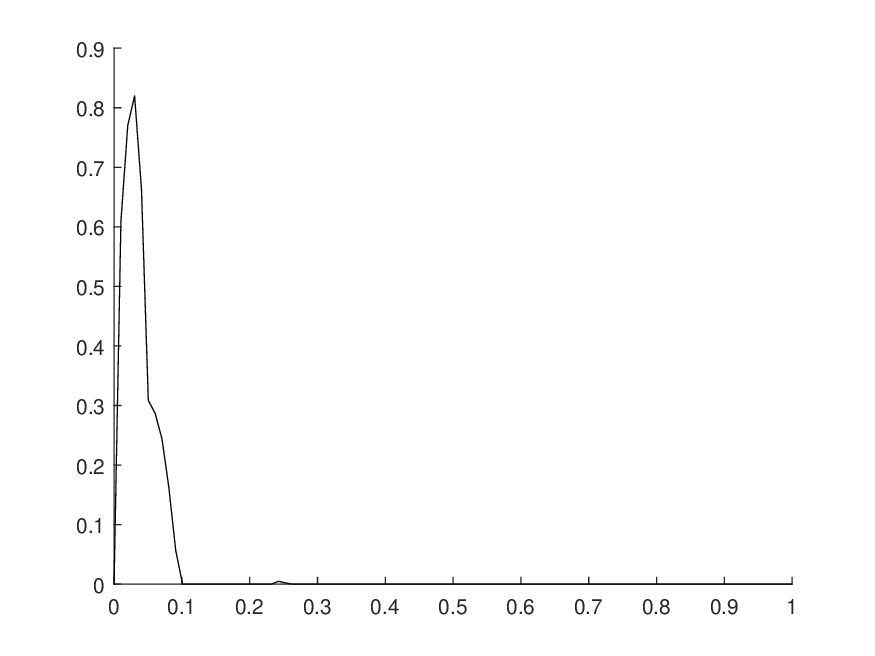}
  \caption{The decomposed spectrums for the V2 lead of the 102th record in MIT-BIH Arrhythmia Database by our method.}
  \label{SVMD_spectrum_2_ours}
\end{figure}

\begin{figure}[!htbp]
\centering
  \includegraphics[width=1.0\columnwidth]{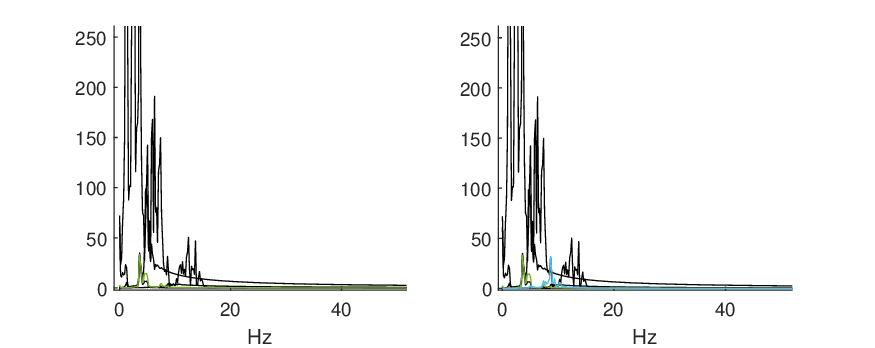}
  \caption{Local zoom of the decomposed spectrums for signal with respect to the V2 lead of the 102th record in MIT-BIH Arrhythmia Database, and decomposed by \protect\cite{Matlab} under different stopping criteria by considering Bayesian Estimation(left), power of the last mode(right).  Note that the figure for the one considering noise and the one with exact decomposition is missed since the SVMD failed to converge in 10 seconds.}
  \label{SVMD_spectrum_2}
\end{figure}

The second one is ECG signals from V2 lead of the 102th record in MIT-BIH Arrhythmia Database \protect\cite{MIT-BIH,Goldberger}.  The experiment result also shows similar phenomenon as stated in the former experiment, the method for exact decomposition fails to converge in more than 10 seconds so we cannot draw the spectrum of the corresponding figure (Fig. \ref{SVMD_spectrum_2}).  Our method in this case can also succeed in retrieving all the effective modes(Fig. \ref{SVMD_spectrum_2_ours}).

\clearpage
\subsection{Experiment on Real Signals}

To demonstrate the generality for different application domains, we also evaluate its performance on electrocardiogram (ECG) signals. ECG signals are representative examples of real-world non-stationary physiological signals composed of multiple characteristic waveforms (such as the P wave, QRS complex, and T wave), each occupying distinct frequency bands. Accurately decomposing an ECG signal into its constituent modes is therefore a challenging yet practically meaningful test for any adaptive mode decomposition framework.

We apply our method to selected records from the MIT-BIH Arrhythmia Database. Specifically, we extract 2000-sample segments from the MLII lead of Record 100, the V4 lead of Record 124, and the V2 lead of Record 102.  For each signal, our method automatically determines the number of intrinsic mode functions and their corresponding initial center frequencies, which are then applied to VMD for the final decomposition.

Fig.\ref{test_case_ecg1}-Fig.\ref{test_case_ecg3} shows the decomposition result of MLII lead of Record 100, the V4 lead of Record 124, and the V2 lead of Record 102.  In each image, the top subfigure represents the original signal, the bottom subfigure represents the residual signal, and the middle graphs represent the decomposed IMFs.  (b) stands for the spectrums of each decomposed IMF, (c) shows the final converged cutting curve for the spectral amplitude of each signal, and (d) shows the reconstructed signal (orange) with respect to the source signal (blue).  It is shown that our algorithm determines 4, 4, 2 modes for these three signals in order.  At the time domain we can see that in each decomposed IMFs in our chosen number of mode can reflect the basic characteristic that is helpful for the disease diagnosis.  For example, in Fig.\ref{test_case_ecg1}(a) we see the heartbeat rhythm from IMF 1-3 and IMF 4 amplifies the characteristic before and after the peaks.  The similar phenomenon also happens in Fig.\ref{test_case_ecg2}(a).  In Fig.\ref{test_case_ecg3}We see that each modality maintains the significant shape features of the original signal.  The Correlation Coefficients between the reconstructed and source signal is 0.9993, 0.9986, 0.9998 respectively.

Fig.\ref{vmd_ecg_performance} shows the performance comparison between our preferred mode number and center frequencies with respect to randomly chosen mode number from 2 to 10 for each investigated signal.  The first line is the result for MLII Lead of Record 100, the second line is the result for V4 Lead of Record 124 and the third line is the result for V2 Lead of Record 102.  (a),(c),(g) stands for the performance of orthogonality (average value of off-diagonal elements), (b),(e),(h) stands for the Correlation Coefficients performance and (c),(f),(i) stands for Power Ratio of Residual performance.  The red points stands for the performance under our chosen mode number and center frequencies, while the blue lines stands for the performance from randomly choosing mode number from 2 to 10.  The first 2 lines shows that our algorithm can choose a reasonable mode number protecting over-decomposition, while the third line shows that the V2 Lead of Record 102 seems under-decomposition.

However, Fig.\ref{ecg1_mode_7}-Fig.\ref{ecg3_mode_3} shows that by artificially increasing the mode number it seems increasing some redundant mode or similar mode.  In Fig.\ref{ecg1_mode_7} and Fig.\ref{ecg2_mode_7} the first 5 modes are similar in that they all reflect the heart beat rhythm, although they are almost orthogonal due to their frequency increasing relationship with each other.  However, this does not bring any improvement to the interpretability for the help of disease diagnosis.  Especially, in Fig.\ref{ecg3_mode_3} we see when setting mode number into 3, the decomposed modes can hardly reflect the basic characteristic of the source signal.  In contrast, the performance of orthogonality in turn improves with the increasing of the mode number as in Fig.\ref{ecg1_mode_7} and Fig.\ref{ecg2_mode_7} the former 5 modes are almost orthogonal due to their frequency increasing relationship with each other.  This indicates that evaluating the quality of signal decomposition cannot be solely based on performance indicators, but also on the physical meaning of the actual decomposed signal. To some extent, our modal determination method is naturally related to the geometric shape of the signal spectrum, and in most cases, the number of modes is determined based on the identifiability of the spectral peaks. The resulting modes can be proven valuable in engineering scenarios.

\begin{figure}
\centering
\includegraphics[scale=0.7]{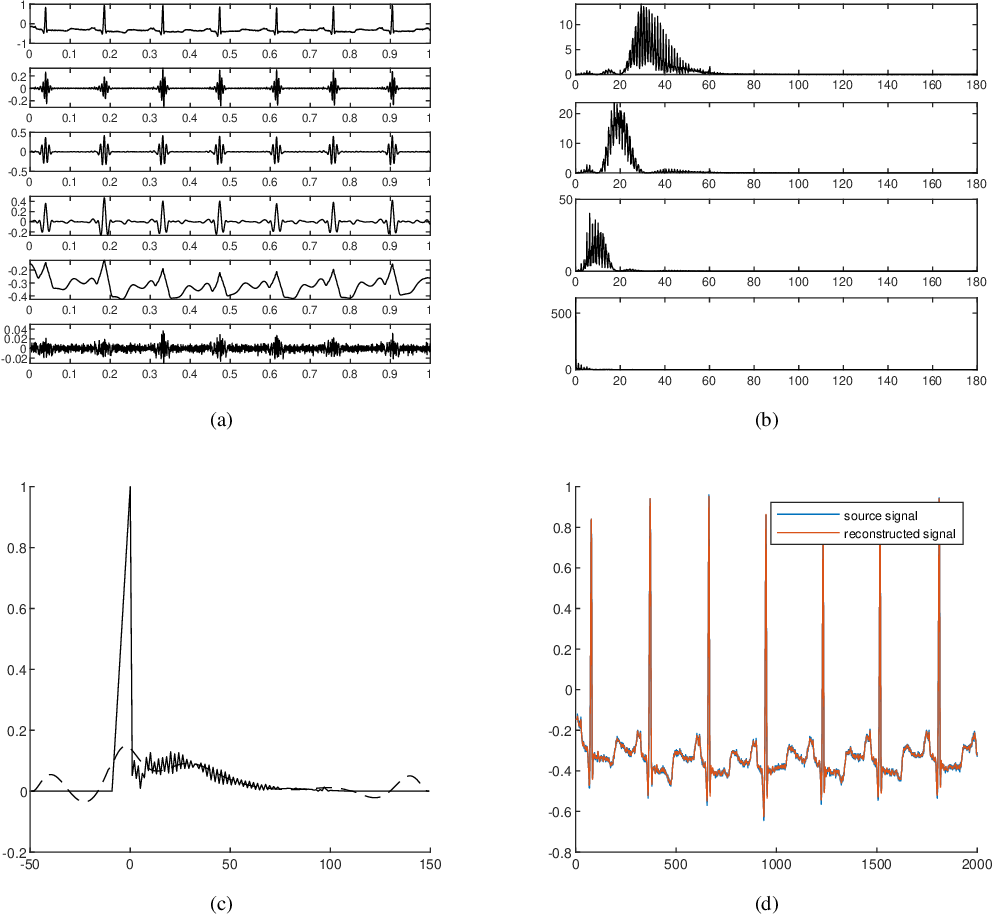}
\caption{The Decomposition result for MLII Lead of Record 100. (a) The original signal(top), the decomposed IMFs(middle 4), the residual(bottom), (b) The spectrum of each IMF, (c) The final converged cutting line, (d) The source signal and the reconstruction signal.}
\label{test_case_ecg1}
\end{figure}

\begin{figure}
\centering
\includegraphics[scale=0.7]{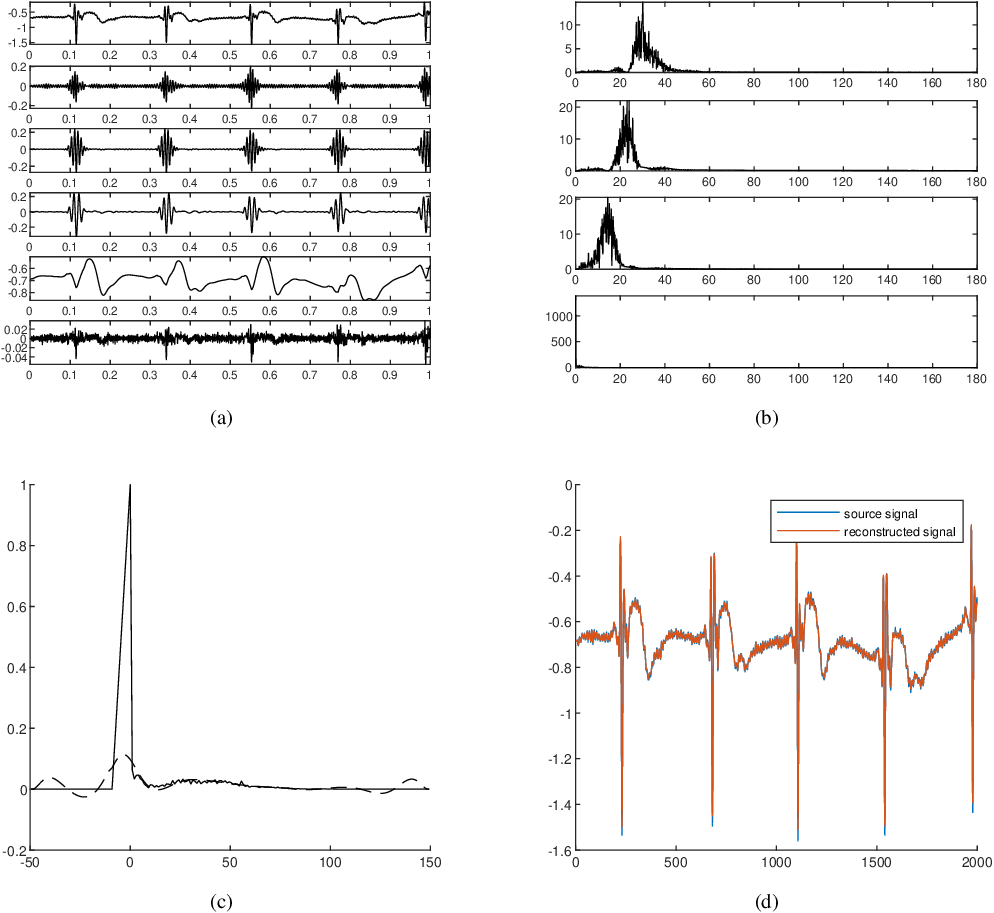}
\caption{The Decomposition result for V4 Lead of Record 124. (a) The original signal(top), the decomposed IMFs(middle 4), the residual(bottom), (b) The spectrum of each IMF, (c) The final converged cuttnig curve, (d) The source signal and the reconstruction signal.}
\label{test_case_ecg2}
\end{figure}

\begin{figure}
\centering
\includegraphics[scale=0.7]{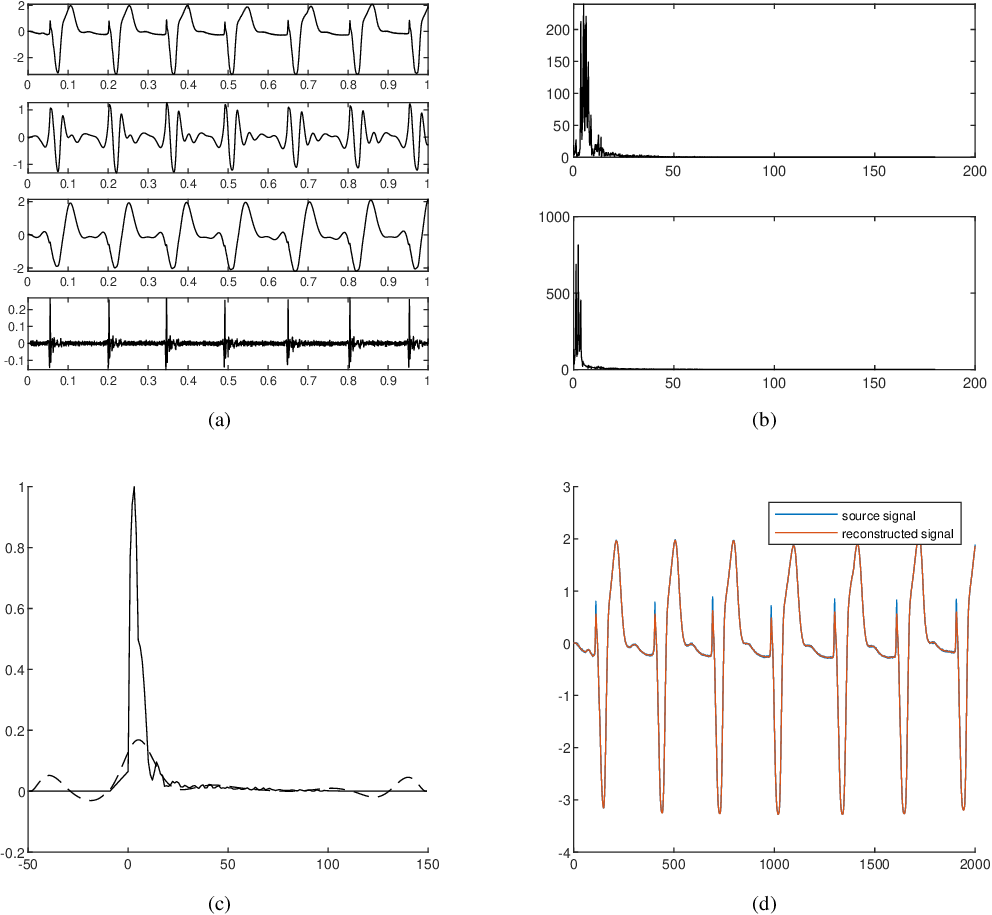}
\caption{The Decomposition result for V2 Lead of Record 102. (a) The original signal(top), the decomposed IMFs(middle 2), the residual(bottom), (b) The spectrum of each IMF, (c) The final converged cutting curve, (d) The source signal and the reconstruction signal.}
\label{test_case_ecg3}
\end{figure}

\begin{figure}
\centering
\includegraphics[scale=0.7]{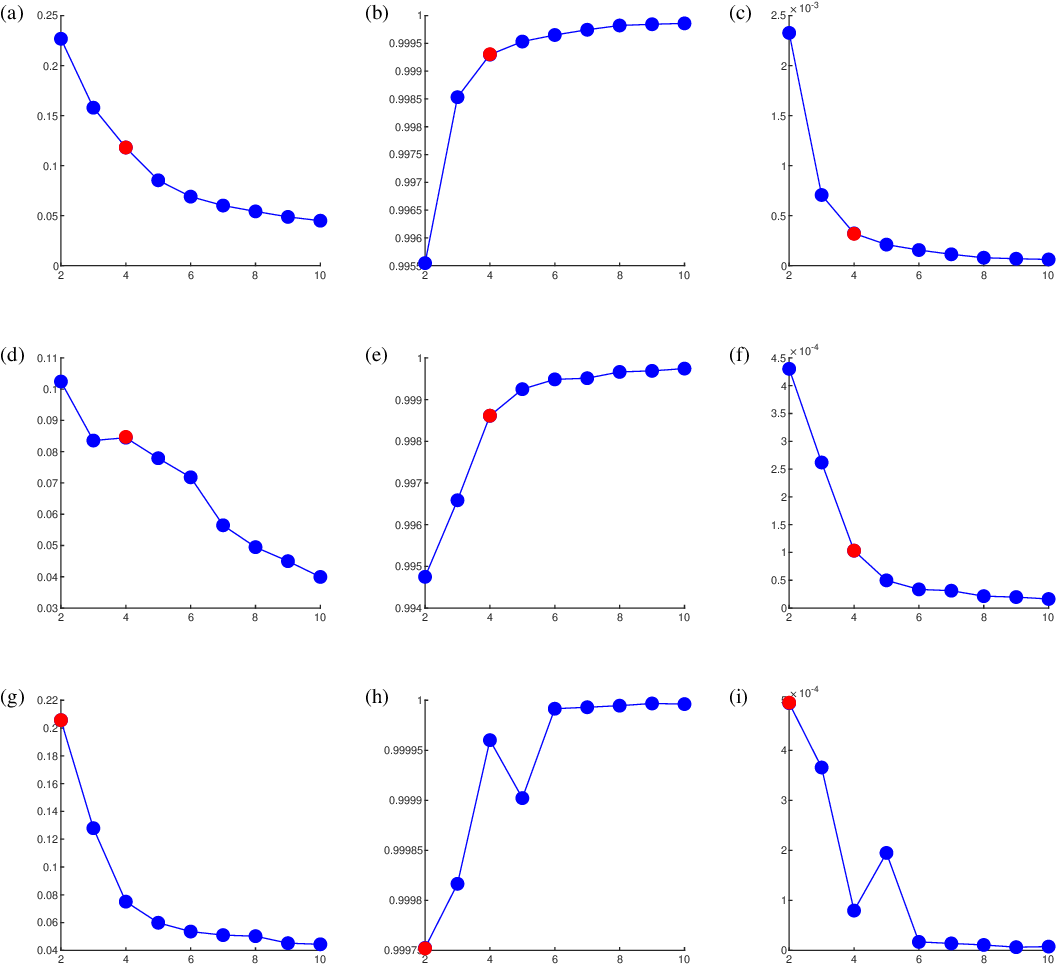}
\caption{The performance of decomposition of our chosen K and center frequencies for each ECG signal compared with random select K from 2 to 10 to proceed VMD.}
\label{vmd_ecg_performance}
\end{figure}

\begin{figure}
\centering
\includegraphics[scale=0.6]{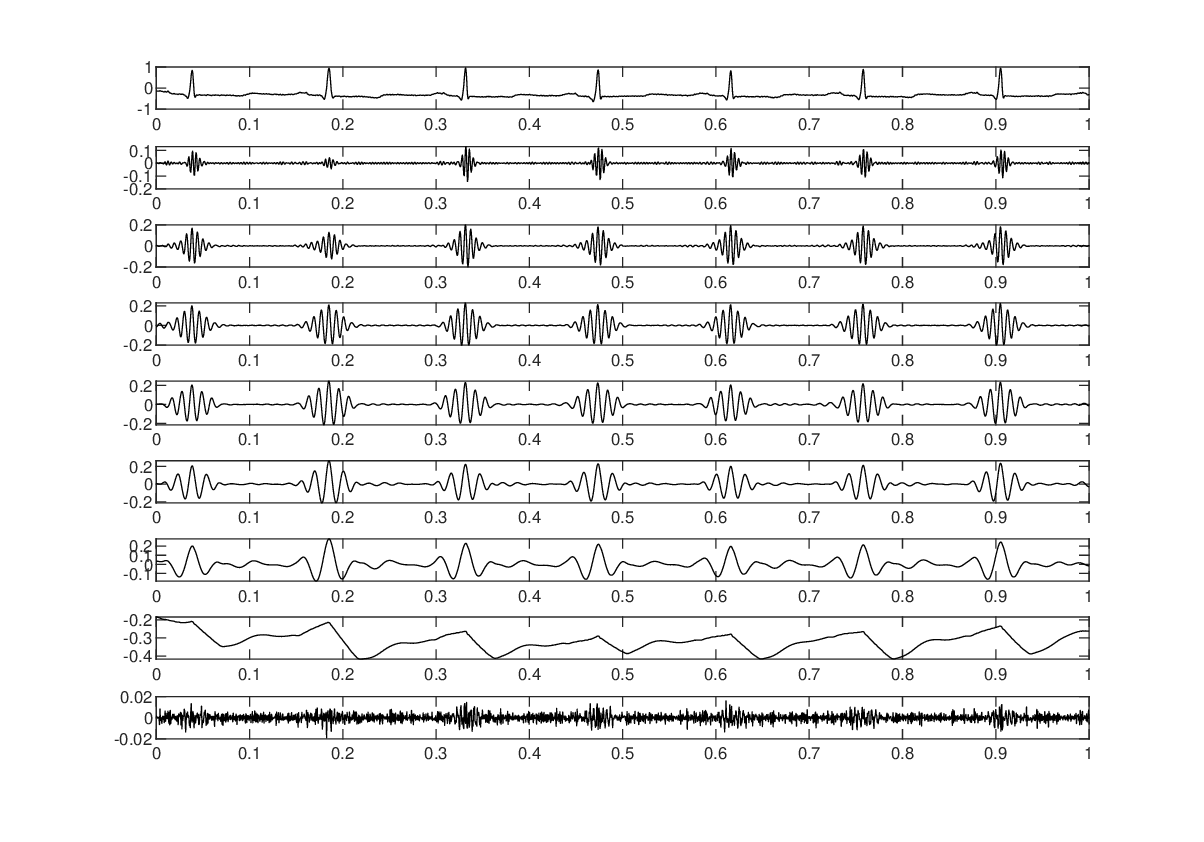}
\caption{The decomposition result on MLII Lead for Record 100 when K is set over our estimation (for example 7).}
\label{ecg1_mode_7}
\end{figure}

\begin{figure}
\centering
\includegraphics[scale=0.5]{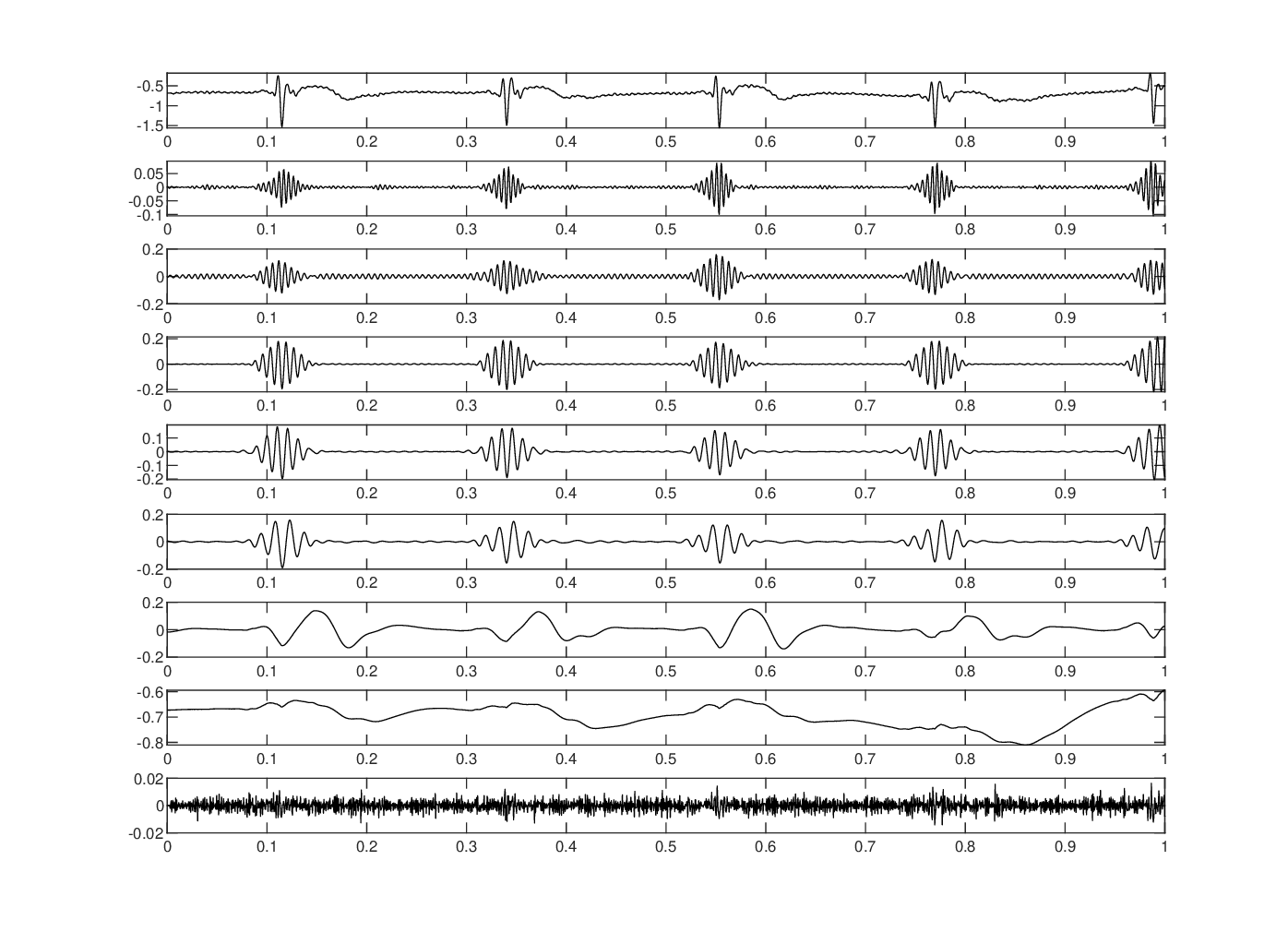}
\caption{The decomposition result on V4 Lead for Record 124 when K is set over our estimation (for example 7).}
\label{ecg2_mode_7}
\end{figure}

\begin{figure}
\centering
\includegraphics[scale=0.8]{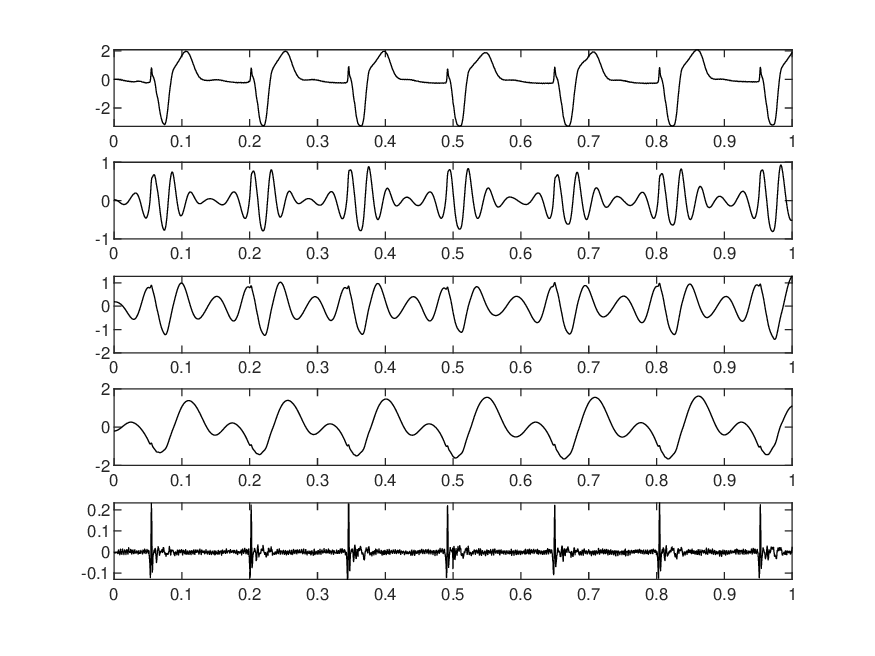}
\caption{The decomposition result on V2 Lead for Record 102 when K is set over our estimation (for example 3).}
\label{ecg3_mode_3}
\end{figure}

\clearpage
\section{Conclusion}
In this article, we address the long-standing open challenge of automatically determining the number of intrinsic mode functions (IMFs) and their corresponding center frequencies for Variational Mode Decomposition (VMD). We first establish a geometric insight into this modal initialization problem: the number of valid IMFs is inherently equivalent to the count of disconnected intervals of the spectrum above a critical cutting curve. Based on this insight, we formulate the problem within a discrete functional framework defined by the number of isolated intervals with respect to the cutting curve. To tackle the intractable non-continuous and non-differentiable nature of this discrete functional optimization, we introduce the cutting curve as a surrogate to transform the original problem into a continuous functional optimization problem. On this basis, we propose a globally convergent method for the automatic determination of IMF number and initial center frequencies, with rigorous mathematical proofs provided for the convexity, strong duality and global convergence of the proposed framework.

For practical engineering implementation, we further develop an efficient and high-precision numerical scheme for the proposed variational framework. Specifically, we convert the original optimal variational differential equation into a matrix form, and separate the unknown variables by introducing an extended Hadamard product with compatible broadcasting rules, which reformulates the problem into a standard system of linear equations. We also balance the magnitude order of elements in the coefficient matrix to reduce its condition number, which effectively avoids numerical ill-conditioning and enables efficient, high-precision numerical solving even for long-sequence signals.

We conduct comprehensive experiments on both synthetic narrowband signals and real-world electrocardiogram (ECG) signals. Experimental results demonstrate that our method can reliably extract physically meaningful IMFs from complex signals. It is validated that the proposed algorithm can select a reasonable number of modes to avoid over-decomposition while guaranteeing high signal reconstruction accuracy. Furthermore, the decomposed modal components derived from our method retain meaningful and discriminative features of the target waveform, which is critical for subsequent signal analysis and interpretation tasks.

This work provides a theoretically rigorous and practically robust initialization paradigm for adaptive VMD, with the potential to be extended to a wide range of signal processing scenarios.

\section{Acknowledgement}
This study is supported by the National Natural Science Foundation of China under Grant No. 61773290 and the Fundamental Research Funds for the Central Universities (22120230311) and Tongji University Medicine-X Interdisciplinary Research Initiative (Grant No. 2025-0554-YB-11).

\clearpage
\setcounter{section}{0}
\renewcommand{\thesection}{\arabic{section}}
\numberwithin{equation}{section}
\setcounter{equation}{0}
\renewcommand{\theequation}{S\arabic{section}.\arabic{equation}}
\setcounter{figure}{0}
\renewcommand{\thefigure}{S\arabic{section}.\arabic{figure}}
\setcounter{table}{0}
\renewcommand{\thetable}{S\arabic{section}.\arabic{table}}
\clearpage

\clearpage
\appendix
\section{Finite Discrete Expression and Their Matrices}
\label{Matrices}
Here we show finite discrete expression and the corresponding the conversion matrices with respect to (\ref{final}),(\ref{con}).   Since the finite difference expressions for $\alpha(x)$ and $g(x)$ are identical in form, the conversion matrices $\boldsymbol G^{(n)}$ and $\boldsymbol A^{(n)}$ also share the exactly the same structure 

\begin{equation}
g^{(1)}(x)\approx\cfrac {-g(x+2h)+8g(x+h)-8g(x-h)+g(x-2h)}{12h}
\end{equation}
\begin{align}
g^{(2)}(x) \approx \cfrac {-g(x+2h) + 16g(x+h) - 30g(x) + 16g(x-h) - g(x-2h)}{12h^2}
\end{align}
\begin{equation}
g^{(3)}(x)\approx \cfrac {g(x+2h)-2g(x+h)+2g(x-h)-g(x-2h)}{2h^3}
\end{equation}
\begin{align}
g^{(4)}(x)\approx \cfrac{g(x+2h)-4g(x+h)+6g(x)-4g(x-h)+g(x-2h)}{h^4} 
\end{align}

\begin{equation}
\boldsymbol{A}^{(1)}=\boldsymbol{G}^{(1)}=\cfrac 1 {12} \begin{bmatrix}
1 & -8 & 0 & 8 & -1 & 0 & \cdots & \cdots & \cdots & 0 \\
0 & 1 & -8 & 0 & 8 & -1 & 0 & \cdots & \cdots & 0 \\
\vdots &\vdots &\vdots &\vdots &\vdots &\vdots &\vdots &\vdots &\vdots &\vdots \\
0 & \cdots & \cdots & 0 & 1 & -8 & 0 & 8 & -1 & 0 \\
0 & 0 & \cdots & \cdots & 0 & 1 & -8 & 0 & 8 & -1 
\end{bmatrix}
\end{equation}
\begin{equation}
\boldsymbol{A}^{(2)}=\boldsymbol{G}^{(2)}=\cfrac 1 {12} \begin{bmatrix}
-1 & 16 & -30 & 16 & -1 & 0 & \cdots & \cdots & \cdots & 0 \\
0 & -1 & 16 & -30 & 16 & -1 & 0 & \cdots & \cdots & 0 \\
\vdots &\vdots &\vdots &\vdots &\vdots &\vdots &\vdots &\vdots &\vdots &\vdots \\
0 & \cdots & \cdots & 0 & -1 & 16 & -30 & 16 & -1 & 0 \\
0 & 0 & \cdots & \cdots & 0 & -1 & 16 & -30 & 16 & -1
\end{bmatrix}
\end{equation}
\begin{equation}
\boldsymbol{A}^{(3)}=\boldsymbol{G}^{(3)}=\cfrac 1 2 \begin{bmatrix}
-1 & 2 & 0 & -2 & 1 & 0 & \cdots & \cdots & \cdots & 0 \\
0 & -1 & 2 & 0 & -2 & 1 & 0 & \cdots & \cdots & 0 \\
\vdots &\vdots &\vdots &\vdots &\vdots &\vdots &\vdots &\vdots &\vdots &\vdots \\
0 & \cdots & \cdots & 0 & -1 & 2 & 0 & -2 & 1 & 0 \\
0 & 0 & \cdots & \cdots & 0 & -1 & 2 & 0 & -2 & 1 
\end{bmatrix}
\end{equation}
\begin{equation}
\boldsymbol{A}^{(4)}=\boldsymbol{G}^{(4)}=\begin{bmatrix}
1 & -4 & 6 & -4 & 1 & 0 & \cdots & \cdots & \cdots & 0 \\
0 & 1 & -4 & 6 & -4 & 1  & 0 & \cdots & \cdots & 0 \\
\vdots &\vdots &\vdots &\vdots &\vdots &\vdots &\vdots &\vdots &\vdots &\vdots \\
0 & \cdots & \cdots & 0 & 1 & -4 & 6 & -4 & 1  & 0 \\
0 & 0 & \cdots & \cdots & 0 & 1 & -4 & 6 & -4 & 1 
\end{bmatrix}
\end{equation} 
and after evaluating $(2\pmb A^{(2)}\pmb \alpha) \odot \pmb G^{(2)}+(4\pmb A^{(1)}\pmb \alpha)\odot \pmb G^{(3)} +2\pmb \alpha \odot \pmb G^{(4)}$, we introduce another 4 lines depicting the boundary value condition, noted as $\pmb B$:
\begin{equation}
\pmb B = \begin{bmatrix}
1 & 0 & 0 & \cdots & \cdots & 0 & 0 & 0 \\
0 & 0 & 0 & \cdots & \cdots & 0 & 0 & 1 \\
-1 & 1 & 0 & \cdots & \cdots & 0 & 0 & 0 \\
0 & 0 & 0 & \cdots & \cdots & 0 & -1 & 1
\end{bmatrix}
\end{equation}
and (\ref{final}) is then transformed into
\begin{equation}
\begin{bmatrix}
(2\pmb A^{(2)}\pmb \alpha) \odot \pmb G^{(2)}+(4\pmb A^{(1)}\pmb \alpha)\odot \pmb G^{(3)} +2\pmb \alpha \odot \pmb G^{(4)} \\
\pmb B
\end{bmatrix}\pmb g = \pmb c
\end{equation}
in which 
\begin{equation}
\pmb \alpha=\begin{bmatrix}
\alpha_0\\
\alpha_1\\
\vdots \\
\alpha_{N-1} \\
\alpha_N
\end{bmatrix},
\pmb g=\begin{bmatrix}
g_0\\
g_1\\
\vdots \\
g_{N-1} \\
g_N
\end{bmatrix} \text{and} \ \ \pmb c=\begin{bmatrix}
h^4c_2 \\
h^4c_3 \\
\vdots \\
h^4c_{N-3} \\
h^4c_{N-2} \\
c_0 \\
c_N \\
c_1-c_0 \\
c_N-c_{N-1}
\end{bmatrix}
\end{equation}
and $c_i=\beta_i-\lambda_i+\mu_i$.  In this way the condition number of the matrix to be inverted during evaluation is more gentle.

\section{Supplementary Material of the Proof}
\label{Proof}
\subsection{Proof of Convexity}
\label{Proof_of_convexity}
Let $\mathcal O(g(x))= \mathcal O_1(g(x))+\mathcal O_2(g(x))$,  where \\$\mathcal O_1(g(x))=\displaystyle \int_\Omega \alpha(x)g''^2(x)\text dx$ and $\mathcal O_2(g(x))=\displaystyle-\int_\Omega \beta(x)g(x) \text dx$.  For $\mathcal O_1(g(x))$, we have
\begin{align}
\lambda \mathcal O_1(g_1(x))+(1-\lambda)J_1(g_2(x))- \mathcal O_1[\lambda g_1(x)+(1-\lambda)g_2(x)]\\\notag=\lambda(1-\lambda)\int_\Omega \alpha(x)[g_1''(x)-g_2''(x)]^2\text dx \ge 0
\end{align}
and $\mathcal O_2(g(x))$ is obviously a linear funtional, so we have
\begin{equation}
\mathcal O[\lambda g_1(x) + (1-\lambda) g_2(x)] \le \lambda \mathcal O[g_1(x)] + (1-\lambda) \mathcal O[g_2(x)]
\label{convexity}
\end{equation}
so that our objective functional is convex.

\subsection{Concavity of Dual Functional $\mathcal G$}
\label{Proof_of_dual}
In this part we give a short proof to the concavity of $\mathcal G$.  According to the definition of $\mathcal G$, we have
\begin{equation}
\begin{split}
&\ \ \ \ \mathcal G(\theta\lambda_1(x)+(1-\theta)\lambda_2(x), \theta\mu_1(x)+(1-\theta)\mu_2(x))\\
&=\inf\limits_{g(x) \in \mathcal D} \mathcal L[g(x),\theta\lambda_1(x)+(1-\theta)\lambda_2(x),\theta\mu_1(x)+(1-\theta)\mu_2(x)] \\
&=\inf\limits_{g(x) \in \mathcal D} \mathcal L[\theta g(x) + (1-\theta)g(x),\theta\lambda_1(x)+(1-\theta)\lambda_2(x),\theta\mu_1(x)+(1-\theta)\mu_2(x)]\\
&=\inf\limits_{g(x) \in \mathcal D} \{\theta\mathcal L[g(x), \lambda_1(x), \mu_1(x)]+(1-\theta) \mathcal L[(g(x), \lambda_2(x), \mu_2(x)]\} \\
&\ge \theta \inf\limits_{g(x) \in \mathcal D} \mathcal L[g(x), \lambda_1(x), \mu_1(x)] + (1-\theta)\inf\limits_{g(x) \in \mathcal D} \mathcal L[g(x), \lambda_2(x), \mu_2(x)] \\
&=\theta\mathcal G(\lambda_1(x), \mu_1(x)) + (1-\theta)\mathcal G(\lambda_2(x), \mu_2(x))
\end{split}
\end{equation}
which concludes the concavity of $\mathcal G$.

\subsection{Proof of the Zero Duality Gap}

Now we begin to prove that the duality gap is zero.  Let 
\begin{equation}
\mathcal{O}(g(x))=\displaystyle \int_\Omega \alpha(x)g''^2(x)\text dx - \int_\Omega \beta(x)g(x) \text dx
\end{equation}
 be the objective functional, and construct Lagrangian functional
\begin{align}
\mathcal{L}(g(x), \lambda(x), \mu(x))=\displaystyle \int_\Omega \alpha(x)g''^2(x)\text dx &- \int_\Omega \beta(x)g(x) \text dx \notag \\&+ \int_\Omega \lambda(x)[g(x)-f(x)] \text dx -\int_\Omega \mu(x)g(x)\text dx
\end{align}
define dual functional as
\begin{equation}
\mathcal{G}(\lambda(x), \mu(x)) = \inf\limits_{g(x)\in\mathcal{D}}\mathcal{L}(g(x),\lambda(x),\mu(x))
\label{define}
\end{equation}
then on one side, we must have
\begin{align}
\mathcal{G}(\lambda^*(x),\mu^*(x)) =\inf\limits_{g(x)}\mathcal L(g(x), \lambda^*(x), \mu^*(x)) \le\inf\limits_{g(x)} \mathcal{O}(g(x))=\mathcal{O}(g^*(x))
\label{le_equation}
\end{align}
where $*$ stands for the optimum.  On the other hand, one can construct general sets
\begin{align}
\mathcal{S}=\{(p(x), q(x), o)|g(x)-f(x) \le p(x),-g(x) \le q(x), \mathcal O(g(x)) \le o\}
\label{pqo}
\end{align}
which forms an epi-graph to (\ref{std}).
thus
\begin{equation}
\forall \xi(x)=\lambda(p_1(x), q_1(x), o_1) + (1-\lambda)(p_2(x), q_2(x), o_2) 
\end{equation}
one can always find
\begin{equation}
\eta(x)=\lambda g_1(x)+(1-\lambda)g_2(x)
\end{equation}
that makes 
\begin{equation}
  \begin{cases}
  \eta(x)-f(x) \le \lambda p_1(x) + (1-\lambda) p_2(x) \\
  -\eta(x) \le \lambda q_1(x) + (1-\lambda) q_2(x) \\
  \mathcal{O}(\eta(x))\le\lambda\mathcal{O}(g_1(x))+(1-\lambda)\mathcal{O}(g_2(x))\le \lambda o_1 + (1-\lambda) o_2 \\
  \end{cases}
\end{equation}
hold and thus $\mathcal{S}$ is a convex set.  It is obvious that $(0, 0,\mathcal{O}(g^*(x))) \in \partial S$, so according to the Support Theorem of Convex Sets there must be a general hyperplane that on one side pass through $(0, 0,\mathcal{O}(g^*(x))) \in \partial S$ and on the other side ensures any point in $\mathcal S$ locates above the hyperplane.  That is, there exists $\lambda(x), \mu(x)$ that makes
\begin{equation}
\int_\Omega \lambda(x)p(x) \text dx +\int_\Omega \mu(x)q(x)\text dx \ge \mathcal{O}(g^*(x)) - o
\label{assert}
\end{equation} hold for all $(p(x),q(x),o)$ in $\mathcal S$.  Note that for $p(x), q(x) \ge 0$, there exist points in $\mathcal S$ that makes $o \le \mathcal O(g^*(x))$, so there exists points that makes the right side of (\ref{assert}) non-negative, this forces $\lambda(x) \ge 0$ and $\mu(x) \ge 0$, which indicates we can find a solution in dual space making the dual-gap vanish.

Next we prove the convexity of our feasible domain.  The feasible set is $0 \le g(x) \le f(x)$, so that $\forall g_1(x), g_2(x) \in [0,f(x)]$, we have $\lambda g_1(x)+(1-\lambda)g_2(x) \in [0,f(x)], \forall \lambda \in [0,1]$, which means the feasible set is also convex.

\begin{figure}[H]
  \centering
  \subfloat[]{
    \includegraphics[width=0.3\columnwidth]{converge_test0_1000.eps}
  }
  \subfloat[]{
    \includegraphics[width=0.3\columnwidth]{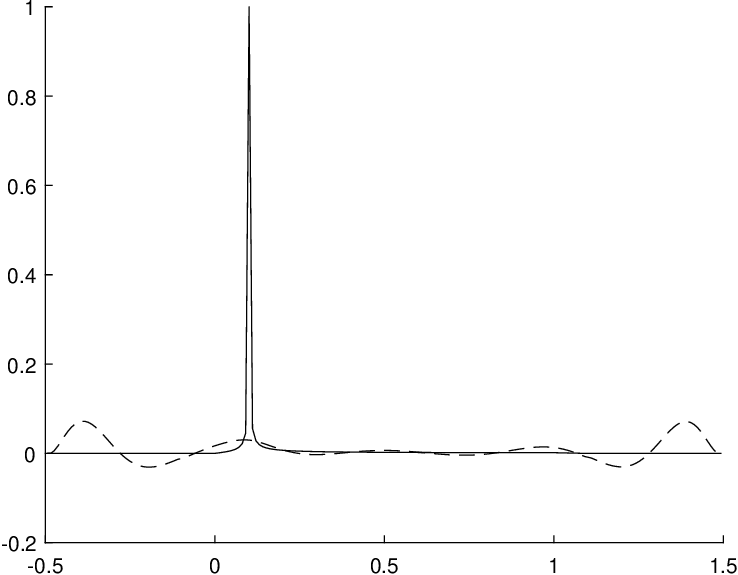}
  }
  \subfloat[]{
  	\includegraphics[width=0.3\columnwidth]{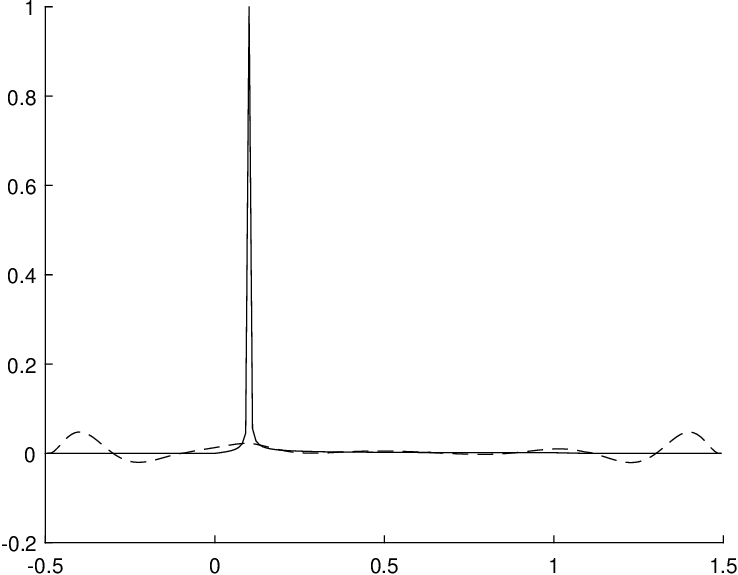}
  }
  \vspace{3pt}
  \subfloat[]{
  	\includegraphics[width=0.3\columnwidth]{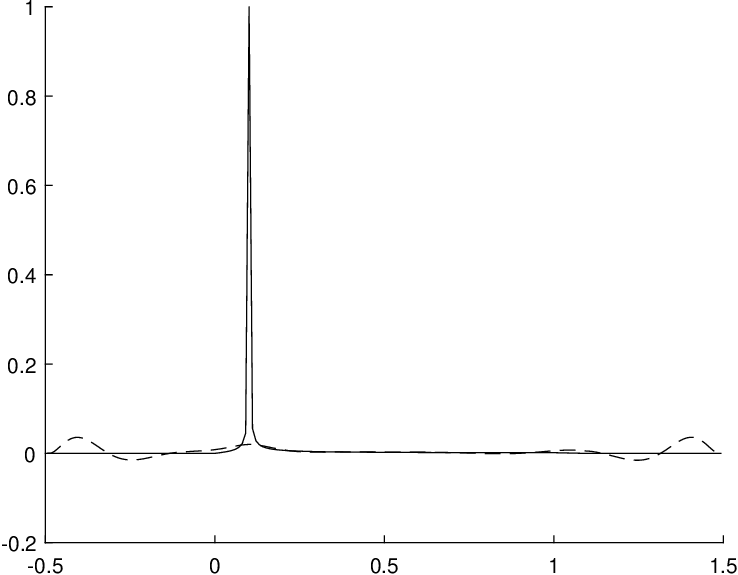}
  }
  \subfloat[]{
   \includegraphics[width=0.3\columnwidth]{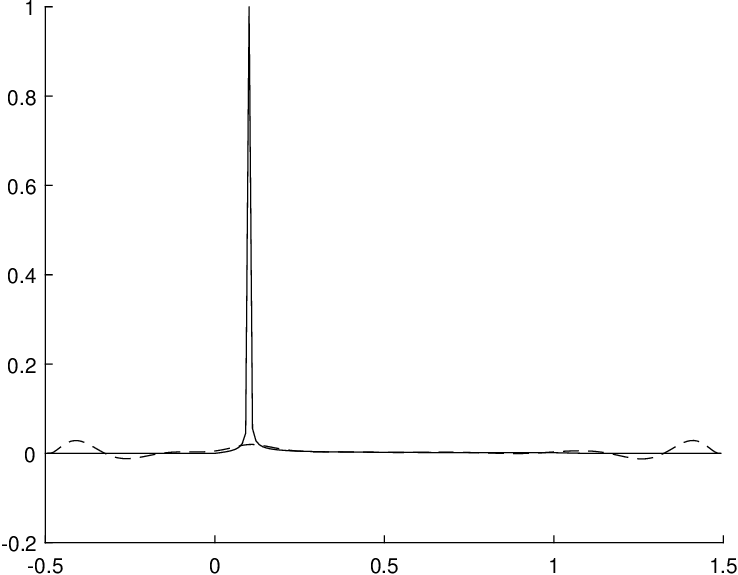}
  }
  \caption{The iteration for finding the cutting curve of FFT with respect to time series $y(t)=100\sin 20\pi t$.  The 1000th(a), 2000th(b), 3000th(c), 4000th(d), 5000th(e) iteration are respectively shown.}
  \label{diag4}
\end{figure}

\section{Experiment Detail}
In this section we begin to provide some details of our experiments as complementary.
\label{Experiment}

\begin{figure}[H]
  \centering
  \subfloat[]{
    \includegraphics[trim=0 0 0 3, clip,width=0.4\columnwidth]{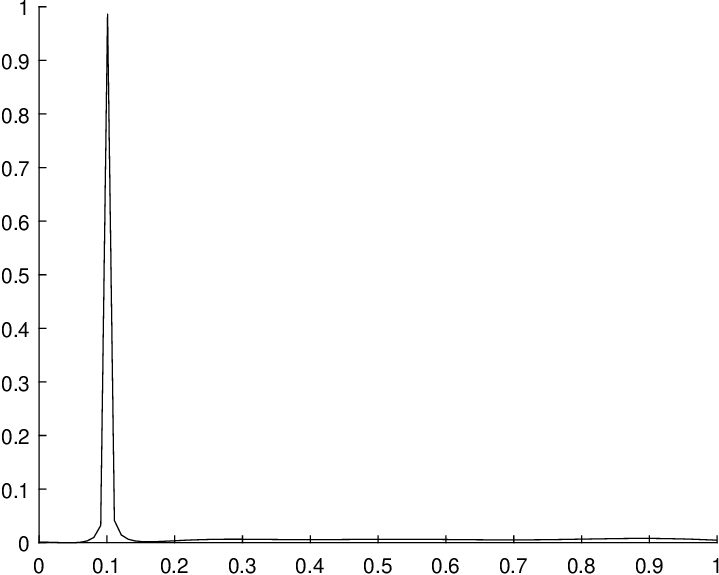}
  }
  \subfloat[]{
    \includegraphics[trim=0 0 0 3, clip,width=0.4\columnwidth]{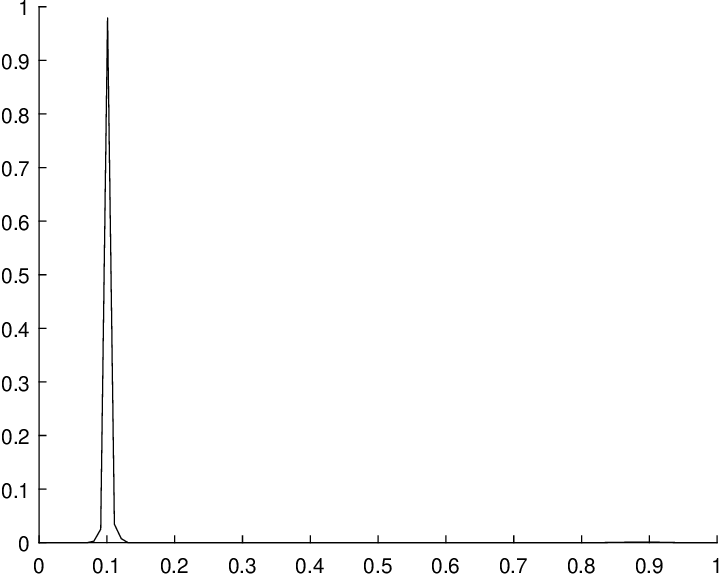}
  }
  \caption{The gap-elimination course for the found spectrum supporting line with series $y(t)=100\sin 20\pi t$.  (a) is the result of source spectrum subtracting found cutting curve, and one can see that there is some additional noise-gap in the bottom of line subtracted function. (b) is the result after our KDE process to eliminate the gap at the bottom .}
  \label{diag5}
\end{figure}
\vspace*{-25pt}
\begin{figure}[H]
\centering
	\includegraphics[width=0.7\columnwidth]
	{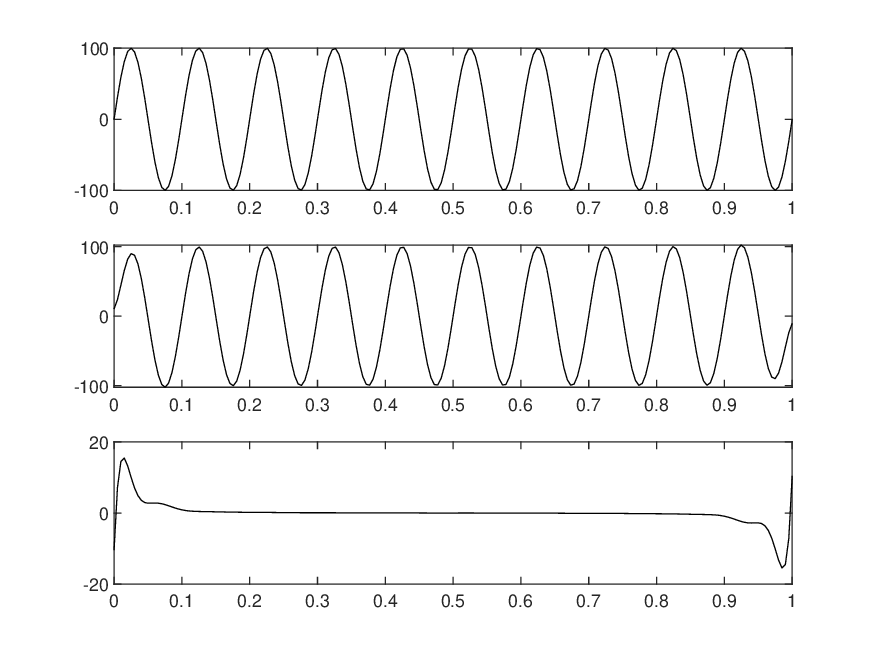}
\caption{The decomposition result of $y(t)=100\sin 20\pi t$ from the subsequent VMD, using the found number of modes and center frequency as initialization.  The subfigure from top to bottom are respective the source signal, the decomposed mode(only one) and the residual.}
\label{diag6}
\end{figure}

\begin{figure}[H]
  \centering
  \subfloat[]{
    \includegraphics[width=0.3\columnwidth]{converge_test1_1000.eps}
  }
  \subfloat[]{
    \includegraphics[width=0.3\columnwidth]{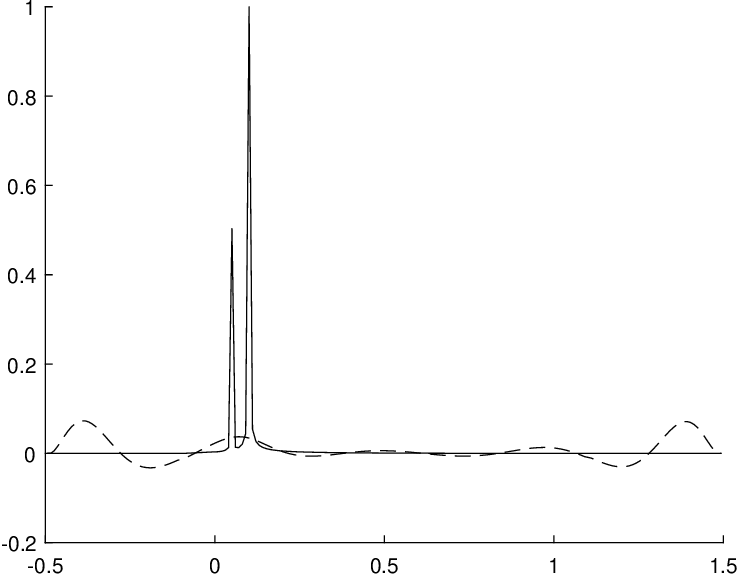}
  }
  \subfloat[]{
  	\includegraphics[width=0.3\columnwidth]{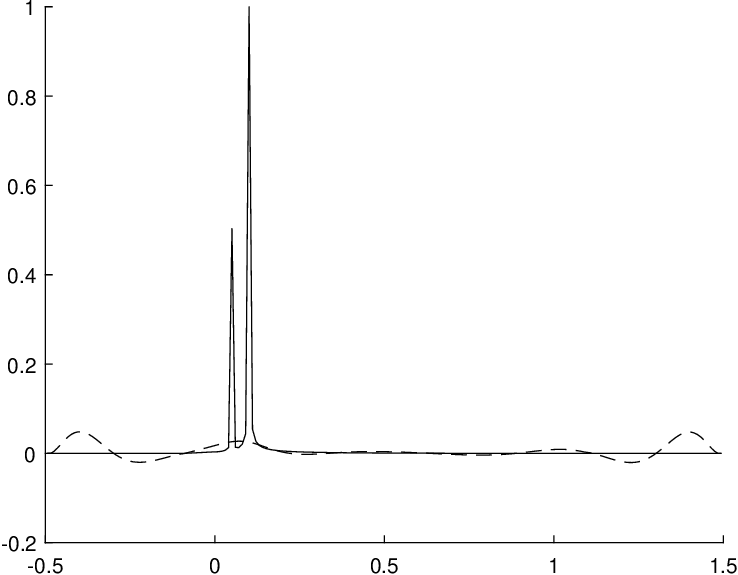}
  }
  \vspace{3pt}
  \subfloat[]{
  	\includegraphics[width=0.3\columnwidth]{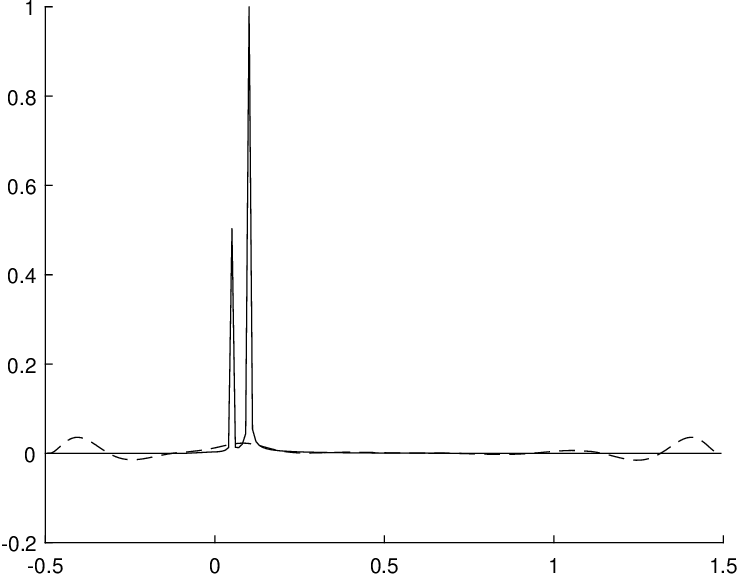}
  }
  \subfloat[]{
  	\includegraphics[width=0.3\columnwidth]{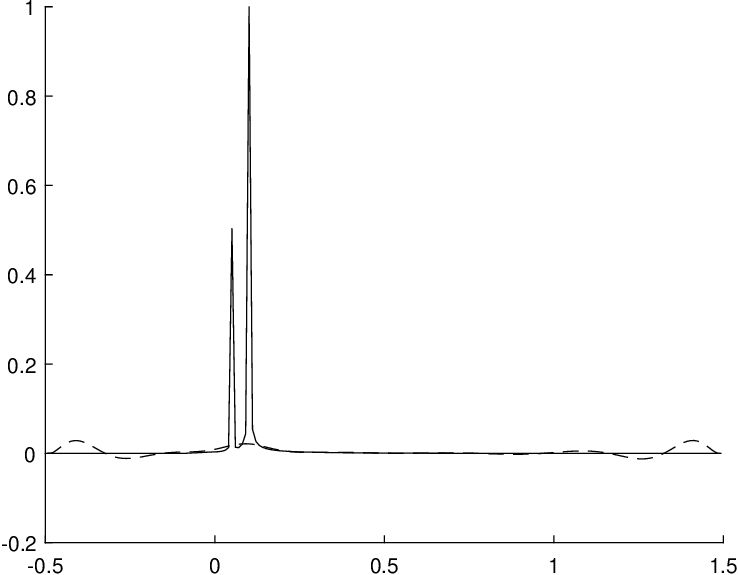}
  }
  \caption{The iteration for finding the cutting curve of FFT with respect to time series  $y(t)=10\cos(10\pi t)+20\sin(20\pi t)$.  The 1000th(a), 2000th(b), 3000th(c), 4000th(d), 5000th(e) iterations are respectively shown.}
  \label{diag7}
\end{figure}
\vspace*{-25pt}
\begin{figure}[H]
  \centering
  \subfloat[]{
    \includegraphics[width=0.45\columnwidth]{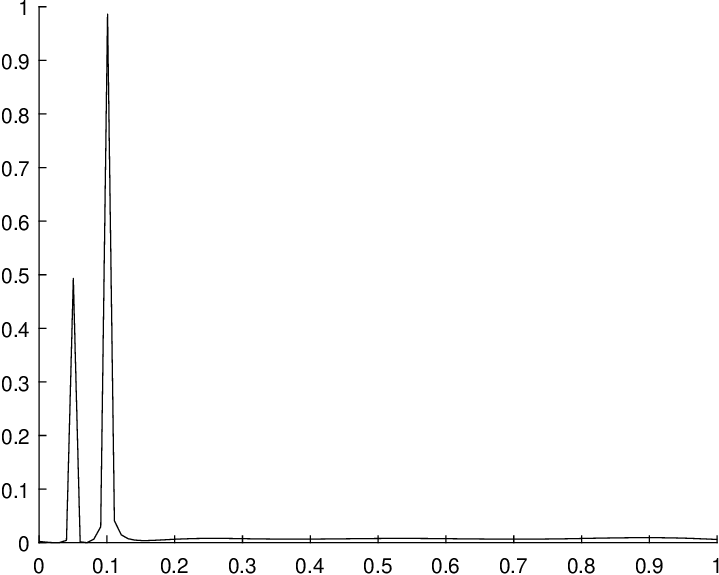}
  }
  \hfill
  \subfloat[]{
    \includegraphics[width=0.45\columnwidth]{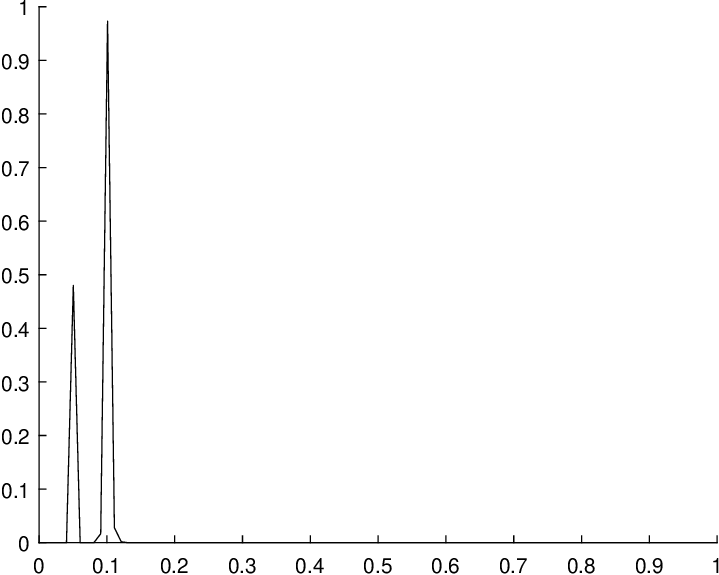}
  }

  \caption{The gap-elimination course for the found spectrum supporting line with series $y(t)=10\cos 10\pi t+20\sin 20\pi t$.  (a) is the result of source spectrum subtracting found cutting curve, and one can see that there is some additional noise-gap in the bottom of line subtracted function. (b) is the result after our KDE process to eliminate the gap at the bottom .}
  \label{diag8}
\end{figure}

\begin{figure}[H]
\centering
	\includegraphics[width=0.9\columnwidth]
	{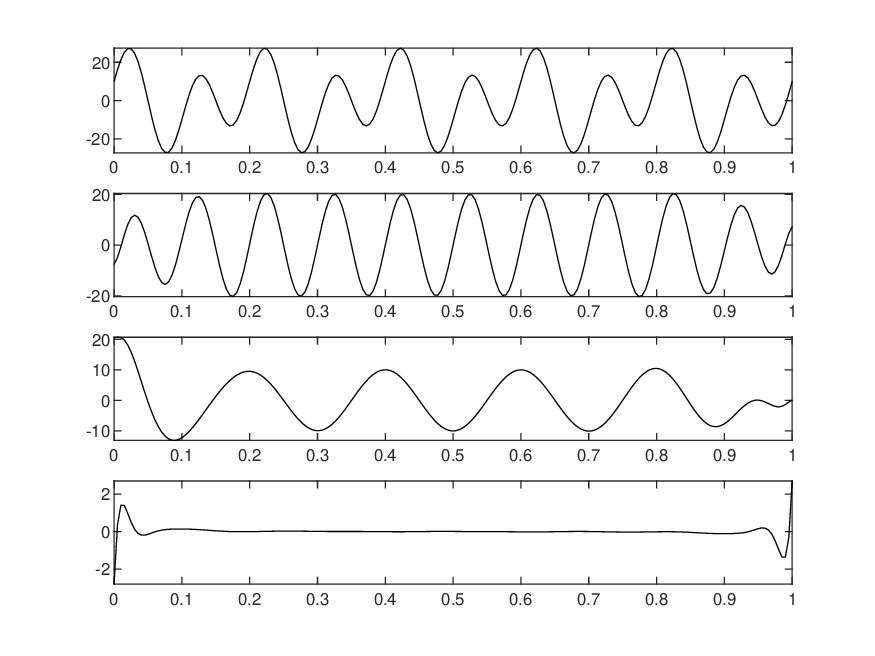}
\caption{The decomposition result of $y(t)=10\cos 10\pi t + 20\sin 20\pi t$ from the subsequent VMD, using the found number of modes and center frequency as initialization.  The sub-figure from top to bottom are respectively the source signal, the decomposed modes(there are two) and the residual.}
\label{diag9}
\end{figure}

\begin{figure}[H]
  \centering
  \subfloat[]{
    \includegraphics[width=0.3\columnwidth]{converge_test2_1000.eps}
  }
  \subfloat[]{
    \includegraphics[width=0.3\columnwidth]{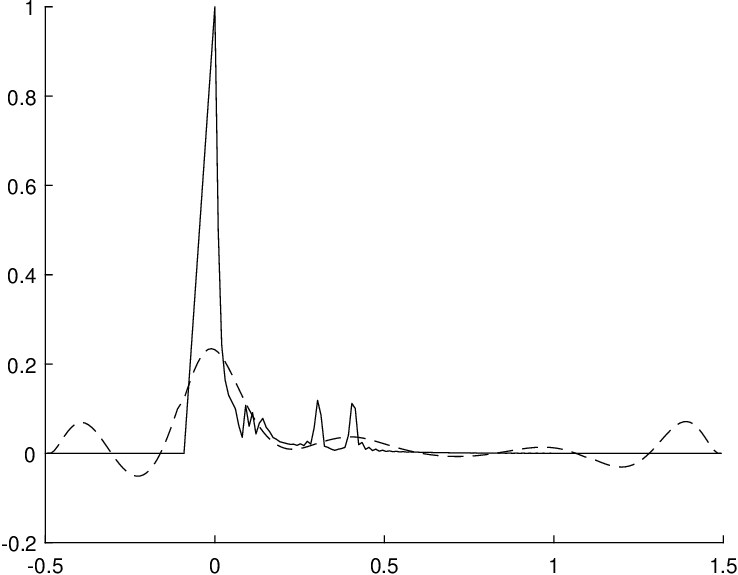}
  }
  \subfloat[]{
    \includegraphics[width=0.3\columnwidth]{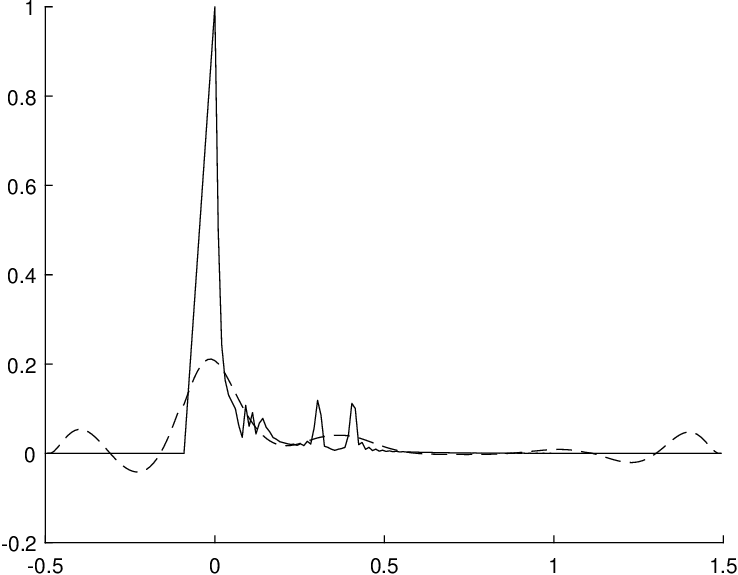}
  }
  \caption{The iteration for finding the cutting curve of FFT with respect to time series \eqref{seg_func}.  The 1000th(a), 2000th(b), 3000th(c) iterations are respectively shown.  Note that although the spectrum of second mode, $\cos 10\pi t$, is affected by that of the first mode, $6t^2$, the computed cutting curve precisely captured the trend of the lower bound so that all the modes can be filtered out.}
  \label{diag10}
\end{figure}
\begin{figure}[H]
  \centering
  \subfloat[]{
    \includegraphics[width=0.45\columnwidth]{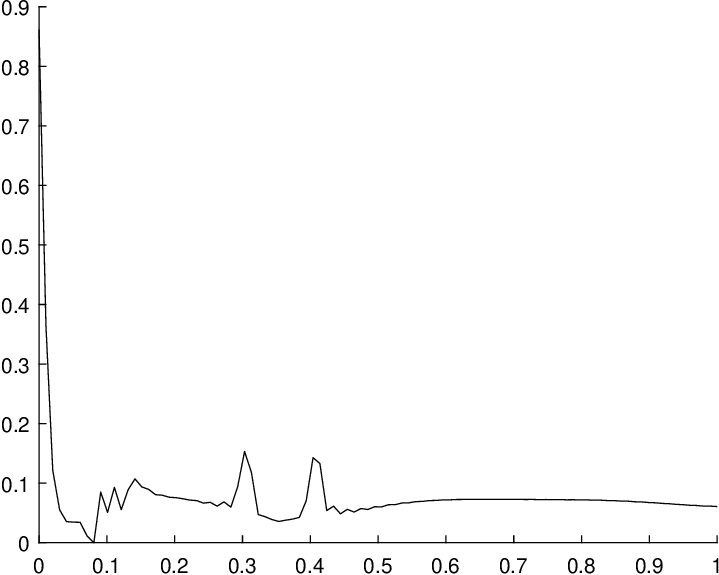}
  }
  \subfloat[]{
    \includegraphics[width=0.45\columnwidth]{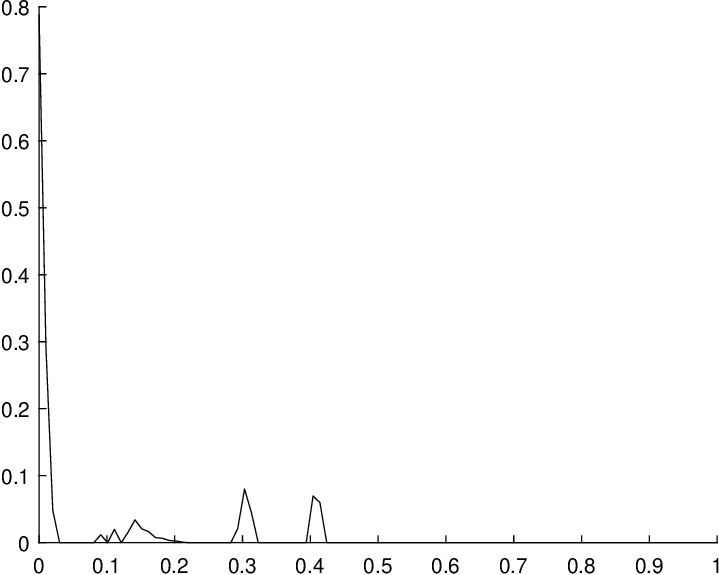}
  }

  \caption{The gap-elimination course for the found spectrum supporting line with series  \eqref{seg_func}.  (a) is the result of source spectrum subtracting found cutting curve, and one can see that there is some additional noise-gap in the bottom of line subtracted function. (b) is the result after our KDE process to eliminate the gap at the bottom.}
  \label{diag11}
\end{figure}
\vspace*{-50pt}

\begin{figure}[H]
\centering
	\includegraphics[width=\columnwidth]
	{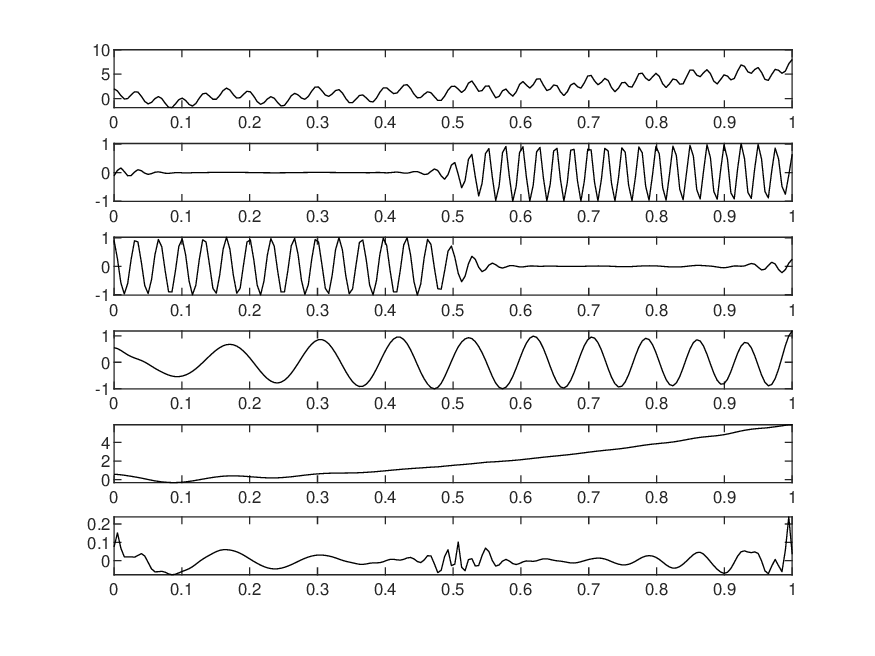}
\caption{The decomposition result of \eqref{seg_func} from the subsequent VMD, using the found number of modes and center frequency as initialization.  The sub-figure from top to bottom are respectively the source signal, the decomposed modes(there are four) and the residual.}
\label{diag12}
\end{figure}

\begin{figure}[H]
  \centering
  \subfloat[]{
    \includegraphics[width=0.3\columnwidth]{converge_test4_1000.eps}
  }
  \subfloat[]{
    \includegraphics[width=0.3\columnwidth]{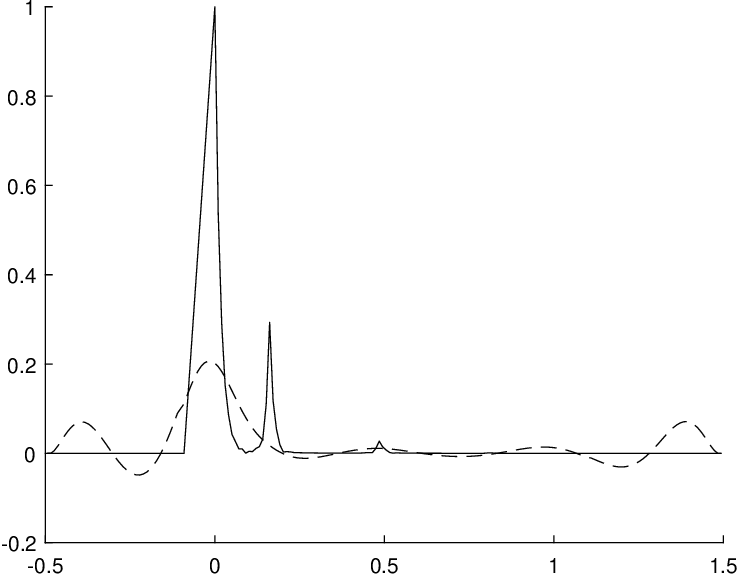}
  }
  \subfloat[]{
  	\includegraphics[width=0.3\columnwidth]{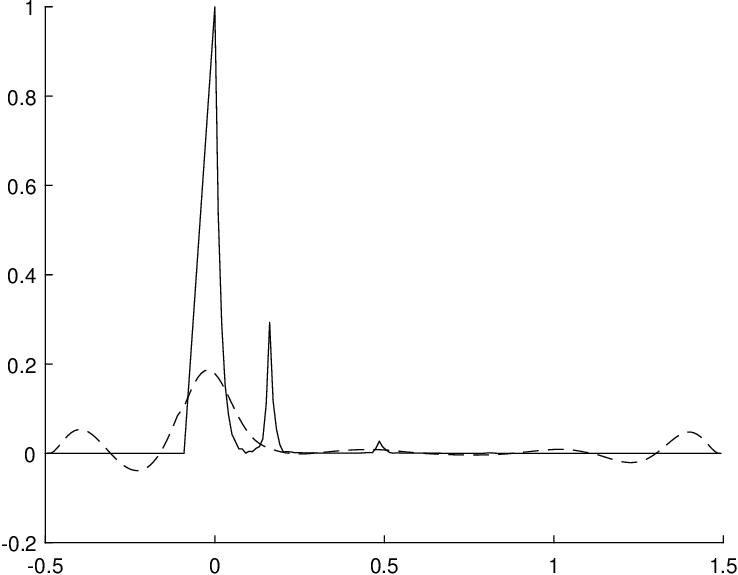}
  }
  \caption{The iterations for finding the cutting curve of FFT with respect to time series $\cfrac 1 {1.2+\cos 2\pi t} + \cfrac {\cos (32\pi t +0.2\cos 64\pi t)}{1.5+\sin(2\pi t)}$ are shown.  The 1000th(a), 2000th(b), 3000th(c) iterations are respectively shown.}
\end{figure}

\begin{figure}[H]
  \centering
  \subfloat[]{
    \includegraphics[width=0.45\columnwidth]{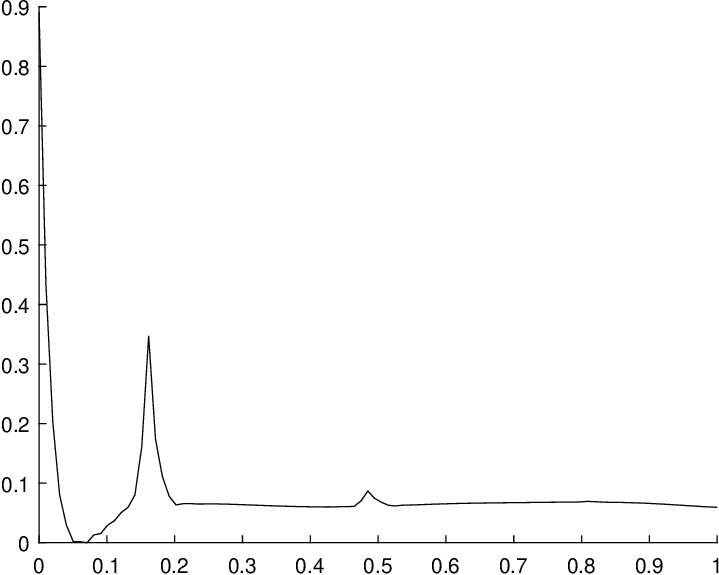}
  }
  \subfloat[]{
    \includegraphics[width=0.45\columnwidth]{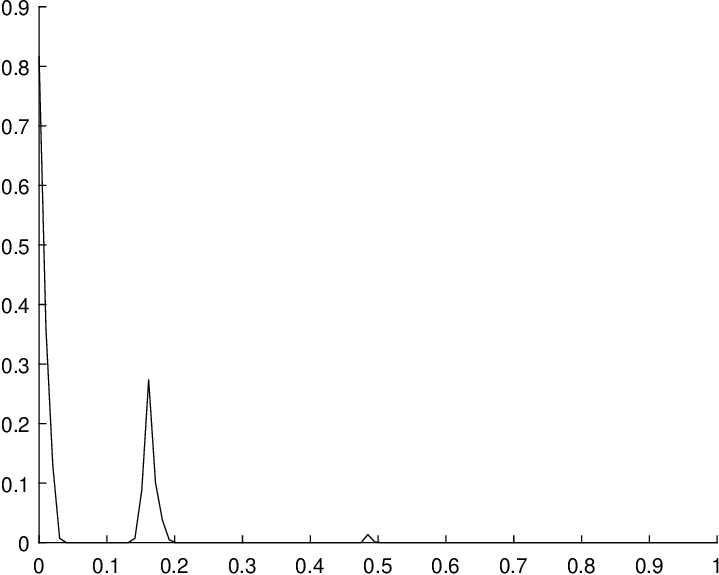}
  }

  \caption{The noise-elimination course for the found spectrum supporting line with series $\cfrac 1 {1.2+\cos 2\pi t} + \cfrac {\cos (32\pi t +0.2\cos 64\pi t)}{1.5+\sin(2\pi t)}$.  (a) is the result of source spectrum subtracting found cutting curve, and one can see that there is some additional noise-gap in the bottom of line subtracted function. (b) is the result after our KDE process to eliminate the gap at the bottom .}
  \label{diag14}
\end{figure}

\begin{figure}[H]
\centering
	\includegraphics[width=\columnwidth]
	{test4_decompose.eps}
\caption{The decomposition result of $\cfrac 1 {1.2+\cos 2\pi t} + \cfrac {\cos (32\pi t +0.2\cos 64\pi t)}{1.5+\sin(2\pi t)}$ from the subsequent VMD, using the found number of modes and center frequency as initialization.}
\label{diag15}
\end{figure}

\begin{figure}[H]
  \centering
  \subfloat[]{
    \includegraphics[width=0.3\columnwidth]{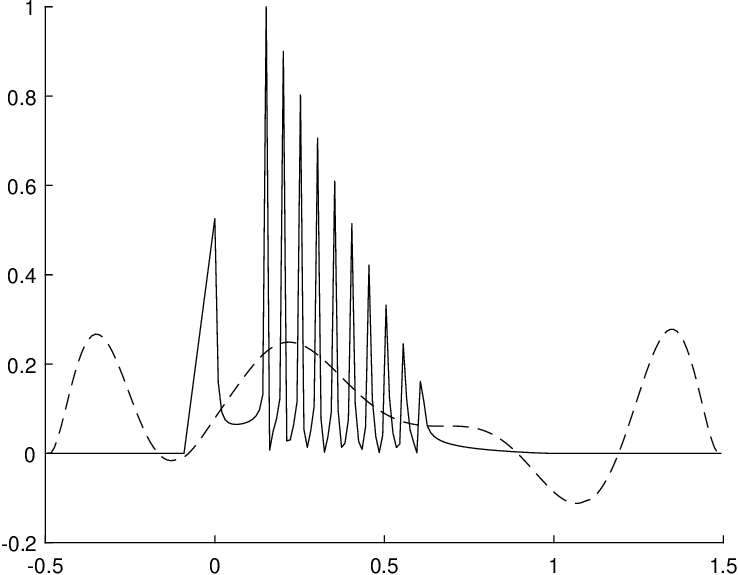}
  }
  \subfloat[]{
    \includegraphics[width=0.3\columnwidth]{converge_test3_1000.eps}
  }
  \subfloat[]{
  	\includegraphics[width=0.3\columnwidth]{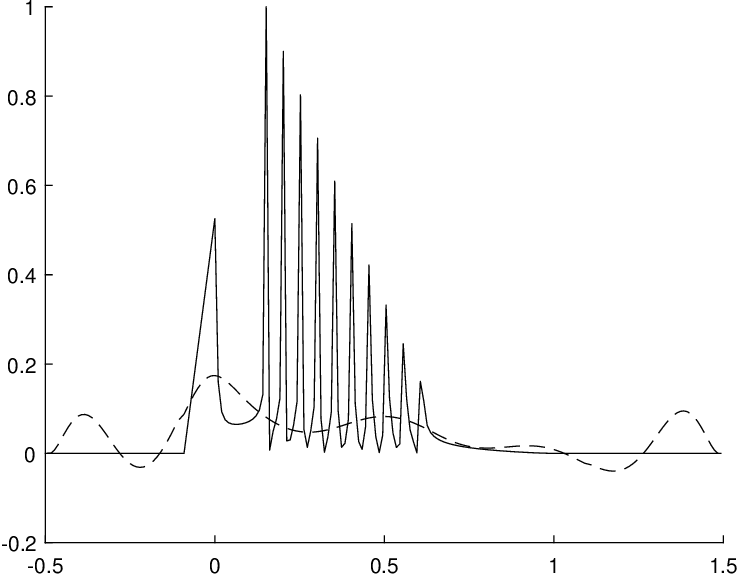}
  }
  \caption{The iteration for finding the cutting curve of function 
 $y(t)=6t + \displaystyle\sum\limits_{i=1}^{10} (13-i)\cos[(20+10i)\pi t]$, The 500th(a), 1000th(b), 1500th(c) iterations are respectively shown.}
  \label{diag16}

\end{figure}

\begin{figure}[H]
  \centering
  \subfloat[]{
    \includegraphics[width=0.4\columnwidth]{test_case3_before_denoise.eps}
  }
  \subfloat[]{
    \includegraphics[width=0.4\columnwidth]{test_case3_after_denoise.eps}
  }

  \caption{The gap-elimination course for the found spectrum supporting line with series $y(t)=6t + \displaystyle\sum\limits_{i=1}^{10} (13-i)\cos[(20+10i)\pi t]$.  (a) is the result of source spectrum subtracting found cutting curve, and one can see that there is some additional noise-gap in the bottom of line subtracted function. (b) is the result after our KDE process to eliminate the gap at the bottom .}
  \label{diag17}
\end{figure}

\begin{figure}[H]
\centering
\makebox[\columnwidth][c]{
  \includegraphics[trim=70 80 60 60, clip,width=\columnwidth, height=18cm]{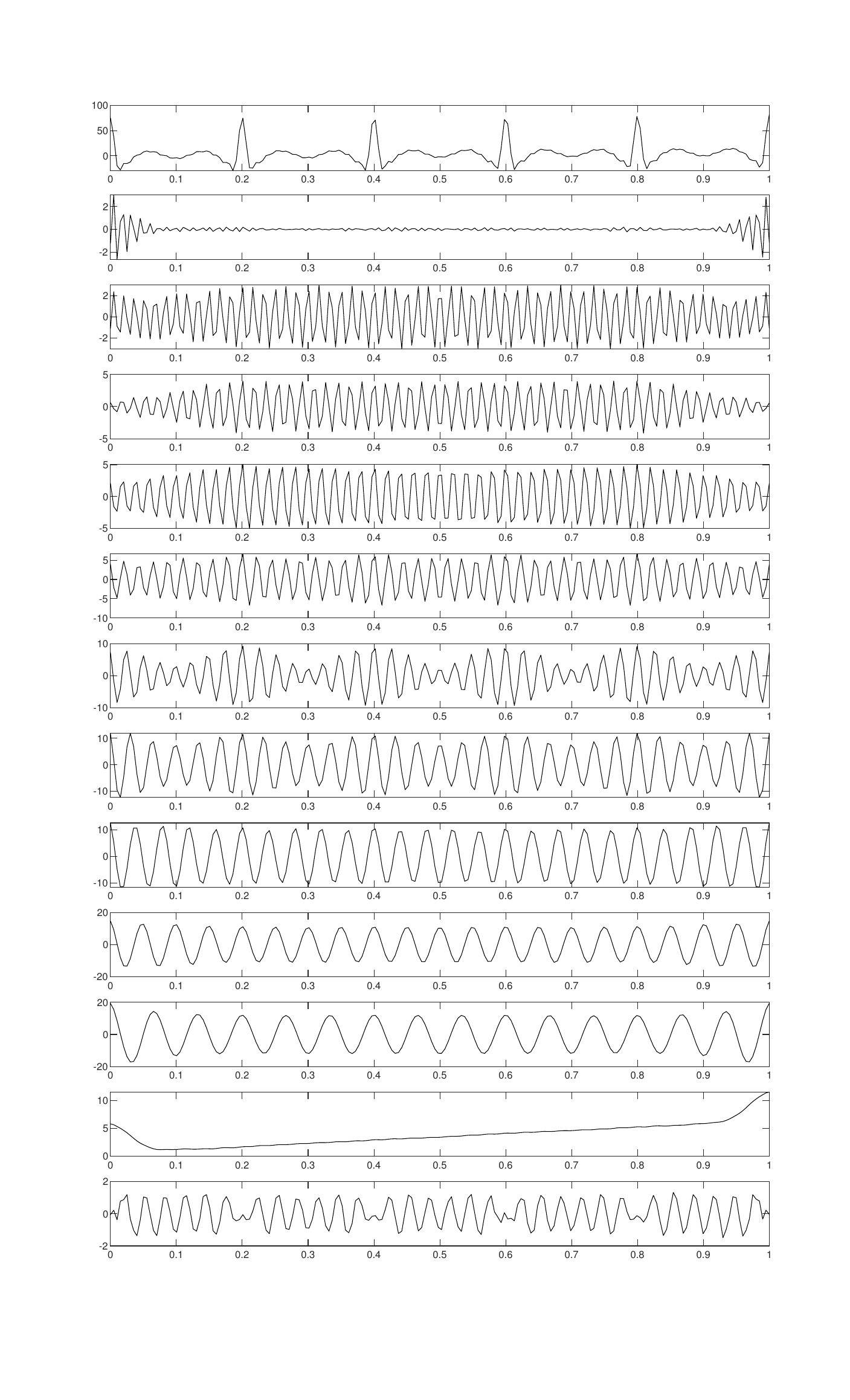}
}
\caption{The decomposition result of $y(t)=6t + \displaystyle\sum\limits_{i=1}^{10} (13-i)\cos[(20+10i)\pi t]$ from the subsequent VMD, using the found number of modes and center frequency as initialization.  The sub-figure from top to bottom are respectively the source signal, the decomposed modes(there are three) and the residual.}
	\label{diag18}
\end{figure}

\end{document}